\newtheorem{theorem}{Theorem}[section]
\newtheorem{proposition}[theorem]{Proposition}
\newtheorem{lemma}[theorem]{Lemma}
\newcommand{\resp}{resp.~}
\newcommand{\vs}{vs.~}
\newcommand{\WN}{{\cal WN}}
\newcommand{\lltens}{\mathrel{\otimes}}
\newcommand{\llone}{\mathbf{1}}
\newcommand{\llzero}{\mathbf{0}}
\newcommand{\llplus}{\mathrel{\oplus}}
\newcommand{\llpar}{\parr}
\newcommand{\llwith}{\with}
\newcommand{\llimp}{\multimap}
\newcommand{\lleq}{\multimapboth}
\newcommand{\freeze}[1]{(#1)^*}
\renewcommand{\H}{{\cal H}}
\newcommand{\G}{{\cal G}}
\newcommand{\x}{\vec{x}}
\newcommand{\y}{\vec{y}}
\renewcommand{\t}{\vec{t}}
\newcommand{\B}{\overline{B}}
\renewcommand{\|}{\; | \;}
\newcommand{\enc}[1]{[#1]}
\newcommand{\mmumall}{\mu\mbox{MALL}}
\newcommand{\mumall}{$\mmumall$}
\newcommand{\mmuLJ}{\mu\mbox{LJ}}
\newcommand{\mmuLJL}{\mu\mbox{LJL}}
\newcommand{\mmuLL}{\mu\mbox{LL}}
\newcommand{\muLL}{$\mmuLL$}
\newcommand{\muLJ}{$\mmuLJ$}
\newcommand{\muLJL}{$\mmuLJL$}
\newcommand{\eqdef}{\stackrel{def}{=}}
\newcommand{\unif}{\stackrel{.}{=}}
\newcommand{\ra}{\rightarrow}
\newcommand{\ie}{\emph{i.e.,}}
\newcommand{\eg}{\emph{e.g.,}}
\newcommand{\half}{\hbox{\em half}}
\newcommand{\set}[2]{\{\; #1 \;:\; #2 \;\}}
\newcommand{\Set}[2]{%
  \left\{\;\begin{array}{ccc} #1 &\; : \;& #2 \end{array}\;\right\}}
\newcommand{\All}[1]{\left\{\;\begin{array}{c} {#1} \end{array}\;\right\}}
\newcommand{\disp}[2]{\dispop[\downarrow]{\longrightarrow}{#1}{#2}}
\newcommand{\interp}[1]{[#1]}
\newcommand{\E}{{\cal E}}
\title{Least and Greatest Fixed Points in Linear Logic}
\author{David Baelde \\ University of Minnesota}
\begin{abstract}
The first-order theory of MALL (multiplicative, additive linear logic)
over only equalities is a well-structured but weak logic since it cannot
capture unbounded (infinite) behavior.  
Instead of accounting for
unbounded behavior via the addition of the exponentials ($\oc$ and
$\wn$), we add least and greatest fixed point operators. 
The resulting logic, which we call \mumall, satisfies two fundamental
proof theoretic properties: we establish weak normalization for it,
and we design a focused proof system that we prove complete
with respect to the initial system.
That second result provides a strong normal form
for cut-free proof structures that can be used, for example, to help
automate proof search.
We show how these foundations can be applied to intuitionistic logic.
\end{abstract}
\keywords{fixed points, linear logic, (co)induction, recursive definitions,
  cut elimination, normalization, focusing, proof search}
\begin{document}
\begin{bottomstuff} 
This work has been supported in part by
the National Science Foundation grant CCF-0917140
and by
INRIA through the ``Equipes Associ{\'e}es'' Slimmer.
Opinions, findings, and conclusions or recommendations
expressed in this paper are those of the author
and do not necessarily reflect the views of the National Science Foundation.
\end{bottomstuff}
\maketitle

\section{Introduction}



Inductive and coinductive definitions are ubiquitous in mathematics and
computer science, from arithmetic to operational semantics and concurrency
theory. These recursive definitions provide natural and very expressive
ways to write specifications.
The primary means of reasoning on inductive specifications is by induction,
which involves the generalization of the tentative theorem
in a way that makes it \emph{invariant}
under the considered inductive construction.
Although the invariant might sometimes be the goal itself,
it can be very different in general,
sometimes involving concepts that are absent from the theorem statement.
When proving theorems, most of the ingenuity actually goes into
discovering invariants.
Symmetrically,
proving coinductive specifications is done by coinduction,
involving \emph{coinvariants} which again can have little to do with
the initial specification.
A proof theoretical framework supporting (co)inductive definitions
can be used as a foundation for prototyping, model checking
and reasoning about many useful computational systems.
But that great expressive power comes with several difficulties such
as undecidability, and even \emph{non-analyticity}:
because of (co)induction rules and their arbitrary (co)invariants,
proofs do not enjoy any reasonable form of subformula property.
Nevertheless, we shall see that modern proof theory provides useful tools
for understanding least and greatest fixed points
and controlling the structure of proofs involving those concepts.


Arguably, the most important property of a logic is its consistency.
In sequent calculus, consistency is obtained from cut elimination,
which requires a symmetry between one connective and its dual,
or in other words between construction and elimination,
conclusion and hypothesis.
The notions of polarity and focusing are more recent in proof theory
but their growing importance puts them on par with cut elimination.
Focusing organizes proofs in stripes of \emph{asynchronous}
and \emph{synchronous} rules,
removing irrelevant interleavings and inducing a reading of the logic
based on macro-connectives aggregating stripes of usual connectives.
Focusing is useful to justify game theoretic
semantics~\cite{miller06mfps,delande08lics,delande10apal}
and has been central to the design of Ludics \cite{girard01mscs}.
From the viewpoint of proof search, 
focusing plays the essential role of reducing the space of the search
for a cut-free proof, by identifying situations when backtracking
is unnecessary.
\index{Logic programming}
In logic programming, it plays the more demanding role of correlating
the declarative meaning of a program with its operational meaning, given
by proof search.
Various computational systems have employed different focusing
theorems: much of Prolog's design and implementations can be justified
by the completeness of SLD resolution \cite{apt82jacm}; uniform proofs
(goal-directed proofs) in intuitionistic and intuitionistic linear
logics have been used to justify $\lambda$Prolog \cite{miller91apal}
and Lolli \cite{hodas94ic}; the classical linear logic programming
languages LO \cite{andreoli91ngc}, Forum \cite{miller96tcs} 
and the inverse method \cite{chaudhuri05csl} have
used directly Andreoli's general focusing result \cite{andreoli92jlc}
for linear logic.
In the presence of fixed points,
proof search becomes particularly problematic
since cut-free derivations are not analytic anymore.
Many systems use various heuristics to restrict the search space,
but these solutions lack a proof theoretical justification.
In that setting, focusing becomes especially interesting, as
it yields a restriction of the search space while preserving completeness.
Although it does not provide a way to decide the undecidable,
focusing brings an appreciable leap forward,
pushing further the limit where proof theory and completeness leave place to
heuristics.


In this paper, we propose a fundamental proof theoretic study of the notions
of least and greatest fixed point. By considering fixed points as primitive
notions rather than, for example, encodings in second-order logic,
we shall obtain strong results about the structure of their proofs.
We introduce the logic \mumall\ which extends the multiplicative and additive
fragments of linear logic (MALL) with least and greatest fixed points
and establish its two fundamental properties, \ie\ cut elimination
and focusing.
There are several reasons to consider linear logic.
First, its classical presentation allows us to internalize
the duality between least and greatest fixed point operators,
obtaining a simple, symmetric system.
Linear logic also allows the independent study of fixed points
and exponentials, two different approaches to infinity.
Adding fixed points to linear logic without exponentials
yields a system where they are the only source of infinity;
we shall see that it is already very expressive.
Finally, linear logic is simply a decomposition of intuitionistic
and classical logics~\cite{girard87tcs}.
Through this decomposition, the study of linear logic has brought a lot
of insight to the structure of those more common systems.
In that spirit, we provide in this paper some foundations that have
already been used in more applied settings.


The logic \mumall\ was initially designed as an elementary system
for studying the focusing of logics supporting 
(co)inductive definitions~\cite{momigliano03types};
leaving aside the simpler underlying propositional layer (MALL instead of LJ),
fixed points are actually more expressive than this notion of definition
since they can express mutually recursive definitions.
But \mumall\ is also relatively close to type theoretical systems involving
fixed points~\cite{mendler91apal,matthes98csl}.
The main difference is that our logic is a first-order one,
although the extension to second-order would be straightforward
and the two fundamental results would extend smoothly.
Inductive and coinductive definitions have also been approached
by means of cyclic proof
systems~\cite{santocanale01brics,brotherston05tableaux}.
These systems are conceptually appealing, but generally weaker in a
cut-free setting;
some of our earlier work~\cite{baelde09tableaux} addresses
this issue in more details.

There is a dense cloud of work related to \mumall.
Our logic and its focusing have been used to revisit the foundations
of the system Bedwyr~\cite{baelde07cade},
a proof search approach to model checking.
A related work \cite{baelde09tableaux} carried out in \mumall{}
establishes a completeness result for inclusions of finite automata
leading to an extension of cyclic proofs.
The treatment of fixed points in \mumall, as presented in this paper,
can be used in full linear logic (\muLL) and intuitionistic logic (\muLJ).
\muLL\ has been used to encode and reason about
various sequent calculi~\cite{miller.ep}.
\muLJ\ has been given a game semantics~\cite{clairambault09fossacs}, and
has been used in the interactive
theorem prover \texttt{Tac} where focusing provides
a foundation for automated (co)inductive theorem proving~\cite{baelde10ijcar},
and in~\cite{nigam09phd} to extend a logical approach
to tabling~\cite{miller07csla} where focusing is used to avoid
redundancies in proofs.
Finally, those logics have also been extended with (minimal) generic 
quantification~\cite{miller05tocl,baelde08lfmtp},
which fully enables reasoning in presence of variable binding,
\eg\ about operational semantics, logics or type systems.

The rest of this paper is organized as follows.
In Section~\ref{sec:mumall}, we introduce the logic,
provide a few examples and study its basic proof theory.
Section~\ref{sec:norm} establishes cut elimination for \mumall,
by adapting the candidates of reducibility argument
to obtain a proof of weak normalization.
Finally, we investigate the focusing of \mumall\ in Section~\ref{sec:focusing},
and present a simple application to intuitionistic logic.

\section{\mumall} \label{sec:mumall}

We assume some basic knowledge of simply-typed 
$\lambda$-calculus~\cite{barendregt97handbook}
which we leverage as a representation framework,
following Church's approach to syntax.
This allows us to
consider syntax at a high-level, modulo $\alpha\beta\eta$-conversion.
In this style,
we write $P x$ to denote a formula from which $x$ has been totally abstracted
out ($x$ does not occur free in $P$), so that $P t$ corresponds
to the substitution of $x$ by $t$,
and we write $\lambda x. P$ to denote a vacuous abstraction.
\emph{Formulas} are objects of type $o$,
and the syntactic variable $\gamma$ shall represent a term type, \ie\
any simple type that does not contain $o$.
A \emph{predicate} of arity $n$ is an object of type
$\gamma_1\ra\ldots\ra\gamma_n\ra o$,
and a \emph{predicate operator} (or simply \emph{operator})
of first-order arity $n$ and second-order arity $m$ is an object of type
$\tau_1\ra\ldots\ra\tau_m\ra\gamma_1\ra\ldots\ra\gamma_n\ra o$
where the $\tau_i$ are predicate types of arbitrary arity.
We shall see that the term language can in fact be chosen quite freely:
for example terms might be first-order, higher-order,
or even dependently typed, as long as equality and substitution are defined.

We shall denote terms by $s,t$; 
formulas by $P,Q$; operators by $A,B$;
term variables by $x,y$;
predicate variables by $p, q$;
and atoms (predicate constants) by $a,b$.
The syntax of \mumall\ formulas is as follows:
\begin{eqnarray*}
P &::=&
P \lltens P \| P \llplus P \| P \llpar P \| P \llwith P
\| \llone \| \llzero \| \bot \| \top \| a\t \| a^\perp\t \\
& \| & \exists_\gamma x.~ P x \| \forall_\gamma x.~ P x
  \| s =_\gamma t
  \| s \neq_\gamma t \\
& \| & \mu_{\gamma_1\ldots\gamma_n}(\lambda p \lambda \x.~ P p \x) \t
  \| \nu_{\gamma_1\ldots\gamma_n}(\lambda p \lambda \x.~ P p \x)\t
  \| p\t \| p^\perp\t
\end{eqnarray*}
The quantifiers have type $(\gamma\rightarrow o)\rightarrow o$ and the
equality and \emph{disequality} (\ie\ $\neq$)
have type $\gamma\rightarrow\gamma\rightarrow o$.
The connectives $\mu$ and $\nu$ have type
$(\tau \ra \tau) \ra \tau$ where $\tau$ is $\gamma_1\ra\cdots\ra\gamma_n\ra o$
for some arity $n\ge 0$.
We shall almost always elide the references to $\gamma$,
assuming that they can be determined from the context when it
is important to know their value.
Formulas with top-level connective $\mu$ or $\nu$
are called \emph{fixed point expressions} and can be arbitrarily \emph{nested}
(such as in $\nu (\lambda p.~ p \lltens
   \mu (\lambda q.~ \llone \llplus a \lltens q))$, written
$\nu p.~ p \lltens (\mu q.~ \llone \llplus a\lltens q)$ for short)
and \emph{interleaved}
(\eg\ $\mu p.~ \llone \llplus \mu q.~ \llone \llplus p \lltens q$).
Nested fixed points correspond to iterated (co)inductive definitions
while interleaved fixed points correspond to mutually (co)inductive
definitions, with the possibility of simultaneously defining
an inductive and a coinductive.

Note that negation is not part of the syntax of our formulas,
except for atoms and predicate variables.
This is usual in classical frameworks, where negation is instead
defined as an operation on formulas.

\begin{definition}[Negation ($P^\perp$,$\B$)]
\emph{Negation} is the involutive operation on formulas satisfying the 
following equations:
\[ \begin{array}{rclp{1cm}rcll}
  (P \llpar Q)^\bot &\equiv& P^\bot \lltens Q^\bot
  & ~ &
  (P \llwith Q)^\bot &\equiv& P^\bot \llplus Q^\bot
  & \\
  \bot^\bot &\equiv& \llone
  & ~ &
  \top^\bot &\equiv& \llzero
  & \\
  (s = t)^\bot &\equiv& s \neq t
  & ~ &
  (\forall x.~ P x)^\bot &\equiv& \exists x.~ (P x)^\bot
  & \\
  (\nu{}B\t)^\bot &\equiv& \mu\B\t
  & ~ &
  (a^\perp\t)^\perp &\equiv& a\t & (p^\perp\t)^\perp \equiv p\t
  \\
  \B &\equiv&
  \multicolumn{5}{l}{
    \lambda p_1 \ldots \lambda p_m
    \lambda x_1 \ldots \lambda x_n.~
       (B p_1^\perp\ldots p_m^\perp x_1 \ldots x_n)^\bot
  }
  & \mbox{for operators}
  \\
  P^\perp &\equiv&
  \multicolumn{5}{l}{
    \lambda x_1\ldots\lambda x_n.~ (P x_1\ldots x_n)^\perp
  }
  & \mbox{for predicates}
  \end{array}
\]
An operator $B$ is said to be \emph{monotonic} when
it does not contain any occurrence of a negated predicate variable.
We shall write
$P\llimp Q$ for $P^\bot \llpar Q$, and
$P\lleq Q$ for $(P\llimp Q)\llwith (Q\llimp P)$.
\end{definition}

We shall assume that \emph{all predicate operators are monotonic},
and do not have any free term variable.
By doing so, we effectively exclude negated predicate variables $p^\perp$
from the logical syntax; they are only useful as intermediate devices when
computing negations.

\begin{example}
We assume a type $n$ and two constants $0$ and $s$
of respective types $n$ and $n\ra n$.
The operator $(\lambda{}p\lambda{}x.~ x=0 \llplus \exists{}y.~ x=s~(s~y)
\lltens p~y)$ whose least fixed point describes even numbers is
monotonic, but
$(\lambda p\lambda x.~ x=0 \llplus \exists y.~ x=s~y \lltens (p~y
  \llimp\llzero))$ is non-monotonic because of the occurrence of $p^\perp y$
that remains once the definition of $\llimp$ has been expanded
and negations have been computed.
\end{example}

A \emph{signature}, denoted by $\Sigma$, is a list of distinct typed variables.
We write $\Sigma\vdash t:\gamma$ when $t$ is a well-formed term of type
$\gamma$ under the signature $\Sigma$; we shall not detail how
this standard judgment is derived.
A \emph{substitution} $\theta$ consists of a domain signature $\Sigma$,
an image signature $\Sigma'$, and a mapping from each $x:\gamma$ in $\Sigma$
to some term $t$ of type $\gamma$ under $\Sigma'$.
We shall denote the image signature $\Sigma'$ by $\Sigma\theta$.
Note that we do not require each variable from $\Sigma\theta$ to be
used in the image of $\Sigma$: for example, we do consider the substitution
from $\Sigma$ to $(\Sigma,x)$ mapping each variable in $\Sigma$ to its
counterpart in the extended signature.
If $\Sigma\vdash t:\gamma$, then $t\theta$ denotes the result of substituting
free variables in $t$ by their image in $\theta$,
and we have $\Sigma\theta\vdash t\theta:\gamma$.

Our \emph{sequents} have the form $\Sigma;\vdash\Gamma$ where the signature
$\Sigma$ denotes universally quantified terms\footnote{
  Term constants and atoms are viewed as being introduced,
  together with their types, in an external, toplevel signature that is
  never explicitly dealt with.
  Predicate variables are not found in either of those signatures;
  they cannot occur free in sequents.
},
and $\Gamma$ is a multiset of formulas, \ie\
expressions of type $o$ under $\Sigma$.
Here, we shall make an exception to the higher-order abstract syntax notational
convention: when we write $\Sigma;\vdash\Gamma$ using the metavariable
$\Sigma$ (\ie\ without detailing the contents of the signature)
we allow variables from $\Sigma$ to occur in $\Gamma$.
It is often important to distinguish different occurrences of a formula
in a proof, or track a particular formula throughout a proof;
such distinctions are required
for a meaningful computational interpretation of cut elimination,
and they also play an important role in our focusing mechanisms.
In order to achieve this, we shall use the notion of \emph{location}.
From now on, we shall consider a formula not only as the structure
that it denotes, namely an abstract syntax tree,
but also as an address where this structure is located.
Similarly, subformulas have their own locations,
yielding a \emph{tree of locations} and sublocations.
We say that two locations are \emph{disjoint} when they do not share
any sublocation.
Locations are independent of the term structure of formulas:
all instantiations of a formula have the same location,
which amounts to say that locations are attached to formulas
abstracted over all terms.
We shall not provide a formal definition of locations,
which would be rather heavy,
but a high-level description should give a sufficient understanding
of the concept.
A formal treatment of locations can be found in \cite{girard01mscs},
and locations can also be thought of as denoting nodes in proof nets
or variable names in proof terms.
Locations allow us to make a distinction between \emph{identical} formulas,
which have the same location,
and \emph{isomorphic} formulas which only denote the same structure.
When we talk of the \emph{occurrences} of a formula in a proof,
we refer to identical formulas occurring at different places in that 
derivation.
We shall assume that formulas appearing in a sequent have
\emph{pairwise disjoint locations}. In other words,
sequents are actually sets of formulas-with-location,
which does not exclude that a sequent can contain several isomorphic
formulas.

We present the inference rules for \mumall{} in Figure~\ref{fig:mumall}.
Rules which are not in the identity group are called \emph{logical rules},
and the only formula whose toplevel connective is required for the application
of a logical rule is said to be \emph{principal} in that rule application.
In the $\neq$ rule, $\theta$ is a substitution of domain $\Sigma$
ranging over universal variables,
$\Gamma\theta$ is the result of applying that substitution to
every term of every formula of $\Gamma$.
In the $\nu$ rule, which provides both induction and coinduction,
$S$ is called the (co)invariant,
and is a closed formula of the same type as $\nu{}B$,
of the form $\gamma_1\ra\cdots\ra\gamma_n\ra o$.
We shall adopt a proof search reading of derivations: for instance,
we call the $\mu$ rule ``unfolding'' rather than ``folding'',
and we view the rule whose conclusion is the conclusion of a
derivation as the first rule of that derivation.

Inference rules should be read from the locative viewpoint,
which we illustrate with a couple of key examples.
In the $\forall$ and $\exists$ rules, the premise and conclusion sequents
only differ in one location: the principal location is replaced by its 
only sublocation.
The premise sequents of the $\neq$ rule are locatively identical
to the conclusion sequent,
except for the location of the principal $\neq$ formula that has been removed.
Similarly in the $\llwith$ rule,
the formulas of the context $\Gamma$ are copied in the two premises,
each of them occurring (identically) three times in the rule.
In the axiom rule, the two formulas are locatively distinct but
have dual structures.
%
In the $\nu$ rule, the formulas from the co-invariance proofs have \emph{new}
locations, as well as the co-invariant in the main premise.
This means that these locations can be changed at will,
much like a renaming of bound variables.
A greatest fixed point has infinitely many sublocations,
regarding the coinvariants as its subformulas.
In the $\mu$ rule, the formula $B(\mu B)\t$ is the only sublocation
of the least fixed point.
Distinct occurrences of $\mu B$ in $B (\mu B)$ (\resp $\nu B$ in $B(\nu B)$)
have distinct locations, so that the graph of locations remains a tree.
It is easy to check that inference rules preserve the fact that sequents
are made of disjoint locations.

\begin{figure}[t]
\begin{center}
\[
  \begin{array}{c}
  \mbox{Identity group}
  \\[6pt]
  \infer[cut]{\Sigma; \vdash \Gamma, \Delta}{
     \Sigma; \vdash \Gamma, P^\perp &
     \Sigma; \vdash P, \Delta}
  \qquad
  \infer[init]{\Sigma; \vdash P, P^\bot}{}
  \end{array}
\]
\[ \addtolength{\arraycolsep}{18pt}
 \begin{array}{c}
  \mbox{MALL rules} \\[6pt]
  \infer[\bot]{\Sigma; \vdash \Gamma, \bot}{
         \Sigma; \vdash \Gamma}
  \quad
  \infer[\llpar]{\Sigma; \vdash \Gamma, P\llpar Q}{
         \Sigma; \vdash \Gamma, P, Q}
  \quad
  \infer[\lltens]{\Sigma; \vdash \Gamma,\Delta, P\lltens Q}{
         \Sigma; \vdash \Gamma, P &
         \Sigma; \vdash \Delta, Q}
  \quad
  \infer[\llone]{\Sigma; \vdash \llone}{}
  \\[6pt]
  \infer[\top]{\Sigma; \vdash \Delta, \top}{}
  \quad
  \infer[\llwith]{\Sigma; \vdash \Gamma, P \llwith Q}{
         \Sigma; \vdash \Gamma, P & \vdash \Gamma, Q}
  \quad
  \infer[\llplus]{\Sigma;\vdash \Gamma, P_0\llplus P_1}{
         \Sigma; \vdash \Gamma, P_i}
 \end{array}
\]
\[ \addtolength{\arraycolsep}{18pt}
\begin{array}{c}
\mbox{First-order structure} 
  \\[6pt]
  \infer[\forall]{\Sigma ; \vdash \Gamma, \forall_\gamma x. P x}{
         \Sigma, x:\gamma ; \vdash \Gamma, P x}
  \quad
  \infer[\exists]{\Sigma; \vdash \Gamma, \exists_\gamma x. Px}{
         \Sigma\vdash t:\gamma & \Sigma; \vdash \Gamma, Pt}
  \\[6pt]
  \infer[\neq]{\Sigma; \vdash \Gamma, s\neq t}{
    \set{\Sigma\theta; \vdash \Gamma\theta}{\theta\in csu(s\unif t)}}
  \quad
  \infer[=]{\Sigma; \vdash t=t}{}
\end{array} \]
\[
  \begin{array}{c}
  \mbox{Fixed points}
  \\[6pt]
  \infer[\nu]{\Sigma ; \vdash \Gamma, \nu{}B\t}{
    \Sigma ; \vdash \Gamma, S\t &
    \x ; \vdash BS\x, (S\x)^\bot}
  \qquad
  \infer[\mu{}]{\Sigma; \vdash \Gamma, \mu{}B\t}{
                \Sigma; \vdash \Gamma, B(\mu{}B)\t}
  \end{array}
\]
\end{center}
\caption{Inference rules for first-order \mumall}
\label{fig:mumall}
\end{figure}

Note that \mumall\ is a \emph{conservative} extension of MALL, meaning that
a MALL formula is derivable in MALL if and only if it is derivable in \mumall:
it is easy to check that MALL and \mumall\ have the same
cut-free derivations of MALL formulas,
and cut elimination will allow us to conclude.

In the following, we use a couple of notational shortcuts.
For conciseness, and when it does not create any ambiguity,
we may use ${\bullet}$ to denote implicitly abstracted variables,
\eg\ $P{\bullet}x$ denotes $\lambda y. P y x$.
Similarly, we may omit abstractions, \eg\ $\perp$ used as a coinvariant
stands for $\lambda\x.\perp$ and, when $S_1$ and $S_2$ are predicates
of the same type, $S_1\llpar S_2$ stands for
$\lambda \x.~ S_1\x\llpar S_2\x$.
Finally, we shall omit the signature of sequents whenever unambiguous,
simply writing $\;\vdash \Gamma$.

\subsection{Equality} 

The treatment of equality dates back
to~\cite{girard92mail,schroeder-Heister93lics}, originating from
logic programming.
In the disequality rule, which is a case analysis on all unifiers,
$csu$ stands for \emph{complete set of unifiers}, that is a set ${\cal S}$
of unifiers of $u\unif v$ such that any unifier $\sigma$ can be written
as $\theta\sigma'$ for $\theta\in{\cal S}$.
For determinacy reasons,
we assume a fixed mapping from unification problems to complete
sets of unifiers, always taking $\{id\}$ for $csu(u\unif u)$.
Similarly, we shall need a fixed mapping from each unifier
$\sigma' \in csu(u\theta\unif v\theta)$ to a $\sigma \in csu(u\unif v)$
such that $\theta\sigma' = \sigma\theta'$ for some $\theta'$
| existence is guaranteed since
  $\theta\sigma'$ is a unifier of $u\unif v$.
%
In the first-order case, and in general when most general unifiers exist,
the $csu$ can be restricted to having at most one element.
But we do not rule out higher-order terms,
for which unification is undecidable and complete sets of unifiers
can be infinite~\cite{huet75tcs}
| in implementations, we restrict to well-behaved fragments
such as higher-order patterns~\cite{miller92jsc}.
Hence, the left equality rule might be infinitely branching. But
derivations remain inductive structures (they don't have infinite
branches) and are handled naturally in our proofs by means
of (transfinite) structural induction.
Again, the use of higher-order terms, and even the presence of the equality
connectives are not essential to this work. All the results presented below 
hold in the logic without equality, and do not make much assumptions on 
the language of terms.

It should be noted that our ``free'' equality is more powerful than
the more usual Leibniz equality. Indeed, it implies the injectivity
of constants: one can prove for example that $\forall x.~ 0 = s~x \llimp 
\llzero$ since there is no unifier for $0\unif s~x$.
This example also highlights that constants and universal variables
are two different things,
since only universal variables are subject to unification |
which is why we avoid calling them eigenvariables.
It is also important to stress that the disequality rule does not
and must not embody any assumption about the signature,
just like the universal quantifier.
That rule enumerates substitutions over open terms,
not instantiations by closed terms.
Otherwise, with an empty domain we would prove
$\forall x.~ x=x \llimp \llzero$ (no possible instantiation for $x$)
and $\forall x.~ x=x$, but not (without cut) $\forall x.~ \llzero$.
Similarly,
by considering a signature with a single constant $c:\tau_2$,
so that $\tau_1$ is empty while $\tau_1\ra\tau_2$ contains only $\lambda x.~ 
c$, we would indeed be able to prove $\forall x.~ x=x$ and
$\forall x.~ x=x \llimp \exists y.~ x=\lambda a.~ y$
but not (without cut) $\forall x \exists y.~ x=\lambda a.~ y$.

\begin{example}
Units can be represented by means of $=$ and $\neq$.
Assuming that $2$ and $3$ are two distinct constants, then we have
$2=2 \lleq \llone$ and $2=3 \lleq \llzero$
(and hence $2\neq 2 \lleq \bot$ and $2\neq 3 \lleq \top$).
\end{example}

\subsection{Fixed points} 

Our treatment of fixed points follows from a line of work on definitions
\cite{girard92mail,schroeder-Heister93lics,mcdowell00tcs,momigliano03types}.
In order to make that lineage explicit and help the understanding of our rules,
let us consider for a moment an intuitionistic framework (linear or not).
In such a framework, the rules associated with least fixed points can be derived
from Knaster-Tarski's characterization of an operator's least fixed point in
complete lattices: it is the least of its pre-fixed points\footnote{
  Pre-fixed points of $\phi$ are those $x$ such that 
    $\phi(x) \leq x$.
}.
\[ \infer{\Sigma; \mu B \t \vdash S \t}{ \x; BS\x\vdash S\x} \quad\quad
     \infer{\Sigma; \Gamma\vdash \mu B \t}{\Sigma; \Gamma \vdash B(\mu B)\t} \]
As we shall see,
the computational interpretation of the left rule is recursion.
Obviously, that computation cannot be performed without knowing
the inductive structure on which it iterates. In other words,
a cut on $S\t$ cannot be reduced until a cut on $\mu B\t$ is performed.
As a result, a more complex left introduction rule is usually considered
(\eg\ in \cite{momigliano03types})
which can be seen as embedding this suspended cut:
\[ \infer{\Sigma; \Gamma, \mu B\t \vdash P}{
     \Sigma; \Gamma, S\t \vdash P & \x; BS\x\vdash S\x} \quad\quad
   \infer{\Sigma; \Gamma\vdash \mu B \t}{\Sigma; \Gamma \vdash B(\mu B)\t} \]
Notice, by the way, how the problem of suspended cuts (in the first set of
rules) and the loss of subformula property (in the second one) relate
to the arbitrariness of $S$, or in other words the difficulty of
finding an invariant for proving $\Gamma, \mu B\t \vdash P$.

Greatest fixed points can be described similarly as the greatest
of the post-fixed points:
\[ \infer{\Sigma;\Gamma, \nu B\t\vdash P}{\Sigma;\Gamma, B(\nu B)\t\vdash P}
 \quad\quad
   \infer{\Sigma;\Gamma\vdash\nu B\t}{\Sigma;\Gamma\vdash S\t &
      \x;S\x\vdash BS\x}
\]

\begin{example} \label{example:nat}
Let $B_{nat}$ be the operator
  $(\lambda{}N\lambda{}x.~ x=0
          \llplus \exists{}y.~ x=s~y \lltens N~y)$
and $nat$ be its least fixed point $\mu B_{nat}$.
Then the following inferences can be derived from the above rules:
\[ \infer{\Sigma; \Gamma, nat~t \vdash P}{
      \Sigma; \Gamma, S~t \vdash P &
      \vdash S~0 &
      y; S~y \vdash S~(s~y)}
\quad\quad
   \infer{\Sigma; \Gamma \vdash nat~0}{}
\quad\quad
   \infer{\Sigma; \Gamma \vdash nat~(s~t)}{\Sigma; \Gamma\vdash nat~t}
\]
\end{example}

Let us now consider the translation of those rules to classical linear logic,
using the usual reading of $\Gamma\vdash P$ as
$\vdash \Gamma^\perp, P$ where $(P_1,\ldots,P_n)^\perp$ is
$(P_1^\perp,\ldots,P_n^\perp)$.
It is easy to see that the above right introduction rule for $\mu$
(\resp\ $\nu$) becomes the $\mu$ (\resp\ $\nu$) rule of 
Figure~\ref{fig:mumall}, by taking $\Gamma^\perp$ for $\Gamma$.
Because of the duality between least and greatest fixed points
(\ie\ $(\mu B)^\perp \equiv \nu \B$) the other rules collapse.
The translation of the above left introduction rule for $\nu$
corresponds to an application of the $\mu$ rule of \mumall\ on
$(\nu B\t)^\perp \equiv \mu \B\t$.
The translation of the left introduction rule for $\mu$ is as follows:
\[ \infer{\vdash \Gamma^\perp, (\mu B \t)^\bot, P}{
     \vdash \Gamma^\perp, S^\perp\t, P &
     \vdash (B S \x)^\perp, S \x
   } \]
Without loss of generality, we can write $S$ as $S'^\perp$.
Then $(B S\x)^\perp$ is simply $\B S'\x$
and we obtain exactly the $\nu$ rule of \mumall\ on $\nu\B$:
\[ \infer[\nu]{\vdash \Gamma^\perp, \nu \B \t, P}{
     \vdash \Gamma^\perp, S' \t, P &
     \vdash \B S' \x, S'^\perp \x
   } \]
In other words, by internalizing syntactically the
duality between least and greatest fixed points that exists in 
complemented lattices, we have also obtained the identification
of induction and coinduction principles.

\begin{example}
As expected from the intended meaning of $\mu$ and $\nu$,
$\nu{}(\lambda{}p.p)$ is provable
(take any provable formula as the coinvariant)
and its dual $\mu{}(\lambda{}p.p)$ is not provable.
More precisely, $\mu{}(\lambda{}p.p) \lleq \llzero$
and $\nu{}(\lambda{}p.p) \lleq \top$.
\end{example}

\subsection{Comparison with other extensions of MALL}

The logic \mumall\ extends MALL
with first-order structure ($\forall$, $\exists$, $=$ and $\neq$)
and fixed points ($\mu$ and $\nu$).
A natural question is whether fixed points can be compared
with other features that bring infinite behavior,
namely exponentials and second-order quantification.

In~\cite{baelde07lpar}, we showed that \mumall\ can be encoded
into full second-order linear logic (LL2), \ie\ MALL with exponentials
and second-order quantifiers,
by using the well-known second-order encoding:
\[ [\mu B\t] \equiv \forall S.~ !(\forall x.~ [B]S\x\llimp S\x) \llimp S\t \]
This translation highlights the fact that fixed points combine
second-order aspects (the introduction of an arbitrary (co)invariant)
and exponentials (the iterative behavior of the $\nu$ rule in
cut elimination).
The corresponding translation of \mumall\ derivations into LL2
is very natural |
anticipating the presentation of cut elimination
for \mumall, cut reductions in the original and encoded derivations
should even correspond quite closely.
We also provided a translation from LL2 proofs of encodings
to \mumall\ proofs, under natural constraints on second-order instantiations;
interestingly, focusing is used to ease this reverse translation.

It is also possible to encode exponentials using fixed points,
as follows:
\[ [?P] \equiv \mu (\lambda p.~ \perp \llplus (p\llpar p) \llplus [P])
\quad\quad [!P] \equiv [?P^\perp]^\perp \]
This translation trivially allows to simulate the rules of weakening ($W$),
contraction ($C$) and dereliction ($D$) for $[?P]$ in \mumall:
each one is obtained by applying the $\mu$ rule and choosing the corresponding
additive disjunct.
Then, the promotion rule can be obtained for the dual of the encoding.
Let $\Gamma$ be a sequent containing only formulas of the form $[?Q]$,
and $\Gamma^\perp$ denote the tensor of the duals of those formulas,
we derive $\vdash\Gamma,[!P]$ from $\vdash\Gamma,[P]$
using $\Gamma^\perp$ as a coinvariant for $[!P]$:
\[ \infer[\nu]{\vdash \Gamma,
        \nu (\lambda p.~ \llone \llwith (p \lltens p) \llwith [P])}{
     \infer=[\lltens,init]{\vdash \Gamma,\Gamma^\perp}{} &
     \infer{\vdash \Gamma, \llone \llwith (\Gamma^\perp\lltens\Gamma^\perp)
                              \llwith [P]}{
        \infer=[W]{\vdash \Gamma, \llone}{\infer{\vdash\llone}{}} &
        \infer=[C]{\vdash \Gamma, \Gamma^\perp\lltens\Gamma^\perp}{
          \infer{\vdash \Gamma, \Gamma, \Gamma^\perp\lltens\Gamma^\perp}{
            \infer=[\lltens,init]{\vdash \Gamma, \Gamma^\perp}{} &
            \infer=[\lltens,init]{\vdash \Gamma, \Gamma^\perp}{}}} &
        \vdash \Gamma, [P]}} \]
Those constructions imply that the encoding of provable statements
involving exponentials is also provable in \mumall.
But the converse is more problematic:
not all derivations of the encoding can be translated
into a derivation using exponentials. Indeed, the encoding of
$[!P]$ is an infinite tree of $[P]$, and there is nothing
that prevents it from containing different proofs of $[P]$,
while $!P$ must be uniform, always providing the same proof of $P$.
Finally, accordingly with these different meanings, cut reductions
are different in the two systems.

It seems unlikely that second-order quantification can be encoded in
\mumall, or that fixed points could be encoded using only second-order
quantifiers or only exponentials. In any case, if such encodings existed
they would certainly be as shallow as the encoding of exponentials,
\ie\ at the level of provability,
and not reveal a connection at the level of proofs and cut elimination
like the encoding of fixed points in LL2.

\subsection{Basic meta-theory}

\begin{definition} \label{def:inst}
If $\theta$ is a term substitution, and $\Pi$ a derivation of
$\Sigma;\vdash\Gamma$, then we define $\Pi\theta$, a derivation of
$\Sigma\theta;\vdash \Gamma\theta$:
  $\Pi\theta$ always starts with the same rule as $\Pi$,
  its premises being obtained naturally by applying
  $\theta$ to the premises of $\Pi$.
  The only non-trivial case is the $\neq$ rule.
  Assuming that we have a derivation $\Pi$ where $u\neq v$ is principal,
  with a subderivation $\Pi_\sigma$ for each $\sigma\in csu(u\unif v)$,
  we build a subderivation of $\Pi\theta$ for each
  $\sigma'\in csu(u\theta\unif v\theta)$.
  Since $\theta\sigma'$ is a unifier for $u\unif v$,
  it can be written as $\sigma\theta'$ for some $\sigma\in csu(u\unif v)$.
  Hence, $\Pi_\sigma\theta'$ is a suitable derivation for $\sigma'$.
  Note that some $\Pi_\sigma$ might be unused in that process,
  if $\sigma$ is incompatible with $\theta$,
  while others might be used infinitely many times\footnote{
    Starting with a $\neq$ rule on $x\neq y~z$, which admits the most
    general unifier $[(y~z) / x]$, and applying the substitution
    $\theta = [u~v / x]$, we obtain $u~v \neq y~z$ which has no
    finite $csu$. In such a case, the infinitely many subderivations
    of $\Pi\theta$ would be instances of the only subderivation of $\Pi$.
  }.
\end{definition}

Note that the previous definition encompasses common signature manipulations
such as permutation and extension, since it is possible for a substitution
to only perform a renaming, or to translate a signature to an extended one.

We now define functoriality, a proof construction that is used to
derive the following rule:
\[ \infer[B]{\Sigma ; \vdash B P, \B Q}{\x; \vdash P\x, Q\x} \]
In functional programming terms, it corresponds to a $map$ function:
its type is $(Q\llimp P) \llimp (BQ\llimp BP)$
(taking $Q^\perp$ as $Q$ in the above inference).
Functoriality is particularly useful for dealing with fixed points:
it is how we propagate reasoning/computation underneath $B$
\cite{matthes98csl}.

\begin{definition}[Functoriality, $F_B(\Pi)$]
Let $\Pi$ be a proof of $\x;\vdash P\x, Q\x$
and $B$ be a monotonic operator such
that $\Sigma\vdash B : (\vec{\gamma}\ra o)\ra o$.
We define $F_B(\Pi)$, a derivation of $\Sigma;\vdash BP, \B Q$,
by induction on the maximum depth of occurrences of $p$ in $B p$:
\begin{longitem}
\item When $B = \lambda p.~ P'$, $F_B(\Pi)$ is an instance of $init$ on $P'$.
\item When $B = \lambda p.~ p\t$, $F_B(\Pi)$ is $\Pi[\t/\x]$.
\item Otherwise,
  we perform an $\eta$-expansion based on the toplevel connective of $B$
  and conclude by induction hypothesis.
  We only show half of the connectives, because dual connectives
  are treated symmetrically.
  There is no case for units, equality and disequality
  since they are treated as part of the vacuous abstraction case.

  When $B = \lambda p.~ B_1 p \lltens B_2 p$:
  \[ \infer[\llpar]{\Sigma; \vdash B_1 P \lltens B_2 P, \B_1 Q\llpar \B_2 Q}{
     \infer[\lltens]{\Sigma; \vdash B_1 P \lltens B_2 P, \B_1 Q, \B_2 Q}{
            \infer{\Sigma; \vdash B_1 P, \B_1 Q}{F_{B_1}(\Pi)} &
            \infer{\Sigma; \vdash B_2 P, \B_2 Q}{F_{B_2}(\Pi)}}}
  \]

  When $B = \lambda p.~ B_1 p \llplus B_2 p$:
  \[ \infer[\llwith]{\Sigma; \vdash B_1 P \llplus B_2 P, \B_1 Q\llwith \B_2 Q}{
       \infer[\llplus]{\Sigma; \vdash B_1 P \llplus B_2 P, \B_1 Q}{
            \infer{\Sigma; \vdash B_1 P, \B_1 Q}{F_{B_1}(\Pi)}} &
       \infer[\llplus]{\Sigma; \vdash B_1 P \llplus B_2 P, \B_2 Q}{
            \infer{\Sigma; \vdash B_2 P, \B_2 Q}{F_{B_2}(\Pi)}}}
  \]

  When $B = \lambda p.~ \exists x.~ B' p x$:
  \[ \infer[\forall]{\Sigma; \vdash \exists x.~ B' P x, \forall x.~ \B' Q x}{
      \infer[\exists]{\Sigma,x; \vdash \exists x.~ B' P x, \B' Q x}{
       \infer{\Sigma,x; \vdash B' P x, \B' Q x}{F_{B'{\bullet}x}(\Pi)}
      }} \]

  When $B = \lambda p.~ \mu (B' p) \t$,
  we show that $\nu (\overline{B'} P^\perp)$ is a coinvariant of
  $\nu (\overline{B'} Q)$:
  \[ \infer[\nu]{\Sigma; \vdash \mu (B' P)\t, \nu (\overline{B'} Q)\t}{
     \infer[init]{
         \Sigma; \vdash \mu (B' P)\t, \nu (\overline{B'} P^\perp)\t}{} &
     \infer[\mu]{\x; \vdash \mu (B' P)\x,
                   (\overline{B'} Q) (\nu (\overline{B'} P^\perp))\x}{
     \infer{\x; \vdash (B' P) (\mu (B' P))\x,
                   (\overline{B'} Q) (\nu (\overline{B'} P^\perp))\x}{
          F_{(\lambda p. B' p (\mu (B' P))\x)}(\Pi)}}}
  \]
\end{longitem}
\end{definition}

\begin{proposition}[Atomic initial rule] \label{def:atomicinit}
We call \emph{atomic} the $init$ rules acting on atoms or fixed points.
The general rule $init$ is derivable from atomic initial rules.
\end{proposition}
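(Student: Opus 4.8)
The plan is to prove by induction on the structure of the formula $P$ that the rule $init$ on $P$ is derivable from initial rules acting only on atoms and on fixed point expressions. The statement is essentially the standard $\eta$-expansion of the identity axiom, pushed down through all the connectives that are \emph{not} atoms or fixed points, and stopping exactly at atoms and fixed points because those are the cases the proposition explicitly exempts. So the induction has a trivial base case ($P$ an atom $a\t$, a negated atom $a^\perp\t$, or a fixed point expression $\mu B\t$ or $\nu B\t$: here we simply invoke an atomic $init$, since $(\nu B\t)^\perp \equiv \mu\B\t$ is again a fixed point expression) and one inductive step per remaining connective.

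For the inductive step I would go through the MALL connectives, the first-order structure, and argue the fixed-point connectives are handled by the base case. For the multiplicative pair: to derive $\vdash P\lltens Q, (P\lltens Q)^\perp = \vdash P\lltens Q, P^\perp\llpar Q^\perp$, apply $\llpar$ to reduce to $\vdash P\lltens Q, P^\perp, Q^\perp$, then $\lltens$ to split into $\vdash P, P^\perp$ and $\vdash Q, Q^\perp$, each closed by induction hypothesis. The additive pair $\llplus/\llwith$, the units $\llone/\bot$ and $\llzero/\top$, and the first-order quantifiers $\exists/\forall$ (introduce $\forall$ first, then $\exists$ with the freshly-bound eigenvariable as witness) are all entirely analogous; dual connectives are symmetric so only half the cases need to be written. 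The equality/disequality pair is also a base-type-style case: $\vdash t = t$ is closed by the $=$ rule directly, and its mirror $\vdash t \neq t$ by the $\neq$ rule, since $csu(t\unif t) = \{id\}$, with the single premise $\vdash\,$ — wait, more precisely $\vdash t = t$ read from the other side — handled by $=$. Locativity is preserved throughout: each step replaces the principal location by its sublocation(s), exactly as described for the ordinary inference rules.

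This is a routine structural induction and I do not expect a genuine obstacle; the only point requiring a little care is the termination of the recursion. One must check that the induction measure (the formula, ordered by the immediate-subformula relation) actually decreases in every non-base case — which it does, because for a fixed point expression we do \emph{not} recurse (that is the base case), so we never unfold $\mu B\t$ into $B(\mu B)\t$ and the measure is genuinely well-founded. It is also worth noting explicitly that the functoriality construction $F_B$ from the preceding definition gives an alternative, more uniform route: $F_B$ applied to an atomic $init$ on $p\t$ essentially performs exactly this $\eta$-expansion, so one could alternatively phrase the proof as "take $\Pi$ to be atomic $init$ and apply $F$", but the direct induction is cleaner to state here. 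No subtlety arises from higher-order terms or from the possibly infinitely-branching $\neq$ rule, since $\neq$ only appears in the $t \neq t$ base-style case where the $csu$ is the singleton $\{id\}$.
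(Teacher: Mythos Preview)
Your proposal is correct and takes essentially the same approach as the paper: a structural induction on $P$ performing $\eta$-expansion for each non-atomic, non-fixed-point connective, with atoms and fixed points as base cases precisely because unfolding a fixed point does not decrease the measure. The paper's proof is in fact terser than yours, simply pointing back to the functoriality construction for the shape of each expansion step and noting, as you do, that the identity on fixed points cannot be eliminated by repeated expansion.
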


\begin{proof}
By induction on $P$, we build a derivation of $\;\vdash P^\perp, P$
using only atomic axioms.
If $P$ is not an atom or a fixed point expression,
we perform an $\eta$-expansion as in the previous definition
and conclude by induction hypothesis.
Note that although the identity on fixed points can be expanded,
it can never be eliminated: repeated expansions do not terminate
in general.
\end{proof}

The constructions used above can be used to establish the \emph{canonicity} of
all our logical connectives: if a connective is duplicated into, say, red and
blue variants equipped with the same logical rules, then those two versions
are equivalent.
Intuitively, it means that our connectives define a unique
logical concept.
This is a known property of the connectives of first-order
MALL, we show it for $\mu$ and its copy $\hat{\mu}$ by using our color-blind
expansion:
\[ \infer[\nu]{\vdash \nu \B \t, \hat{\mu} B \t}{
          \infer[init]{\vdash \hat{\nu} \B \t, \hat{\mu} B \t}{} &
          \infer[\hat{\mu}]{\vdash \B (\hat{\nu} \B) \x, \hat{\mu} B \x}{
            \infer[init]{\vdash \B (\hat{\nu} \B) \x, B (\hat{\mu} B) \x}{}
          }
   } \]

\begin{proposition} \label{prop:unfold}
The following inference rule is derivable:
\[
  \infer[\nu{}R]{\vdash \Gamma, \nu{}B\t}{\vdash \Gamma, B(\nu{}B)\t}
\]
\end{proposition}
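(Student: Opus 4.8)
The plan is to derive $\nu R$ by a single use of the $\nu$ rule, taking $B(\nu B)$ itself as the \emph{coinvariant}. I would instantiate the $\nu$ rule of Figure~\ref{fig:mumall} with $S \defeq B(\nu B)$; this is a closed predicate of the same type $\gamma_1\ra\cdots\ra\gamma_n\ra o$ as $\nu B$, since $B$ is monotonic and has no free term variable. With this choice the left premise of the $\nu$ rule is $\vdash \Gamma, S\t$, that is $\vdash \Gamma, B(\nu B)\t$, which is precisely the hypothesis of the rule $\nu R$ we are trying to derive; so that branch requires nothing further.

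All the work is then in the (co)invariance premise $\x;\vdash BS\x, (S\x)^\perp$. Because negation is involutive and commutes with operator application, $(B(\nu B))^\perp \equiv \B((\nu B)^\perp) \equiv \B(\mu\B)$, so that premise reads $\x;\vdash B(B(\nu B))\x, \B(\mu\B)\x$. I would discharge it using functoriality: the derived rule $F_B$ turns a proof of $\x;\vdash P\x, Q\x$ into a proof of $\x;\vdash BP\x, \B Q\x$, so it suffices to exhibit a proof of $\x;\vdash B(\nu B)\x, \mu\B\x$ and then apply $F_B$ with $P \defeq B(\nu B)$ and $Q \defeq \mu\B$. (If the first-order arity $n$ is positive, so that $\nu B$ is a genuine predicate rather than a formula, $F_B$ is to be read pointwise, applied to the formula-producing operator $\lambda p.~ Bp\x$ over the signature $\x$; this is harmless.)

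The remaining sequent $\x;\vdash B(\nu B)\x, \mu\B\x$ is proved in two inferences: apply the $\mu$ rule to the principal formula $\mu\B\x$, which leaves the goal $\x;\vdash B(\nu B)\x, \B(\mu\B)\x$; and this is an instance of $init$, since $\B(\mu\B)\x \equiv (B(\nu B))^\perp\x \equiv (B(\nu B)\x)^\perp$ is the dual of $B(\nu B)\x$. Assembling the pieces yields the derivation
\[
\infer[\nu]{\vdash \Gamma, \nu B\t}{
  \vdash \Gamma, B(\nu B)\t
  &
  \infer[B]{\x;\vdash B(B(\nu B))\x, \B(\mu\B)\x}{
    \infer[\mu]{\x;\vdash B(\nu B)\x, \mu\B\x}{
      \infer[init]{\x;\vdash B(\nu B)\x, \B(\mu\B)\x}{}}}}
\]
in which the upper-left leaf is exactly the premise of $\nu R$.

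I do not anticipate a genuine obstacle. The two points needing attention are the negation bookkeeping — checking that $(S\x)^\perp$ is indeed $\B(\mu\B)\x$ when $S = B(\nu B)$, which uses only involutivity of $\cdot^\perp$ and its compatibility with application — and orienting functoriality correctly, i.e.\ applying $F_B$ to a proof whose right-hand formula is $\mu\B\x$, so that it delivers $\B(\mu\B)\x$ rather than some other formula. Both are routine verifications.
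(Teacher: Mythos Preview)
Your proposal is correct and follows essentially the same approach as the paper: take $S \defeq B(\nu B)$ as the coinvariant, and discharge the coinvariance premise $\x;\vdash B(B(\nu B))\x, \B(\mu\B)\x$ by functoriality applied to $\x;\vdash B(\nu B)\x, \mu\B\x$, which is itself obtained from $\mu$ and $init$. The paper's proof is just a terser statement of exactly this derivation.
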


\begin{proof}
The unfolding $\nu{}R$ is derivable from $\nu$,
using $B(\nu{}B)$ as the coinvariant $S$.
The proof of coinvariance $\,\vdash B(B(\nu B))\x, \B(\mu \B)\x$ is obtained
by functoriality on $\,\vdash B(\nu B)\x, \mu\B\x$, itself obtained from
$\mu$ and $init$.
\end{proof}


\begin{example}
In general the least fixed point entails the greatest.  The
following is a proof of $\mu{}B\t \llimp \nu{}B\t$,
showing that $\mu B$ is a coinvariant of $\nu B$:
\[  \infer[\mbox{$\nu$ on $\nu{}B\t$ with $S:=\mu{}B$}]{
       \vdash \nu{}\B\t, \nu{}B\t}{
     \infer[init]{\vdash \nu\B\t, \mu{}B\t}{}
     &
     \infer[\nu{}R]{\vdash B(\mu{}B)\x, \nu\B\x}{
       \infer[init]{\vdash B(\mu{}B)\x, \B(\nu\B)\x}{}
     }
   } \]
The greatest fixed point entails the least fixed point when the fixed
points are \emph{noetherian}, \emph{i.e.}, predicate operators have
vacuous second-order abstractions.
Finally, the $\nu{}R$ rule allows to derive $\mu B \t \lleq B (\mu B)\t$,
or equivalently $\nu B\t \lleq B (\nu B)\t$.
\end{example}

\subsection{Polarities of connectives} 
\label{sec:mumall_positivity}

It is common to classify inference rules between invertible and non-invertible 
ones. In linear logic, we can use the refined notions of \emph{positivity}
and \emph{negativity}.
A formula $P$ is said to be positive (\resp $Q$ is said to be 
negative) when $P\lleq \oc P$ (\resp $Q\lleq \wn Q$).
A logical connective is said to be positive (\resp negative) when
it preserves positivity (\resp negativity). For example, $\lltens$ is
positive since $P\lltens P'$ is positive whenever $P$ and $P'$ are.
This notion is more semantical than invertibility, and has the advantage
of actually saying something about non-invertible connectives/rules.
Although it does not seem at first sight to be related to proof-search,
positivity turns out to play an important role in the understanding
and design of focused
systems~\cite{liang07csl,laurent02phd,laurent05apal,danos93kgc,danos93wll}.

Since \mumall\ does not have exponentials, it is not possible
to talk about positivity as defined above.
Instead, we are going to take a backwards approach: we shall first define
which connectives are negative, and then check that the obtained negative
formulas have a property close to the original negativity.
This does not trivialize the question at all: it turns out that only
one classification allows to derive the expected property.
We refer the interested reader to \cite{baelde08phd} for the extension
of that proof to \muLL, \ie\ \mumall\ with exponentials,
where we follow the traditional approach.

\begin{definition} \label{def:connectives}
We classify as \emph{negative} the following connectives:
$\llpar$, $\bot$, $\llwith$, $\top$, $\forall$, $\neq$, $\nu$.
Their duals are called \emph{positive}.
A formula is said to be negative (\resp positive) when
all of its connectives are negative (\resp positive).
Finally, an operator $\lambda{}p\lambda\x.Bp\x$ is said to be negative
(\resp positive) when the formula $Bp\x$ is negative (\resp positive).
\end{definition}

Notice, for example, that $\lambda{}p\lambda\x.p\x$
is both positive and negative. But $\mu p. p$ is only positive
while $\nu p. p$ is only negative.
Atoms (and formulas containing atoms)
are neither negative nor positive: indeed, they offer no structure\footnote{
  This essential aspect of atoms makes them often less interesting or
  even undesirable. For example, in our work on minimal generic 
  quantification~\cite{baelde08lfmtp} we show and exploit the fact
  that this third quantifier can be defined in \muLJ\ \emph{without atoms}.
} from which the following fundamental property could be derived.

\begin{proposition} \label{prop:struct}
The following structural rules are admissible for any negative formula $P$:
\[
\infer[C]{\Sigma; \vdash \Gamma, P}{\Sigma; \vdash \Gamma, P, P}
\quad
\infer[W]{\Sigma; \vdash \Gamma, P}{\Sigma; \vdash \Gamma}
\]
\end{proposition}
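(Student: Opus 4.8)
The plan is to prove both admissibility statements simultaneously by induction on the structure of the negative formula $P$, with the cut-free derivation $\Pi$ of the premise as a secondary induction parameter. For weakening, the argument is essentially the standard linear-logic weakening lemma restricted to negative formulas: I would traverse $\Pi$ and, whenever $P$ (or one of its negative subformulas introduced along the way) is \emph{not} principal, simply carry the weakening past the rule; when $P$ \emph{is} principal, use the fact that every negative top-level connective admits weakening of its immediate subformula(s). Concretely: for $\bot$ the premise $\vdash\Gamma$ is already what we want; for $P\llpar Q$ we recurse on the premise $\vdash\Gamma',P,Q$ and reapply $\llpar$ — but note we need to first weaken away \emph{both} $P$ and $Q$, which are again negative, so the induction is on subformula structure; for $P\llwith Q$ we recurse into both branches; for $\top$ the rule $\infer[\top]{\vdash\Delta,\top}{}$ applies with any $\Delta$; for $\forall x.Px$ we recurse under the eigenvariable; for $\neq$, the premises are $\{\vdash\Gamma\theta : \theta\in csu(s\unif t)\}$, and since $s\neq t$ being negative forces $\Gamma$ to have been the context, we weaken each premise (using $\Gamma\theta$ in place of $\Gamma$); for $\nu B\t$ we recurse on the main premise $\vdash\Gamma',S\t$ (the coinvariance premise $\x;\vdash BS\x,(S\x)^\perp$ is untouched since it does not mention $\Gamma$). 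The crucial observation making all cases go through is that the immediate subformulas of a negative connective are themselves negative (by Definition~\ref{def:connectives}), so the structural induction is well-founded.

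For contraction, the strategy mirrors weakening but is genuinely harder, since now two copies of $P$ must be merged and they may be introduced at very different points in $\Pi$. I would again induct on $\Pi$, tracking the two occurrences. When neither copy is principal in the first rule, push the contraction through (the context-splitting rules $\lltens$ and $\mu$ need care: both copies of $P$ must land in the same branch, which is where negativity is used again, since if they were split we would contract across branches — but in fact for $\lltens$ the two $P$'s sit in $\Gamma,\Delta$ and may be on opposite sides; here I would first weaken one copy into the other side's branch using the already-established $W$, apply the tensor, then contract — or more cleanly, observe that because $P$ is negative it admits $W$, so we can freely duplicate-then-weaken to normalize the situation). When one copy is principal, we decompose it and correspondingly decompose the other copy lazily: for $P_1\llpar P_2$, the principal copy yields a premise with $P_1,P_2$; we need the non-principal copy $P_1\llpar P_2$ to also become $P_1,P_2$, which it will once we trace down to \emph{its} $\llpar$ rule — so the induction must be set up to contract each of the negative subconnectives. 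The cleanest formulation is: prove by induction that for negative $P$, two occurrences of $P$ in a derivable sequent can be identified, by a nested induction first on the number of \emph{distinct} negative connectives in $P$ and then on the derivation. The $\nu$ case is the delicate one: when $\nu B\t$ is principal in one copy, we get $\vdash\Gamma', S\t$ (plus coinvariance); the other $\nu B\t$ copy is still present in $\Gamma'$, so we recurse — but $S\t$ is an \emph{arbitrary} formula, not necessarily negative, so we cannot contract $S$. Fortunately we do not need to: we only ever need to contract the original $\nu B\t$'s, and after unfolding via the $\mu$-rule (Proposition~\ref{prop:unfold} gives us $\nu R$) we can drive both copies down to $B(\nu B)\t$ and recurse on the now-negative $B(\nu B)$ — here \emph{noetherianity is not required}, but we must be careful that this recursion on $B(\nu B)$ terminates, which it does because we are recursing on the derivation height, not on the formula.

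The main obstacle, as just indicated, is the $\nu$ case of contraction: a greatest fixed point has no finite subformula bound, so an induction purely on formula structure fails for it, and the coinvariant $S$ appearing in the $\nu$ rule is uncontrollable. The resolution I would pursue is to make the derivation-height the primary induction measure throughout (with formula structure secondary), so that decomposing a principal $\nu$ into its main premise strictly decreases height; the two residual copies of $\nu B\t$ in the context of that premise are then handled by the induction hypothesis at smaller height. One must check that no rule \emph{increases} the relevant measure — in particular that the admissibility-of-$W$ steps we invoke inside the contraction proof do not blow up height in a way that breaks the induction; this is fine because $W$ only deletes rules. A secondary subtlety is the infinitely-branching $\neq$ rule: the induction on derivations must be transfinite/structural (as the paper already notes derivations remain well-founded trees), and merging two copies of a negative $s\neq t$ requires simultaneously transforming all (possibly infinitely many) premises, which is routine once the induction principle is phrased correctly. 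I expect the write-up to proceed: (1) fix the induction measure; (2) dispatch $W$ in full; (3) dispatch $C$ for the purely-MALL negative connectives; (4) handle $\forall$ and $\neq$; (5) handle $\nu$ via $\nu R$ and the height decrease; (6) remark that the $\lltens$/$\mu$ context-splitting cases for $C$ are reduced to the non-splitting situation using the already-proved $W$.
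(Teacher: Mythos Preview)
Your approach is genuinely different from the paper's, and the $\nu$ case of contraction has a real gap.

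The paper does \emph{not} transform the given derivation $\Pi$ at all. Instead it uses cut: for $W$, it builds a derivation of $\;\vdash P,\llone$ (generalized to $\;\vdash (P_i)_i,\llone$ for any finite multiset of negative $P_i$) and cuts it against $\Pi$; for $C$, it builds $\;\vdash P^\perp\lltens P^\perp, P$ and cuts it against $\llpar(\Pi)$. Both constructions are by induction on the \emph{formula} alone, never on $\Pi$. In the $\nu$ case of $W$, the coinvariant is taken to be $\lambda\x.\bot$, so the coinvariance premise becomes $\;\vdash B(\lambda\x.\bot)\x,\llone$, which is a strictly smaller instance of the same lemma. Your weakening sketch never names a coinvariant; the sentence ``recurse on the main premise $\vdash\Gamma',S\t$'' presupposes an $S$ that does not yet exist, since $P=\nu B\t$ is being \emph{added} to $\Pi$, not found in it.

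For contraction the gap is more serious. You propose to ``drive both copies down to $B(\nu B)\t$'' using $\nu R$, but $\nu R$ goes the other way: it \emph{folds} $B(\nu B)\t$ into $\nu B\t$, it does not unfold. There is no admissible rule taking $\;\vdash\Gamma,\nu B\t$ to $\;\vdash\Gamma,B(\nu B)\t$ in general. And if instead you trace $\Pi$ until one copy of $\nu B\t$ becomes principal with some coinvariant $S$, the main premise is $\;\vdash\Gamma,\nu B\t,S\t$: you now have one $\nu B\t$ and one arbitrary $S\t$, not two copies of anything, so neither the formula nor the height induction hypothesis applies. The two copies may moreover be introduced with \emph{different} coinvariants in different additive branches, and your scheme offers no way to reconcile them.

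The paper's fix is a nontrivial strengthening of the induction hypothesis: for any negative operator $A$ of second-order arity $n$ and any $B_1,\ldots,B_n$, derive
\[
\vdash\bigl(A(\nu B_1)\cdots(\nu B_n)\,\llpar\,A(\nu B_1)\cdots(\nu B_n)\bigr)^\perp,\;
       A(\nu B_1\llpar\nu B_1)\cdots(\nu B_n\llpar\nu B_n)
\]
by induction on $A$. When $A=\lambda\vec p.\,\nu(A'\vec p)\t$, one takes $\nu B_{n+1}\llpar\nu B_{n+1}$ (with $B_{n+1}:=A'(\nu B_i)_{i\le n}$) as the coinvariant of the target $\nu$, and the coinvariance premise is exactly the induction hypothesis for the smaller operator $A'$ with one more $B$ in the list. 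This ``replace $\nu B$ by $\nu B\llpar\nu B$ inside $A$'' trick is the missing idea.
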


We can already note that this proposition could not hold if $\mu$ was
negative, since $\mu (\lambda p. p)$ cannot be weakened (there is obviously
no cut-free proof of $\,\vdash \mu (\lambda p. p), \llone$).

\begin{proof}
We first prove the admissibility of $W$.
This rule can be obtained by cutting a derivation of $\Sigma; \vdash P, \llone$.
We show more generally that
for any collection of negative formulas $(P_i)_i$,
there is a derivation of $\;\vdash (P_i)_i, \llone$.
This is done by induction on the total size of $(P_i)_i$,
counting one for each connective, unit, atom or predicate variable but
ignoring terms.
The proof is trivial if the collection is empty.
Otherwise,
if $P_0$ is a disequality we conclude by induction with one less formula,
and the size of the others unaffected by the first-order instantiation;
if it is $\top$ our proof is done;
if it is $\bot$ then $P_0$ disappears and we conclude by induction hypothesis.
The $\llpar$ case is done by induction hypothesis,
the resulting collection has one more formula but is smaller;
the $\llwith$ makes use of two instances of the induction hypothesis;
the $\forall$ case makes use of the induction hypothesis with an extended
signature but a smaller formula.
Finally, the $\nu$ case is done by applying the $\nu$ rule with $\bot$ as the 
invariant:
\[ \infer{\vdash \nu B \t, (P_i)_i, \llone}{
      \infer{\vdash \bot, (P_i)_i, \llone}{
      \vdash (P_i)_i, \llone
      } &
      \vdash B (\lambda \x. \bot) \x, \llone
    }
\]
The two subderivations are obtained by induction hypothesis.
For the second one there is only one formula, namely
$B (\lambda \x. \bot) \x$, which is indeed negative (by monotonicity of $B$)
and smaller than $\nu B$.

We also derive contraction ($C$) using a cut, this time against
a derivation of $\vdash (P\llpar P)^\perp, P$.
A generalization is needed for the greatest fixed point case,
and we derive the following for any negative $n$-ary operator $A$:
\[
  \vdash
  (A(\nu{}B_1)\ldots(\nu{}B_n) \llpar A(\nu{}B_1)\ldots(\nu{}B_n))^\perp,
  A(\nu{}B_1\llpar\nu{}B_1)\ldots(\nu{}B_n\llpar\nu{}B_n)
\]
We prove this by induction on $A$:
\begin{longitem}
\item It is trivial if $A$ is a disequality, $\top$ or $\bot$.
\item If $A$ is a projection $\lambda\vec{p}.~ p_i\t$,
  we have to derive
  $\vdash (\nu{}B_i\t\llpar\nu{}B_i\t)^\perp, \nu{}B_i\t\llpar\nu{}B_i\t$,
  which is an instance of $init$.
\item If $A$ is $\lambda \vec{p}.~ A_1\vec{p}\llpar A_2\vec{p}$,
  we can combine our two induction hypotheses to derive the following:
  \[ \vdash ((A_1(\nu{}B_i)_i\llpar A_1(\nu{}B_i)_i)\llpar
             (A_2(\nu{}B_i)_i\llpar A_2(\nu{}B_i)_i))^\perp,
            A_1(\nu{}B_i)_i\llpar A_2(\nu{}B_i)_i \]
  We conclude by associativity-commutativity of the tensor,
  which amounts to use cut against an easily obtained derivation of
  $\vdash ((P_1\llpar P_2)\llpar(P_1\llpar P_2)),
     ((P_1\llpar P_1)\llpar(P_2\llpar P_2))^\perp$
  for $P_j := A_j(\nu{}B_i)_i$.
\item If $A$ is $\lambda\vec{p}.~ A_1\vec{p}\llwith A_2\vec{p}$
  we introduce the additive conjunction
  and have to derive two similar premises:
  \[ \vdash
       ((A_1\llwith A_2)(\nu B_i)_i \llpar (A_1\llwith A_2)(\nu B_i)_i)^\perp,
       A_j (\nu B_i \llpar \nu B_i)_i \mbox{~ for $j\in\{1,2\}$} \]
  To conclude by induction hypothesis, we have to choose the correct
  projections for the negated $\llwith$.
  Since the $\llwith$ is under the $\llpar$, we have to use a cut |
  one can derive in general
  $\;\vdash ((P_1\llwith P_2)\llpar(P_1\llwith P_2))^\perp, P_j\llpar P_j$
  for $j\in\{1,2\}$.
\item When $A$ is $\lambda \vec{p}.~ \forall{}x.~ A'\vec{p}x$,
  the same scheme applies:
  we introduce the universal variable and instantiate the two existential
  quantifiers under the $\llpar$ thanks to
  a cut.
\item Finally, we treat the greatest fixed point case:
  $A$ is $\lambda\vec{p}.~ \nu{}(A'\vec{p})\t$.
  Let $B_{n+1}$ be $A'(\nu{}B_i)_{i\leq n}$.
  We have to build a derivation of
\[ \vdash (\nu B_{n+1} \t\llpar \nu B_{n+1}\t)^\perp,
     \nu (A'(\nu B_i\llpar \nu B_i)_i)\t \]
  We use the $\nu$ rule, showing
  that $\nu B_{n+1}\llpar \nu B_{n+1}$ is a coinvariant of
  $\nu (A'(\nu B_i\llpar \nu B_i)_i)$.
  The left subderivation of the $\nu$ rule is thus an instance of $init$,
  and the coinvariance derivation is as follows:
\[ \small
  \infer[cut]{
     \vdash
     (\nu{}B_{n+1}\x\llpar\nu{}B_{n+1}\x)^\perp,
     A'(\nu{}B_i\llpar\nu{}B_i)_i(\nu{}B_{n+1}\llpar\nu{}B_{n+1})\x
  }{
    \vdash
    (A'(\nu{}B_i)_i(\nu{}B_{n+1})\x \llpar 
     A'(\nu{}B_i)_i(\nu{}B_{n+1})\x)^\perp,
    A'(\nu{}B_i\llpar\nu{}B_i)_i(\nu{}B_{n+1}\llpar\nu{}B_{n+1})\x
    & \Pi' }
\]
  Here, $\Pi'$ derives
  $\vdash (\nu B_{n+1}\x \llpar \nu B_{n+1} \x)^\perp,
   A'(\nu B_i)_i(\nu B_{n+1})\x\llpar A'(\nu B_i)_i(\nu B_{n+1})\x$,
  unfolding $\nu B_{n+1}$ under the tensor.
  We complete our derivation by induction hypothesis,
  with the smaller operator expression $A'$
  and $B_{n+1}$ added to the $(B_i)_i$.
\end{longitem}
\vspace{-0.5cm}\end{proof}

The previous property yields some interesting remarks about
the expressiveness of \mumall.
It is easy to see that provability is undecidable in \mumall,
by encoding (terminating) executions of a Turing machine as a least fixed
point.
But this kind of observation does not say anything about what
theorems can be derived, \ie\ the complexity of reasoning/computation
allowed in \mumall.
Here, the negative structural rules derived in Proposition~\ref{prop:struct}
come into play.
Although our logic is linear, it enjoys those derived structural rules
for a rich class of formulas: for example, $nat$ is positive, hence reasoning
about natural numbers allows contraction and weakening, just
like in an intuitionistic setting.
Although the precise complexity of the normalization of \mumall{}
is unknown, we have adapted some remarks 
from~\cite{burroni86,girard87tcs,alves06csl} to build an encoding
of primitive recursive functions in \mumall{}~\cite{baelde08phd} |
in other words, all primitive recursive functions can be proved
total in \mumall.

\subsection{Examples} \label{sec:mumall_examples} 

We shall now give a few theorems derivable in \mumall. Although we do not
provide their derivations here but only brief descriptions of how to obtain
them, we stress that all of these examples are proved naturally.
The reader will note that although \mumall\ is linear,
these derivations are intuitive
and their structure resembles that of proofs in intuitionistic logic.
We also invite the reader to check that the $\mu$-focusing system
presented in Section~\ref{sec:foc_mumall} is a useful guide
when deriving these examples, leaving only the important choices.
It should be noted that atoms are not used in this section;
in fact, atoms are rarely useful in \mumall, as its main application
is to reason about (fully defined) fixed points.

Following the definition of $nat$ from Example~\ref{example:nat},
we define a few least fixed points expressing basic properties of natural 
numbers. Note that all these definitions are positive.
\[ \begin{array}{lcl}
 even &\stackrel{def}{=}&
  \mu(\lambda{}E\lambda{}x.~ x=0
           \llplus \exists{}y.~ x=s~(s~y) \lltens E~y)
\\
   plus &\stackrel{def}{=}&
     \mu(\lambda{}P\lambda{}a\lambda{}b\lambda{}c.~ a=0 \lltens b=c \\
   & & \phantom{\mu\lambda{}P\lambda{}a\lambda{}b\lambda{}c.}
     \llplus \exists{}a'\exists{}c'. a=s~a' \lltens c=s~c' \lltens P~a'~b~c')
\\
  leq &\stackrel{def}{=}&
   \mu(\lambda{}L\lambda{}x\lambda{}y.~ x=y
                     \llplus \exists{}y'.~ y=s~y' \lltens L~x~y')
\\
  \half &\stackrel{def}{=}&
    \mu(\lambda{}H\lambda{}x\lambda{}h.~ (x=0 \llplus x=s~0) \lltens h=0 \\
  & & \phantom{\mu\lambda{}H\lambda{}x\lambda{}h.}
      \llplus \exists{}x'\exists{}h'.~ x=s~(s~x') \lltens h=s~h' 
                                    \lltens H~x'~h')
\\
  ack & \eqdef &
    \mu (\lambda A \lambda m \lambda n \lambda a.~
      m = 0 \lltens a = s~n \\
  & & \phantom{\mu\lambda \lambda m \lambda n \lambda a.~}
        \llplus (\exists p.~ m = s~p \lltens n = 0 \lltens A~p~(s~0)~a) \\
  & & \phantom{\mu\lambda \lambda m \lambda n \lambda a.~}
        \llplus (\exists p \exists q \exists b.~ m = s~p \lltens n = s~q 
           \lltens A~m~q~b \lltens A~p~b~a))
\end{array} \]

The following statements are theorems in \mumall.
The main insights required for proving these theorems involve
deciding which fixed point expression should be introduced by
induction: the proper invariant is not the difficult choice here since
the context itself is adequate in these cases.
\[ \begin{array}{l}
\vdash \forall{}x.~ nat~x \llimp even~x \llplus even~(s~x) \\
\vdash \forall{}x.~ nat~x \llimp \forall{}y\exists{}z.~ plus~x~y~z \\
\vdash \forall{}x.~ nat~x \llimp plus~x~0~x \\
\vdash \forall{}x.~ nat~x \llimp \forall{}y.~ nat~y \llimp
           \forall{}z.~ plus~x~y~z \llimp nat~z \\
\end{array} \]
In the last theorem, the assumption $(nat~x)$ is not needed and can
be weakened, thanks to Proposition~\ref{prop:struct}.
In order to prove
$(\forall{}x.~ nat~x \llimp \exists{}h.~ \half~x~h)$
the context does not provide an invariant that is strong enough.
A typical solution is to use complete induction,
\ie\ use the strengthened invariant
$(\lambda{}x.~ nat~x \lltens
               \forall{}y.~ leq~y~x \llimp \exists{}h.~ \half~y~h)$.

We do not know of any proof of totality for a non-primitive
recursive function in \mumall.
In particular, we have no proof of
$\forall x \forall y.~ nat~x \llimp nat~y \llimp \exists z.~ ack~x~y~z$.
The corresponding intuitionistic theorem can be proved using nested
inductions, but it does not lead to a linear proof since it requires to
contract an implication hypothesis (in \mumall, the dual of an implication
is a tensor, which is not negative and thus cannot \emph{a priori}
be contracted).

A typical example of co-induction involves the simulation
relation.  Assume that $step : state \ra label \ra state \ra o$ is an
inductively defined relation encoding a labeled transition system.
Simulation can be defined using the definition
\[
sim \stackrel{def}{=}
 \nu(\lambda{}S\lambda{}p\lambda{}q.~
   \forall{}a\forall{}p'.~ step~p~a~p'
     \llimp \exists{}q'.~ step~q~a~q' \lltens S~p'~q').
\]
Reflexivity of simulation ($\forall{}p.~ sim~p~p$) is proved easily
by co-induction with the co-invariant $(\lambda{}p\lambda{}q.~ p=q)$.
Instances of $step$ are not subject to induction but
are treated ``as atoms''. 
Proving transitivity, that is,
$$\forall{}p\forall{}q\forall{}r.~ sim~p~q \llimp sim~q~r \llimp sim~p~r$$
is done by co-induction on $(sim~p~r)$ with the co-invariant
$(\lambda{}p\lambda{}r.~ \exists{}q.~ sim~p~q \lltens sim~q~r)$.
The focus is first put on $(sim~p~q)^\bot$, then on $(sim~q~r)^\bot$.
The fixed points $(sim~p'~q')$ and $(sim~q'~r')$ appearing later in the proof
are treated ``as atoms'', as are all instances of $step$.
Notice that
these two examples are also cases where the context gives a coinvariant.

\section{Normalization} \label{sec:norm}

In~\cite{baelde07lpar}, we provided an indirect proof of normalization
based on the second-order encoding of \mumall.
However, that proof relied on the normalization of second-order linear logic
extended with first-order quantifiers, and more importantly equality, but
this extension of Girard's result for propositional second-order linear logic
is only a (mild) conjecture.
Moreover, such an indirect proof does not provide cut reduction rules,
which usually illuminate the structure and meaning of a logic.
In this paper, we give the first direct and full proof of normalization
for \mumall:
we provide a system of reduction rules for eliminating cuts,
and show that it is weakly normalizing by
using the candidates of reducibility technique~\cite{girard87tcs}.
Establishing strong normalization would be useful,
but we leave it to further work.
Note that the candidates of reducibility technique
is quite modular in that respect:
in fact, \cite{girard87tcs} only provided a proof of weak normalizability
together with a conjectured standardization lemma from which strong
normalization would follow.
Also note, by the way, that Girard's proof applies to proof nets,
while we shall work directly within sequent calculus;
again, the adaptation is quite simple.
Finally, the candidate of reducibility is also modular in that it
relies on a compositional interpretation of connectives,
so that our normalization proof (unlike the earlier one)
should extend easily to exponentials and second-order quantification
using their usual interpretations.

Our proof can be related to similar work in other settings.
While it would technically have been possible to interpret fixed
points as candidates through their second-order encoding,
we found it more appealing
to directly interpret them as fixed point candidates.
In that respect, our work can be seen as an adaptation of the ideas
from \cite{mendler91apal,matthes98csl} to the classical linear setting,
where candidates of reducibility are more naturally expressed
as bi-orthogonals.
This adaptation turns out to work really well, and the interpretation
of least fixed points as least fixed points on candidates yields
a rather natural proof, notably proceeding by meta-level induction
on that fixed point construction.
Also related, of course, is the work on definitions;
although we consider a linear setting and definitions have been
studied in intuitionistic logic, we believe that our proof could be adapted,
and contributes to the understanding of similar notions.
In addition to the limitations of definitions over fixed points,
the only published proof of cut elimination~\cite{momigliano03types,tiu04phd}
further restricts definitions to strictly positive ones,
and limits the coinduction rule to coinvariants of smaller ``level''
than the considered coinductive object.
However, those two restrictions have been removed in
\cite{tiu10unp}, which relies (like our proof) on a full candidate
of reducibility argument rather than the earlier non-parametrized 
reducibility, and essentially follows (unlike our proof)
a second-order encoding of definitions.


We now proceed with the proof, defining cut reductions and then
showing their normalization.
Instead of writing proof trees, we shall often use an informal term notation
for proofs, when missing details can be inferred from the context.
We notably write $cut(\Pi;\Pi')$ for a cut, and more generally
$cut(\Pi;\vec{\Pi'})$ for the sequence of cuts 
$cut(\ldots{}cut(\Pi;\Pi'_1)\ldots;\Pi'_n)$.
We also use notations such as
$\Pi\lltens\Pi'$, $\mu \Pi$, $\nu(\Pi,\Theta)$, etc.
Although the first-order structure does not play a role in the termination
and complexity of reductions, we decided to treat it directly in the proof,
rather than evacuating it in a first step. We tried to keep it readable, but
encourage the reader to translate the most technical parts for the
purely propositional case in order to extract their core.



\subsection{Reduction rules}

Rules reduce instances of the cut rule,
and are separated into auxiliary and main rules.
Most of the rules are the same as for MALL.
For readability, we do not show the signatures $\Sigma$ when they
are not modified by reductions,
leaving to the reader the simple task of inferring them.

\subsubsection{Auxiliary cases}

If a subderivation does not start with a logical rule in which the cut
formula is principal,
its first rule is permuted with the cut.
We only present the commutations for the left subderivation,
the situation being perfectly symmetric.

\begin{longitem}

\item If the subderivation starts with a cut, splitting
  $\Gamma$ into $\Gamma',\Gamma''$, we reduce as follows:
 \disp{\infer[cut]{\vdash \Gamma',\Gamma'',\Delta}{
     \infer[cut]{\vdash \Gamma',\Gamma'',P^\perp}{
       \vdash \Gamma',P^\perp,Q^\perp &
       \vdash \Gamma'',Q}
     & \vdash P,\Delta}
    }{
    \infer[cut]{\vdash \Gamma',\Gamma'',\Delta}{
      \infer[cut]{\vdash \Gamma',\Delta,Q^\perp}{
        \vdash \Gamma',Q^\perp,P^\perp &
        \vdash P,\Delta} &
      \vdash Q,\Gamma''}
    }
  Note that this reduction alone leads to cycles,
  hence our system is trivially not strongly normalizing.
  This is only a minor issue, which could be solved, for example,
  by using proof nets or a classical multi-cut rule (which amounts to
  incorporate the required amount of proof net flexibility into
  sequent calculus).

\item 
Identity between a cut formula and a formula from the conclusion:
$\Gamma$ is restricted to the formula $P$ and
the left subderivation is an axiom.
The cut is deleted and the right subderivation is now directly connected
to the conclusion instead of the cut formula:
\disp{\infer[cut]{\vdash P,\Delta}{
    \infer[init]{\vdash P,P^\perp}{} &
    \infer{\vdash P\vphantom{^\perp},\Delta}{\Pi}}
}{
   \infer{\vdash P,\Delta}{\Pi}
}

\item 
When permuting a cut and a $\lltens$,
the cut is dispatched according to the splitting of the cut formula.
When permuting a cut and a $\llwith$, the cut is duplicated.
The rules $\llpar$ and $\llplus$ are easily commuted down the cut.

\item The commutations of $\top$ and $\perp$ are simple,
  and there is none for $\llone$ nor $\llzero$.

%

\item When $\forall$ is introduced, it is permuted down and
  the signature of the other derivation is extended.
  The $\exists$ rule is permuted down without any problem.

\item There is no commutation for equality ($=$).
  When a disequality (${\neq}$) is permuted down, the other premise is
  duplicated and instantiated:
  \disp{ \infer[cut]{\Sigma; \vdash \Gamma',u\neq v,\Delta}{
      \infer[\neq]{\Sigma; \vdash \Gamma',u\neq v,P^\perp}{
          \All{ 
             \infer{\Sigma\theta; \vdash \Gamma'\theta,P^\perp\theta}{
               \Pi_\theta}}
      } &
      \infer{\Sigma; \vdash P\vphantom{^\perp},\Delta}{\Pi'}}
}{
     \infer[\neq]{\Sigma; \vdash \Gamma',u\neq v,\Delta}{
      \All{
       \infer[cut]{\Sigma\theta; \vdash \Gamma'\theta,\Delta\theta}{
        \infer{\Sigma\theta; \vdash \Gamma'\theta, P^\perp\theta}{\Pi_\theta} 
        &
        \infer{\Sigma\theta; \vdash P\vphantom{^\perp}\theta,\Delta\theta}{
           \Pi'\theta}
       }
      }
     }
}

\item 
$\Gamma=\Gamma',\mu B\t$ and that least fixed point is introduced:
\disp{ \infer[cut]{\vdash \Gamma',\mu B\t,\Delta}{
      \infer[\mu]{\vdash \Gamma',\mu B\t,P^\perp}{
         \vdash \Gamma',B(\mu B)\t,P^\perp}
      & \vdash P,\Delta}
}{
   \infer[\mu]{\vdash \Gamma',\mu B\t,\Delta}{
   \infer[cut]{\vdash \Gamma',B(\mu B)\t,\Delta}{
      \vdash \Gamma',B(\mu B)\t,P^\perp
      & \vdash P,\Delta}}
}

\item 
$\Gamma=\Gamma',\nu B\t$ and that greatest fixed point is introduced:
\disp{
   \infer[cut]{\vdash \Gamma',\nu B\t,\Delta}{
     \infer[\nu]{\vdash \Gamma',\nu B\t,P^\perp}{
        \vdash \Gamma',S\t,P^\perp & \vdash S\x^\perp, BS\x}
     & \vdash P,\Delta
   }
}{
   \infer[\nu]{\vdash \Gamma',\nu B\t,\Delta}{
   \infer[cut]{\vdash \Gamma',S\t,\Delta}{
     \vdash \Gamma',S\t,P^\perp & \vdash P,\Delta}
   & \vdash S\x^\perp, BS\x}
}

\end{longitem}

\subsubsection{Main cases}
When a logical rule is applied on the cut formula on both sides,
one of the following reductions applies.
\begin{longitem}
\item
  In the multiplicative case,
  $\Gamma$ is split into $(\Gamma',\Gamma'')$
  and we cut the subformulas.
\disp{
  \infer[cut]{\vdash \Gamma',\Gamma'',\Delta}{
     \infer[\lltens]{\vdash \Gamma', \Gamma'',P'\lltens P''\vphantom{^\perp}}{
       \vdash \Gamma', P' &
       \vdash \Gamma'', P''} &
     \infer[\llpar]{\vdash P'^\perp\llpar P''^\perp,\Delta}{
       \vdash P'^\perp, P''^\perp, \Delta}}
}{
 \infer[cut]{\vdash \Gamma',\Gamma'',\Delta}{
     \vdash \Gamma', P' &
     \infer[cut]{\vdash P'^\perp, \Gamma'',\Delta}{
       \vdash \Gamma'', P'' &
       \vdash P'^\perp, P''^\perp, \Delta}}
}

\item
  In the additive case, we select the appropriate premise of $\llwith$.
\disp{
  \infer[cut]{\vdash \Gamma,\Delta}{
    \infer[\llplus]{\vdash \Gamma, P_0 \llplus P_1}{\vdash \Gamma, P_i} &
    \infer[\llwith]{\vdash \Delta, P_0^\perp \llwith P_1^\perp}{
      \vdash \Delta, P_0^\perp &
      \vdash \Delta, P_1^\perp}}
}{
  \infer[cut]{\vdash \Gamma,\Delta}{
      \vdash \Gamma, P_i &
      \vdash \Delta, P_i^\perp}
}

\item The $\llone/\bot$ case reduces to
  the subderivation of $\bot$.
  There is no case for $\top/\llzero$.
  
\item In the first-order quantification case,
  we perform a proof instantiation:
  \disp{ \infer[cut]{\Sigma; \vdash \Gamma,\Delta}{
      \infer[\exists]{\Sigma; \vdash \Gamma,\exists x.~ P x^\perp}{
        \infer{\Sigma; \vdash \Gamma,P t^\perp}{\Pi_l}} &
      \infer[\forall]{\Sigma; \vdash \forall x.~ P x\vphantom{^\perp}, \Delta}{
        \infer{\Sigma,x; \vdash P x\vphantom{^\perp}, \Delta}{\Pi_r}}}
}{
     \infer[cut]{\Sigma; \vdash \Gamma,\Delta}{
       \infer{\Sigma; \vdash \Gamma,P t^\perp}{\Pi_l} &
       \infer{\Sigma; \vdash P t, \Delta\vphantom{^\perp}}{\Pi_r[t/x]}}
}

\item The equality case is trivial, the interesting part concerning
  this connective lies in the proof instantiations triggered by other
  reductions. Since we are considering two terms that are already
  equal, we have $csu(u\unif u)=\{id\}$ and
  we can simply reduce to the subderivation corresponding to the
  identity substitution:
  \disp{ \infer[cut]{\Sigma; \vdash \Delta}{
       \infer[=]{\Sigma; \vdash u=u}{} &
       \infer[\neq]{\Sigma; \vdash u\neq u, \Delta}{
          \infer{\Sigma; \vdash \Delta}{\Pi_{id}}
        }}
}{
     \infer{\Sigma; \vdash \Delta}{\Pi_{id}}
}

\item Finally in the fixed point case,
  we make use of the functoriality transformation for propagating
  the coinduction/recursion under $B$:
\disp{
   \infer[cut]{\Sigma; \vdash \Gamma,\Delta}{
      \infer[\mu]{\Sigma; \vdash \Gamma, \mu B\t}{
        \infer{\Sigma; \vdash \Gamma, B(\mu B)\t}{\Pi'_l}} &
      \infer[\nu]{\Sigma; \vdash \Delta, \nu \B\t}{
        \infer{\Sigma; \vdash \Delta, S\t}{\Pi'_r} &
        \infer{\x; \vdash S\x^\perp, \B S\x\vphantom{\t}}{\Theta}
      }}
}{
   \infer[cut]{\Sigma; \vdash \Gamma,\Delta}{
      \infer{\Sigma; \vdash \Delta,S\t}{\Pi'_r} &
      \infer[cut]{\Sigma; \vdash S\t^\perp,\Gamma}{
        \infer{\Sigma; \vdash S\t^\perp, \B S\t}{\Theta[\t/\x]} &
        \infer[cut]{\Sigma; \vdash B S^\perp \t, \Gamma}{
           \infer{\Sigma; \vdash B S^\perp \t, \B(\nu\B)\t}{
             F_{B{\bullet}\t}(\nu(Id,\Theta))} &
           \infer{\Sigma; \vdash B (\mu B)\t, \Gamma}{\Pi'_l}
        }}}
}

\end{longitem}



One-step reduction $\Pi\ra\Pi'$ is defined as the congruence
generated by the above rules.
We now seek to establish that such reductions can be applied
to transform any derivation into a cut-free one.
However, since we are dealing with transfinite (infinitely branching)
proof objects,
there are trivially derivations which cannot be reduced into
a cut-free form in a finite number of steps.
A possibility would be to consider transfinite reduction sequences,
relying on a notion of convergence for defining limits.
A simpler solution, enabled by the fact that our infinity
only happens ``in parallel'', is to define inductively
the transfinite reflexive transitive closure of one-step reduction.

\newcommand{\Ra}{\ra^*}

\begin{definition}[Reflexive transitive closure, $\WN$]
We define inductively $\Pi\Ra\Xi$ to hold when
(1) $\Pi\ra\Xi$,
(2) $\Pi\Ra\Pi'$ and $\Pi'\Ra\Xi$,
or
(3) $\Pi$ and $\Xi$ start with the same rule and their premises are in relation
   (\ie\ for some rule ${\cal R}$,
      $\Pi = {\cal R}(\Pi_i)_i$, $\Xi = {\cal R}(\Xi_i)_i$
      and each $\Pi_i\Ra\Xi_i$).
We say that $\Pi$ \emph{normalizes} when there exists a cut-free
derivation $\Pi'$ such that $\Pi\Ra\Pi'$.
We denote by $\WN$ the set of all normalizing derivations.
\end{definition}


From (1) and (2), it follows that if $\Pi$ reduces to $\Xi$ in $n>0$ steps,
then $\Pi\Ra\Xi$.
From (3) it follows that $\Pi\Ra\Pi$ for any $\Pi$.
In the finitely branching case, \ie\ if the $\neq$ connective was
removed or the system ensured finite $csu$, the role of (3) is only
to ensure reflexivity.
In the presence of infinitely branching rules, however,
it also plays the important role of packaging an infinite number of reductions.
In the finitely branching case, one can show that $\Pi\Ra\Xi$ implies
that there is a finite reduction sequence from $\Pi$ to $\Xi$
(by induction on $\Pi\Ra\Xi$),
and so our definition of normalization corresponds to the usual notion
of weak normalization in that case.


\begin{proposition}
If $\Pi\ra\Xi$ then $\Pi\theta\Ra\Xi\theta$.
\end{proposition}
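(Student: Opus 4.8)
The plan is to argue by induction on the derivation of the one-step reduction $\Pi\ra\Xi$, reading $\ra$ as the congruence generated by the reduction rules. Accordingly there are two kinds of cases: a \emph{base case}, in which $\Pi$ and $\Xi$ match one of the displayed auxiliary or main reduction schemata at the root, and a \emph{congruence case}, in which $\Pi={\cal R}(\Pi_i)_i$ and $\Xi={\cal R}(\Xi_i)_i$ for a common inference rule ${\cal R}$, with $\Pi_{i_0}\ra\Xi_{i_0}$ for a single premise index $i_0$ and $\Pi_i=\Xi_i$ otherwise. In both cases I will use Definition~\ref{def:inst}, which guarantees that $\Pi\theta$ starts with the same rule as $\Pi$ and that its premises are suitable term-substitution instances of the premises of $\Pi$.

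For the congruence case, if ${\cal R}$ is not the $\neq$ rule then $\Pi\theta$ and $\Xi\theta$ both start with ${\cal R}$ (over a possibly extended signature, as in the $\forall$ case) and have premises $(\Pi_i\theta')_i$ and $(\Xi_i\theta')_i$ for the induced substitution $\theta'$; the induction hypothesis gives $\Pi_{i_0}\theta'\Ra\Xi_{i_0}\theta'$, while $\Pi_i\theta'=\Xi_i\theta'\Ra\Xi_i\theta'$ for $i\neq i_0$, so clause~(3) of the definition of $\Ra$ yields $\Pi\theta\Ra\Xi\theta$. The $\neq$ rule is the interesting subcase: if $\Pi={\neq}(\Pi_\sigma)_{\sigma\in csu(u\unif v)}$ with $\Pi_{\sigma_0}\ra\Xi_{\sigma_0}$, then for each $\sigma'\in csu(u\theta\unif v\theta)$ the corresponding premise of $\Pi\theta$ is $\Pi_{\sigma}\theta''$, where $\sigma$ is the (fixed) unifier associated with $\sigma'$ and $\theta\sigma'=\sigma\theta''$; when $\sigma=\sigma_0$ this premise reduces to $\Xi_{\sigma_0}\theta''$ by the induction hypothesis, and otherwise it equals the matching premise of $\Xi\theta$. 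Clause~(3) again packages these — possibly infinitely many, possibly repeated — reductions into $\Pi\theta\Ra\Xi\theta$. This subcase is exactly why the statement is formulated with $\Ra$ rather than a single step $\ra$: one redex in $\Pi$ can give rise to unboundedly many parallel redexes in $\Pi\theta$.

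For the base cases, applying $\theta$ to a redex produces an instance of the same redex, so $\Pi\theta\ra\Xi\theta$ in one step; the routine verifications differ only in which auxiliary commutation of instantiation is invoked. The cut/cut and cut/$init$ commutations, the multiplicative and additive principal cases, and the commutations of $\lltens,\llpar,\llwith,\llplus,\top,\bot,\exists$ are immediate. The $\forall/\exists$ principal case needs that instantiation commutes with term substitution inside a derivation, $(\Pi_r[t/x])\theta=(\Pi_r\theta)[t\theta/x]$. The fixed point principal case needs that functoriality commutes with instantiation (an easy induction on the definition of $F_B$), together with the analogous facts for the $\nu(Id,\Theta)$ packaging and for the substitution $\Theta[\t/\x]$ into the coinvariance derivation. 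The $=/{\neq}$ principal case again reduces to the identity subderivation, using $csu(u\theta\unif u\theta)=\{id\}$. Finally, the $\neq$-commutation auxiliary case requires that instantiating re-indexes the enumerated complete set of unifiers so that the outer substitution recombines with the inner one exactly as dictated by the relations of the form $\theta\sigma'=\sigma\theta''$.

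The main obstacle is the coherent bookkeeping around the $\neq$ rule. Both the $\neq$-commutation base case and the $\neq$ congruence case rely on the paper's standing conventions that complete sets of unifiers, and the map sending $\sigma'\in csu(u\theta\unif v\theta)$ to a $\sigma\in csu(u\unif v)$ together with the witness $\theta'$, are fixed by deterministic functions; and they implicitly use that these choices compose well enough that instantiation of derivations is essentially functorial, $(\Pi\theta)\rho = \Pi(\theta\rho)$ up to the induced selection of $csu$-representatives. Once this coherence is accepted, every individual case is a straightforward check, and the only genuinely non-local phenomenon — the duplication of a single reduction into an infinite parallel family — is absorbed precisely by clause~(3) of the definition of $\Ra$.
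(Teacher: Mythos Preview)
Your proposal is correct and follows essentially the same approach as the paper: an induction splitting into a toplevel-redex case (where $\Pi\theta\ra\Xi\theta$ in one step, with the $\neq$ auxiliary and main cases requiring the unifier bookkeeping you describe) and a congruence case (where clause~(3) of $\Ra$ absorbs the possibly infinite duplication at a $\neq$ node). The paper phrases the induction as being on $\Pi$ rather than on the reduction derivation, and is terser about the non-$\neq$ base cases (it simply asserts they are trivial, without spelling out the commutations for $\forall/\exists$ or the fixed-point case that you mention), but the structure and the identification of the two $\neq$ cases as the only delicate points are the same.
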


\begin{proof}
By induction on $\Pi$.
If the redex is not at toplevel but in an immediate subderivation $\Pi'$,
then the corresponding subderivations in $\Pi\theta$ shall be reduced.
If the first rule of $\Pi$ is disequality, there may be zero, several
or infinitely many subderivations of $\Pi\theta$ of the form $\Pi'\theta'$.
Otherwise there is only one such subderivation.
In both cases,
we show $\Pi\theta\Ra\Xi\theta$ by (3),
using the induction hypothesis for the subderivations where the redex is,
and reflexivity of $\Ra$ for the others.

If the redex is at toplevel, then $\Pi\theta\ra\Xi\theta$.
The only non-trivial cases are the two reductions involving ${\neq}$.
In the auxiliary case, we have:
\[ \xymatrix{
   cut({\neq}(\Pi_\sigma)_{\sigma\in csu(u\unif v)};\Pi_r)
     \ar[r] \ar[d]^{\theta}
   & {\neq}(cut(\Pi_\sigma;\Pi_r\sigma))_\sigma \ar[d]^{\theta} \\
   cut({\neq}(\Pi'_{\sigma'})_{\sigma'\in csu(u\theta\unif v\theta)};
       \Pi_r\theta)
   \ar@{.>}[r]
   & {\neq}(cut(\Pi'_{\sigma'};(\Pi_r\theta)\sigma'))_{\sigma'}
} \]
By Definition~\ref{def:inst}, $\Pi'_{\sigma'} = \Pi_{\sigma}\sigma''$ for
$\theta\sigma' = \sigma\sigma''$, $\sigma\in csu(u\unif v)$.
Applying $\theta$ on the reduct of $\Pi$, we obtain for each $\sigma'$
the subderivation
$cut(\Pi_\sigma;\Pi_r\sigma)\sigma'' =
 cut(\Pi_\sigma\sigma'';\Pi_r\sigma\sigma'') =
 cut(\Pi'_{\sigma'};\Pi_r\theta\sigma')$.
In the main case, $\Pi = cut({\neq}(\Pi_{id});u=u) \ra \Pi_{id}$
and $\Pi\theta = cut({\neq}(\Pi'_{id});u\theta=u\theta)
         \ra \Pi'_{id} = \Pi_{id}\theta$.
\end{proof}

\begin{proposition} \label{prop:theta_wn}
If $\Pi$ is normalizing then so is $\Pi\theta$.
\end{proposition}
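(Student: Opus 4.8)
The plan is to reduce the statement to its one-step analogue, the preceding proposition, by lifting it along the inductive definition of $\Ra$. Concretely, I will first establish the auxiliary claim: \emph{for every substitution $\theta$, if $\Pi\Ra\Xi$ then $\Pi\theta\Ra\Xi\theta$}, proved by (transfinite) induction on the derivation of $\Pi\Ra\Xi$ along its three defining clauses. In case~(1), $\Pi\ra\Xi$, and the preceding proposition gives $\Pi\theta\Ra\Xi\theta$ directly. In case~(2), $\Pi\Ra\Pi''$ and $\Pi''\Ra\Xi$; applying the induction hypothesis twice yields $\Pi\theta\Ra\Pi''\theta$ and $\Pi''\theta\Ra\Xi\theta$, and clause~(2) closes this case.

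The real work is in case~(3), where $\Pi = {\cal R}(\Pi_i)_i$, $\Xi = {\cal R}(\Xi_i)_i$, and $\Pi_i\Ra\Xi_i$ for all $i$. By Definition~\ref{def:inst}, both $\Pi\theta$ and $\Xi\theta$ start with the same rule ${\cal R}$ acting on the corresponding instance of the same principal formula, so by clause~(3) it suffices to put their respective premises in the $\Ra$ relation. When ${\cal R}$ is not the $\neq$ rule, the $i$-th premise of $\Pi\theta$ (\resp $\Xi\theta$) is $\Pi_i\hat\theta$ (\resp $\Xi_i\hat\theta$) for one and the same substitution $\hat\theta$ obtained canonically from $\theta$ (restricted, or extended with a fresh variable in the $\forall$ case), and the induction hypothesis applied to $\Pi_i\Ra\Xi_i$ with substitution $\hat\theta$ gives $\Pi_i\hat\theta\Ra\Xi_i\hat\theta$. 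When ${\cal R}$ is the $\neq$ rule on $u\neq v$, Definition~\ref{def:inst} describes the premise of $\Pi\theta$ indexed by $\sigma'\in csu(u\theta\unif v\theta)$ as $\Pi_\sigma\sigma''$, where $\sigma\in csu(u\unif v)$ and $\sigma''$ (with $\theta\sigma'=\sigma\sigma''$) are picked by the \emph{fixed} mappings that we assumed once and for all; crucially these mappings depend only on the unification problems and on $\theta$, not on the derivation being instantiated. Hence the $\sigma'$-indexed premises of $\Pi\theta$ and $\Xi\theta$ are $\Pi_\sigma\sigma''$ and $\Xi_\sigma\sigma''$ for the \emph{same} $\sigma$ and $\sigma''$, and the induction hypothesis applied to $\Pi_\sigma\Ra\Xi_\sigma$ with substitution $\sigma''$ gives $\Pi_\sigma\sigma''\Ra\Xi_\sigma\sigma''$. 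Clause~(3) then yields $\Pi\theta\Ra\Xi\theta$, completing the auxiliary claim.

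With the claim in hand the proposition is immediate: if $\Pi$ normalizes, choose a cut-free $\Pi'$ with $\Pi\Ra\Pi'$; then $\Pi\theta\Ra\Pi'\theta$, and $\Pi'\theta$ is still cut-free, since instantiating a derivation by a substitution never introduces a cut (it leaves the first rule of every subderivation unchanged, modulo the $csu$ bookkeeping for $\neq$). Therefore $\Pi\theta\in\WN$. I expect the only delicate point to be the $\neq$ subcase of clause~(3): one has to be sure that the reindexing of premises forced by passing from $csu(u\unif v)$ to $csu(u\theta\unif v\theta)$ is carried out \emph{identically} on $\Pi$ and on $\Xi$ — which is exactly what the determinacy conventions on complete sets of unifiers provide — and that the induction hypothesis is applied with the auxiliary substitution $\sigma''$ rather than with $\theta$, which is legitimate precisely because the auxiliary claim is proved for all substitutions at once.
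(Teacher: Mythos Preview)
Your proof is correct and follows essentially the same approach as the paper: the paper's proof simply states that one shows $\Pi\theta\Ra\Pi'\theta$ ``by a simple induction on $\Pi\Ra\Pi'$, making use of the previous proposition,'' which is exactly your auxiliary claim proved by the three-clause induction you spell out. Your careful treatment of the $\neq$ subcase in clause~(3), relying on the determinacy of the $csu$ mappings so that the same $\sigma$ and $\sigma''$ are chosen for both $\Pi\theta$ and $\Xi\theta$, fills in a detail the paper leaves implicit.
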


\begin{proof}
Given a cut-free derivation $\Pi'$ such that $\Pi\Ra\Pi'$,
we show that $\Pi\theta\Ra\Pi'\theta$ by a simple induction on $\Pi\Ra\Pi'$,
making use of the previous proposition.
\end{proof}

\begin{proposition} \label{prop:id_wn}
We say that $\Xi$ is an $Id$-simplification of $\Pi$
if it is obtained from $\Pi$ by reducing an arbitrary,
potentially infinite number of redexes $cut(\Theta;Id)$ into $\Theta$.
If $\Xi$ is an $Id$-simplification of $\Pi$,
and $\Pi$ is normalizable then so is $\Xi$.
\end{proposition}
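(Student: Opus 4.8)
The plan is to prove Proposition~\ref{prop:id_wn} by induction on the derivation $\Pi\Ra\Pi'$ witnessing that $\Pi$ normalizes to a cut-free $\Pi'$, generalizing the statement so that it talks about an arbitrary set of $Id$-redexes being simplified. Concretely, I would show: if $\Xi$ is an $Id$-simplification of $\Pi$ and $\Pi\Ra\Pi'$, then $\Xi\Ra\Xi'$ where $\Xi'$ is an $Id$-simplification of $\Pi'$; since $\Pi'$ is cut-free, any $Id$-simplification of it is simply $\Pi'$ itself (there are no $cut(\Theta;Id)$ redexes left, so the set of simplified redexes is empty and $\Xi'=\Pi'$), giving $\Xi\Ra\Pi'$ and hence normalizability of $\Xi$. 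The key technical point is that ``being an $Id$-simplification'' is stable under the operations that build up $\Ra$.

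The heart of the argument is a one-step lemma: if $\Pi\ra\Xi_0$ (a single reduction step) and $\Xi$ is an $Id$-simplification of $\Pi$, then $\Xi\Ra\Xi_0'$ for some $Id$-simplification $\Xi_0'$ of $\Xi_0$. I would prove this by case analysis on the redex fired in $\Pi$ versus the set $R$ of $Id$-redexes simplified in $\Xi$. If the fired redex is disjoint from all redexes in $R$, then the same reduction can be fired in $\Xi$ and the simplified redexes are carried along unchanged (using the congruence clause~(3) for the infinitely-branching disequality case, exactly as in the proof of Proposition preceding \ref{prop:theta_wn}). The interesting interactions are: (a) the fired redex \emph{is} one of the simplified $cut(\Theta;Id)$ redexes — then in $\Xi$ that cut is already gone, the reduct is already $\Theta$, and $\Xi\Ra\Xi_0'$ holds with zero steps, $\Xi_0'$ being $\Xi$ with one fewer redex recorded; (b) the fired redex is an auxiliary commutation that pushes a rule past a cut whose other premise is an $Id$ recorded in $R$ — here one checks the commuted configuration still reduces, in $\Xi$, to an $Id$-simplification of the $\Pi$-side reduct, possibly needing a further $Id$-step to re-expose the shortcut; (c) the fired redex creates a fresh $cut(\Theta;Id)$ or destroys the principal formula of a recorded $Id$ — e.g. the $init/init$ main case, or functoriality in the $\mu/\nu$ case where $F_B(\nu(Id,\Theta))$ introduces $init$ rules. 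In each of these one verifies by inspection of the reduction rules in Section~3.1 that the $\Xi$-side can mimic the $\Pi$-side up to recording/firing $Id$-redexes.

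Once the one-step lemma is in place, the induction on $\Pi\Ra\Pi'$ is routine: clause~(1) is the lemma; clause~(2) is transitivity, using that an $Id$-simplification of an $Id$-simplification is an $Id$-simplification; clause~(3) is handled componentwise, applying the induction hypothesis to each premise pair $\Pi_i\Ra\Xi_i$ together with the restriction of the simplified redex set to that subderivation, then reassembling with the same rule $\R$ (again clause~(3) of $\Ra$ absorbs the possibly infinite branching). Finiteness issues do not arise because $\Ra$ is defined inductively and $Id$-simplification, like the reductions, only acts ``in parallel''.

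The main obstacle I anticipate is case~(b)/(c) above: making precise what happens when a recorded $Id$-redex is not a subderivation of $\Pi$ in an isolated way but is entangled with the very cut being permuted or the very principal pair being reduced — in particular the functoriality-driven $\mu/\nu$ main reduction, where the reduct contains $F_{B{\bullet}\t}(\nu(Id,\Theta))$ and nested cuts, so one must check that simplifying $Id$-redexes located inside $\Theta$ (or inside the copies of $\Pi'_l,\Pi'_r$) commutes with this reshuffling. The right bookkeeping device is to identify $Id$-redexes by the \emph{location} of their cut formula (locations are preserved by reductions away from the redex, as discussed in Section~\ref{sec:mumall}), so that ``the same'' recorded redex on both sides is unambiguous; with that in hand the entangled cases reduce to a finite inspection of each reduction rule.
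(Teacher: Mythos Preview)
Your proposal is correct and follows essentially the same approach as the paper: generalize to ``if $\Xi$ is an $Id$-simplification of $\Pi$ and $\Pi\Ra\Pi'$ then $\Xi\Ra\Xi'$ for some $Id$-simplification $\Xi'$ of $\Pi'$'', prove a one-step lemma by case analysis on the interaction between the fired redex and the simplified cuts, then conclude by induction on $\Pi\Ra\Pi'$. One remark: your anticipated obstacle (c) is largely illusory, because a simplified cut has an axiom as one premise and therefore can never be the site of a \emph{main} reduction such as $\mu/\nu$; the only reductions that can touch it are the identity reduction itself and auxiliary commutations with the other premise, which is why the paper dispatches the case analysis in a few lines.
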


\begin{proof}
We show more generally
that if $\Xi$ is a simplification of $\Pi$ and $\Pi\Ra\Pi'$
then $\Xi\Ra\Xi'$ for some simplification $\Xi'$ of $\Pi'$.
This is easily done by induction on $\Pi\Ra\Pi'$, once we will have
established the following fact:
\emph{
  If $\Xi$ is a simplification of $\Pi$
  and $\Pi\ra\Pi'$,
  then $\Xi\Ra\Xi'$ for a simplification $\Xi'$ of $\Pi'$.}
If the redex in $\Pi$
does not involve simplified cuts, the same reduction can be 
performed in $\Xi$, and the result is a simplification of $\Pi'$
(note that this could erase or duplicate some simplifications).
If the reduction is one of the simplications
then $\Xi$ itself is a simplification of $\Pi'$. 
If a simplified cut is permuted with another cut (simplified or not)
$\Xi$ is also a simplification of $\Pi'$. 
Finally, other auxiliary reductions on a simplified cut
also yield reducts of which $\Xi$ is already a simplification
(again, simplifications may be erased or duplicated).
\end{proof}


\subsection{Reducibility candidates} 

\begin{definition}[Type]
A proof of type $P$ is a proof with a distinguished formula $P$
among its conclusion sequent.
We denote by $Id_P$ the axiom rule between $P$ and $P^\perp$,
of type $P$.
\end{definition}

In full details, a type should contain a signature under which the formula
is closed and well typed.
That extra level of information would be heavy,
and no real difficulty lies in dealing with it,
and so we prefer to leave it implicit.

If $X$ is a set of proofs,
we shall write $\Pi : P \in X$
as a shortcut for ``$\Pi\in X$ and $\Pi$ has type $P$''.
We say that $\Pi$ and $\Pi'$ are \emph{compatible} if their types are dual
of each other.

\begin{definition}[Orthogonality]
For $\Pi, \Pi' \in \WN$,
we say that $\Pi\Perp\Pi'$ when
for any $\theta$ and $\theta'$ such that $\Pi\theta$ and $\Pi'\theta'$
are compatible,
$cut(\Pi\theta;\Pi'\theta')\in\WN$.
For $\Pi\in\WN$ and $X\subseteq\WN$,
$\Pi\Perp X$ iff $\Pi\Perp\Pi'$ for any $\Pi'\in X$,
and $X^\perp$ is $\set{\Pi\in\WN}{\Pi\Perp X}$.
Finally, for $X,Y\subseteq\WN$, $X\Perp Y$ iff
$\Pi\Perp\Pi'$ for any $\Pi\in X$, $\Pi'\in Y$.
\end{definition}

\begin{definition}[Reducibility candidate]
A \emph{reducibility candidate} $X$ is a set of normalizing proofs
that is equal to its bi-orthogonal, \ie\ $X=X^{\perp\perp}$.
\end{definition}

\newcommand{\lfp}{\mathrm{lfp}}

That kind of construction has some well-known properties\footnote{
  This so-called \emph{polar} construction is used independently
  for reducibility candidates and phase semantics in \cite{girard87tcs},
  but also, for example,
  to define behaviors in ludics \cite{girard01mscs}.
},
which do not rely on the definition of the relation $\Perp$.
For any sets of normalizable derivations $X$ and $Y$,
$X\subseteq Y$ implies $Y^\perp \subseteq X^\perp$
and $(X\cup Y)^\perp=X^\perp \cap Y^\perp$;
moreover, the symmetry of $\Perp$ implies that $X\subseteq X^{\perp\perp}$,
and hence $X^\perp=X^{\perp\perp\perp}$
(in other words, $X^\perp$ is always a candidate).

Reducibility candidates, ordered by inclusion, form a complete lattice:
given an arbitrary collection of candidates $S$,
it is easy to check that
$({\bigcup} S)^{\perp\perp}$ is its least upper bound in the lattice,
and ${\bigcap} S$ its greatest lower bound.
We check the minimality of $({\bigcup} S)^{\perp\perp}$:
any upper bound $Y$ satisfies ${\bigcup} S \subseteq Y$,
and hence $({\bigcup} S)^{\perp\perp} \subseteq Y^{\perp\perp} = Y$.
Concerning the greatest lower bound, the only non-trivial thing
is that it is a candidate, but it suffices to observe that
${\bigcap} S = {\bigcap}_{X\in S} X^{\perp\perp} =
  ({\bigcup}_{X\in S} X^\perp)^\perp$.
The least candidate is $\emptyset^{\perp\perp}$
and the greatest is $\WN$.
Having a complete lattice, we can use the Knaster-Tarski theorem:
any monotonic operator $\phi$ on reducibility candidates
admits a least fixed point $\lfp(\phi)$ in the lattice of candidates.

Our definition of $\Perp$ yields some basic observations about candidates.
They are closed under substitution,
\ie\ $\Pi\in X$ implies that any $\Pi\theta\in X$.
Indeed, $\Pi\in X$ is equivalent to $\Pi\Perp X^\perp$
  which implies $\Pi\theta\Perp X^\perp$ by definition of $\Perp$
  and Proposition \ref{prop:theta_wn}.
Hence, $Id_P$ belongs to any candidate, since
 for any $\Pi\in X^\perp$,
 $cut(Id_{P\theta};\Pi\theta')\ra\Pi\theta'\in X^\perp \subseteq \WN$.
Candidates are also closed under expansion, \ie\
$\Pi'\ra\Pi$ and $\Pi\in X$ imply that $\Pi'\in X$.
Indeed,
  for any $\Xi\in X^\perp$,
  $cut(\Pi'\theta;\Xi\theta')\Ra cut(\Pi\theta;\Xi\theta')$
  by Proposition~\ref{prop:theta_wn},
  and the latter derivation normalizes.

A useful simplification follows from those properties:
for a candidate $X$, $\Pi\Perp X$ if for any $\theta$
and compatible $\Pi'\in X$, $cut(\Pi\theta;\Pi')$ normalizes | there
is no need to explicitly consider instantiations of members of $X$,
and since $Id\in X$, there is no need to show that $\Pi$ normalizes
by Proposition~\ref{prop:id_wn}.

The generalization over all substitutions is the only
novelty in our definitions. It is there to internalize the
fact that proof behaviors are essentially independent of their
first-order structure. By taking this into account from the beginning
in the definition of orthogonality, we obtain bi-orthogonals (behaviors)
that are closed under inessential transformations like substitution.
As a result,
unlike in most candidate of reducibility arguments, our candidates are
untyped. In fact, we could type them up-to first-order details,
\ie\ restrict to sets of proofs whose types have the same
propositional structure. Although that might look more familiar,
we prefer to avoid those unnecessary details.

\begin{definition}[Reducibility]
Let $\Pi$ be a proof of $\vdash P_1,\ldots,P_n$,
and $(X_i)_{i=1\ldots n}$ a collection of reducibility candidates.
We say that $\Pi$ is $(X_1,\ldots,X_n)$-reducible if for any $\theta$
and any derivations $(\Pi'_i : P_i\theta^\perp \in X_i^\perp)_{i=1\ldots n}$,
the derivation
$cut(\Pi\theta;\Pi'_1,\ldots,\Pi'_n)$ normalizes.
\end{definition}

From this definition, it immediately follows that
if $\Pi$ is $(X_1,\ldots,X_n)$-reducible
then so is $\Pi\theta$.
Also observe that $Id_P$ is $(X,X^\perp)$-reducible for any candidate $X$,
since for any $\Pi\in X$ and $\Pi'\in X^\perp$
$cut(Id_{P\theta};\Pi,\Pi')$ reduces to $cut(\Pi;\Pi')$ which normalizes.
Finally, any $(X_1,\ldots,X_n)$-reducible derivation $\Pi$ normalizes,
by Proposition~\ref{prop:id_wn} and the fact that
$cut(\Pi;Id,\ldots,Id)$ normalizes.

\begin{proposition} \label{prop:red-interp}
Let $\Pi$ be a proof of $\vdash P_1,\ldots,P_n$,
let $(X_i)_{i=1\ldots n}$ be a family of candidates,
and let $j$ be an index in $1\ldots n$.
The two following statements are equivalent:
\emph{(1)} $\Pi$ is $(X_1,\ldots,X_n)$-reducible;
\emph{(2)}
  for any $\theta$ and $(\Pi'_i : P_i\theta^\perp \in X_i^\perp)_{i\neq j}$,
  $cut(\Pi\theta;(\Pi'_i)_{i\neq j})\in X_j$.
\end{proposition}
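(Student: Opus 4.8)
The plan is to prove the two implications separately, using the characterization of orthogonality as a cut that normalizes and the fact (established just above) that all candidates contain $Id$ and are closed under substitution and expansion. The key observation is that, modulo these closure properties, ``being $(X_1,\ldots,X_n)$-reducible'' means precisely that plugging reducible arguments into all $n$ conclusions yields a normalizing proof, and statement (2) says the same thing except that the $j$-th slot, instead of being filled, produces a member of $X_j$; so the two should be interderivable by a ``currying/uncurrying'' argument at the $j$-th position.

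First I would prove $(2)\Rightarrow(1)$. Assume (2) and let $\theta$ and $(\Pi'_i : P_i\theta^\perp \in X_i^\perp)_{i=1\ldots n}$ be given, including the slot $i=j$. By (2), $\Xi := cut(\Pi\theta;(\Pi'_i)_{i\neq j})\in X_j$. Now I want $cut(\Xi;\Pi'_j)$ to normalize, i.e.\ $\Xi\Perp\Pi'_j$; but this is immediate since $\Xi\in X_j$, $\Pi'_j\in X_j^\perp$, and the two are compatible (both of type $P_j\theta$, resp.\ $P_j\theta^\perp$). By associativity/commutativity of cut — i.e.\ by permuting the auxiliary cut reductions, which only reorganize the order in which the $\Pi'_i$ are cut against $\Pi\theta$ — $cut(\Xi;\Pi'_j)$ reduces to (a proof $\Ra$-related to) $cut(\Pi\theta;\Pi'_1,\ldots,\Pi'_n)$; since candidates are closed under expansion and the former normalizes, so does the latter. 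One subtlety: the cut-permutation reductions may loop, so rather than literally reducing $cut(\Xi;\Pi'_j)$ to the target, I would argue directly that the target is an expansion-predecessor — or more cleanly, observe that for the purposes of $\WN$ the order of a block of cuts against a fixed proof is irrelevant, which can be packaged as a small lemma about $cut(\Pi;\vec{\Pi'})$ being independent of the order of $\vec{\Pi'}$ up to $\Ra$ in both directions.

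For $(1)\Rightarrow(2)$, assume $\Pi$ is $(X_1,\ldots,X_n)$-reducible, fix $\theta$ and $(\Pi'_i : P_i\theta^\perp\in X_i^\perp)_{i\neq j}$, and set $\Xi := cut(\Pi\theta;(\Pi'_i)_{i\neq j})$, a proof of type $P_j\theta$. To show $\Xi\in X_j = X_j^{\perp\perp}$, it suffices (by the ``useful simplification'' noted before Definition~\emph{Reducibility}) to show that for every substitution $\rho$ and every compatible $\Pi'_j\in X_j^\perp$, $cut(\Xi\rho;\Pi'_j)$ normalizes — there is no separate need to check $\Xi\in\WN$ since $Id\in X_j^\perp$. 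Now $\Xi\rho = cut(\Pi\theta\rho;(\Pi'_i\rho)_{i\neq j})$ (pushing $\rho$ through the cut, using that candidates are substitution-closed so each $\Pi'_i\rho\in X_i^\perp$), so $cut(\Xi\rho;\Pi'_j)$ is, up to reordering the cut block as above, exactly $cut(\Pi\,(\theta\rho);\Pi'_1,\ldots,\Pi'_n)$ with $\theta$ replaced by $\theta\rho$ — which normalizes by the reducibility of $\Pi$ applied to the substitution $\theta\rho$. The main obstacle, in both directions, is precisely this bookkeeping: making the ``reorder the cuts against $\Pi\theta$'' step rigorous despite the non-termination of the cut/cut permutation. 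I expect this is handled exactly as the paper handles similar situations — via Proposition~\ref{prop:id_wn}-style arguments and the closure of candidates under expansion — so that one never needs a genuine normal form, only $\Ra$-relations in the convenient direction. Everything else (pushing $\theta$ and $\rho$ through cuts, compatibility of types, membership of the $\Pi'_i\rho$ in $X_i^\perp$) is routine given the established closure properties.
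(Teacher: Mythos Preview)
Your proposal is correct and follows essentially the same approach as the paper: both directions are proved by reorganizing the block of cuts via the cut/cut permutation, together with the bi-orthogonal characterization $X_j = X_j^{\perp\perp}$ and closure of candidates under substitution. Your $(1)\Rightarrow(2)$ argument matches the paper's almost verbatim (the paper's $\sigma$ is your $\rho$).

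One minor point worth cleaning up: in your $(2)\Rightarrow(1)$ direction, your first formulation has the reduction going the wrong way --- from ``$cut(\Xi;\Pi'_j)$ normalizes'' and ``$cut(\Xi;\Pi'_j)$ reduces to the target'' you cannot conclude the target normalizes. You catch this yourself (``the target is an expansion-predecessor''), and indeed the paper simply states the reduction in the correct direction: $cut(\Pi\theta;\Pi'_1,\ldots,\Pi'_n)$ reduces, via cut permutations, to $cut(cut(\Pi\theta;(\Pi'_i)_{i\neq j});\Pi'_j)$, which normalizes. No auxiliary lemma about order-irrelevance is needed; one just performs the permutations in the direction that moves $\Pi'_j$ outermost.
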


\begin{proof}
{(1) $\Rightarrow$ (2)}:
Given such $\theta$ and $(\Pi'_i)_{i\neq j}$,
we show that the derivation
$cut(\Pi\theta;(\Pi'_i)_{i\neq j}) \in X_j$.
Since $X_j=X_j^{\perp\perp}$,
it is equivalent to show that our derivation is in the orthogonal
of $X_j^\perp$.
For each $\sigma$ and $\Pi'': P_j\theta\sigma^\perp \in X_j^\perp$,
we have to show that
$cut(cut(\Pi\theta;(\Pi'_i)_{i\neq j})\sigma;\Pi'')$ normalizes.
Using cut permutation reductions, we reduce it into
$cut(\Pi\theta\sigma;
  \Pi'_1\sigma,\ldots,\Pi'',\ldots, \Pi'_n\sigma)$,
which normalizes by reducibility of $\Pi$.
{(2) $\Rightarrow$ (1)} is similar:
we have to show that
$cut(\Pi\theta;\Pi'_1,\ldots \Pi'_n)$ normalizes,
we reduce it into
$cut(cut(\Pi\theta;(\Pi'_i)_{i\neq j});\Pi'_j)$
which normalizes since $\Pi'_j\in X_j^\perp$
and the left subderivation belongs to $X_j$ by hypothesis.
\end{proof}

\subsection{Interpretation} \label{sec:interpmu}

We interpret formulas as reducibility candidates,
extending Girard's interpretation of MALL connectives~\cite{girard87tcs}.

\begin{definition}[Interpretation] \label{def:interp}
Let $P$ be a formula
and $\E$ an environment
mapping each $n$-ary predicate variable $p$ occurring in $P$ to a candidate.
We define by induction on $P$ a candidate
called \emph{interpretation of $P$ under $\E$} and
denoted by $\interp{P}^\E$.
\[ 
\interp{p\t}^\E = \E(p)
\quad
\interp{a\vec{u}}^\E =
  \All{\infer{\vdash a\vec{v}^\perp, a\vec{v}}{}}^{\perp\perp}
\quad
\interp{\llzero}^\E = \emptyset^{\perp\perp}
\quad
\interp{\llone}^\E = \All{\infer{\vdash \llone}{}}^{\perp\perp}
\]
\vspace{-0.5cm}{\allowdisplaybreaks
\begin{eqnarray*}
\interp{P \lltens P'}^\E &=&
  \Set{
     \infer{\vdash \Delta,\Delta',Q\lltens Q'}{
       \infer{\vdash \Delta,Q\vphantom{'}}{\Pi} &
       \infer{\vdash \Delta',Q'}{\Pi'}}
  }{
     \Pi:Q \in\interp{P}^\E, \Pi':Q' \in\interp{P'}^\E
  }^{\perp\perp} \\
\interp{P_0 \llplus P_1}^\E &=&
  \Set{
     \infer{\vdash \Delta,Q_0\llplus Q_1}{\infer{\vdash \Delta,Q_i}{\Pi}}
  }{
     i \in \{0,1\}, \Pi : Q_i \in \interp{P_i}^\E
  }^{\perp\perp} \\
\interp{\exists x.~ P x}^\E &=&
  \Set{
   \infer{\vdash \Gamma, \exists x.~ Q x}{
        \infer{\vdash \Gamma, Q t}{\Pi}}
  }{
   \Pi : Q t \in \interp{P t}^\E 
  }^{\perp\perp} \\
\interp{u=v}^\E &=& \All{\infer{\vdash t=t}{}}^{\perp\perp}
\\
\interp{\mu B\t}^\E &=&
   \lfp(
      X \mapsto
      \set{\mu \Pi}{
        \Pi : B (\mu B) \vec{t'} \in [Bp\t]^{\E,p\mapsto X}
      }^{\perp\perp}
   ) \\
\interp{P}^\E &=& (\interp{P^\perp}^\E)^\perp
   \mbox{~ for all other cases}
\end{eqnarray*}}
The validity of that definition relies on a few observations.
It is easy to check that we do only form (bi-)orthogonals of
sets of proofs that are normalizing.
More importantly, the existence of least fixed point
candidates relies on the monotonicity of interpretations,
inherited from that of operators.
More generally,
$\interp{P}^\E$ is monotonic in $\E(p)$ if $p$ occurs only positively
in $P$, and antimonotonic in $\E(p)$ if $p$ occurs only negatively.
The two statements are proved simultaneously, following the
definition by induction on $P$.
Except for the least fixed point case,
it is trivial to check that (anti)monotonicity is preserved by the
first clauses of Definition~\ref{def:interp}, and in the case
of the last clause $\interp{P}^\E = (\interp{P^\perp}^\E)^\perp$
each of our two statements is derived from the other.
Let us now consider the definition of $[\mu B \t]^\E$,
written $\lfp(\phi_\E)$ for short.
First, the construction is well-defined:
by induction hypothesis and monotonicity of $B$,
$\interp{B q \t}^{\E,q\mapsto X}$ is monotonic in $X$,
and hence $\phi_\E$ is also monotonic and admits a least fixed point.
We then show that $\lfp(\phi_\E)$ is monotonic in $\E(p)$ when
$p$ occurs only positively in $B$ | antimonotonicity would be obtained
in a symmetric way.
If $\E$ and $\E'$ differ only on $p$ and $\E(p)\subseteq \E'(p)$,
we obtain by induction hypothesis that
$\phi_\E(X)\subseteq \phi_{\E'}(X)$ for any candidate $X$, and in particular
$\phi_\E(\lfp(\phi_{\E'})) \subseteq \phi_{\E'}(\lfp(\phi_{\E'})) =
   \lfp(\phi_{\E'})$,
\ie\ $\lfp(\phi_{\E'})$ is a prefixed point of $\phi_\E$,
and thus $\lfp(\phi_\E)\subseteq \lfp(\phi_{\E'})$, that is to say
$[\mu B \t]^\E$ is monotonic in $\E(p)$.
\end{definition}

\begin{proposition}
For any $P$ and $\E$, $([P]^\E)^\perp = [P^\perp]^\E$.
\end{proposition}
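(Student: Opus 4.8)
The plan is to prove $([P]^\E)^\perp = [P^\perp]^\E$ by induction on the structure of $P$, mirroring the inductive definition of the interpretation itself. The heart of the argument is that the definition was carefully set up so that only one clause is ``active'' for each connective: for positive connectives ($\lltens$, $\llplus$, $\exists$, $=$, $\mu$, $\llone$, $\llzero$, atoms) the interpretation is given by an explicit bi-orthogonal, while for the negative connectives the interpretation is defined \emph{by} the equation $[P]^\E = ([P^\perp]^\E)^\perp$. So for a negative $P$ with positive dual $P^\perp$, the claim is almost definitional: $([P]^\E)^\perp = (([P^\perp]^\E)^\perp)^\perp = [P^\perp]^\E$, using $X^{\perp\perp\perp} = X^\perp$ (valid since $[P^\perp]^\E$, being a bi-orthogonal, is a candidate). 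Conversely, for positive $P$, we have $[P^\perp]^\E = ([P]^\E)^\perp$ directly by the last clause applied to $P^\perp$. The remaining content is checking the atomic cases ($p\t$, where $[p^\perp\t]^\E$ must be read through negation on environments — but note predicate variables are constrained to occur only positively, so in fact $p^\perp$ never appears in a formula we interpret; still, one checks $[p\t]^\E = \E(p)$ is a candidate so its bi-orthogonal behaves), and the self-dual-looking primitives like $\llone/\bot$ and $=/\neq$, which are immediate.

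First I would state the two mutually-dependent facts I will use repeatedly: (i) for any set $X$ of normalizing proofs, $X^{\perp\perp\perp} = X^\perp$ and $X^\perp$ is always a candidate (both recalled in the excerpt); and (ii) consequently, whenever $[Q]^\E$ has been \emph{defined} as some $Y^{\perp\perp}$, it is a candidate, so $(([Q]^\E)^\perp)^\perp = [Q]^\E$. Then I would organize the induction by cases on the top connective of $P$, grouping them as: (a) $P$ is one of the explicitly-bi-orthogonal positive formulas — then $P^\perp$ falls under the ``all other cases'' clause, so $[P^\perp]^\E = ([P]^\E)^\perp$ by definition, which is exactly the claim; (b) $P$ is negative — then $P$ itself falls under ``all other cases'', so $[P]^\E = ([P^\perp]^\E)^\perp$ by definition, and taking orthogonals of both sides and using (ii) on $[P^\perp]^\E$ (which is one of the explicit bi-orthogonals, hence a candidate) gives $([P]^\E)^\perp = [P^\perp]^\E$; (c) $P = p\t$ — then since predicate variables occur only positively there is nothing to dualize within the interpreted syntax, and one simply notes the statement is vacuous or handled by the environment bookkeeping. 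The cases $\llone/\bot$, $\llzero/\top$, $=/\neq$ are instances of (a)/(b).

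The only step that needs genuine care is making precise the interplay at fixed points, since $[\mu B\t]^\E = \lfp(\phi_\E)$ is an explicit bi-orthogonal (each stage is) so it is a candidate, and $[\nu \B\t]^\E = [(\mu B\t)^\perp]^\E = ([\mu B\t]^\E)^\perp$ by the last clause — giving exactly $([\mu B\t]^\E)^\perp = [\nu\B\t]^\E$ after one more orthogonal and (ii). I expect \textbf{the main obstacle} to be purely presentational rather than mathematical: one must be scrupulous that the definition really does route every connective through exactly one clause (no formula is covered by two incompatible clauses, and the ``all other cases'' clause genuinely applies to all and only the negative connectives plus $p^\perp\t$), and that the candidates being orthogonalized are always bi-orthogonals so that triple-orthogonal collapse is legitimate. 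Once that routing is spelled out, each case is a one-line equational manipulation, and the induction goes through with no further input beyond the polar-construction identities already established in the excerpt.
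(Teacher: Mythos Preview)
Your proposal is correct and matches the paper's own argument, which is stated in one line: the proposition ``immediately follows from Definition~\ref{def:interp} by involutivity of both negations (on formulas and on candidates).'' You have simply unpacked this: the definition routes every negative connective through the last clause $[P]^\E = ([P^\perp]^\E)^\perp$, and every positive clause yields a candidate, so the involutivity $X^{\perp\perp}=X$ for candidates (together with involutivity of syntactic negation) gives the result by a direct case split with no actual use of an induction hypothesis. Two small clean-ups: your case (c) for $p\t$ is over-thought --- it is just an instance of case (a), since $[p\t]^\E=\E(p)$ is an explicit clause and $p^\perp\t$ falls under ``all other cases''; and the reason $\lfp(\phi_\E)$ is a candidate is not that ``each stage is'' but that it is constructed (via Knaster--Tarski) as a point of the complete lattice of candidates.
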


\begin{proposition} \label{prop:interp_subst}
For any $P$, $\theta$ and $\E$, $\interp{P}^\E = \interp{P\theta}^\E$.
\end{proposition}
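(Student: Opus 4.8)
The plan is to prove the statement by structural induction on the formula $P$, following the cases of Definition~\ref{def:interp}, and to carry the induction out uniformly in the environment $\E$ so that the induction hypothesis stays available when $\E$ is extended, as happens in the fixed point case. Two general observations will keep the induction short. First, a term substitution $\theta$ acts only on term variables and so leaves predicate variables untouched: $P\theta$ has exactly the same free predicate variables as $P$, so $\E$ is a legitimate environment for both. Second, $\theta$ commutes with negation (negation only rewrites connectives, never terms), so the clause $\interp{P}^\E=(\interp{P^\perp}^\E)^\perp$ transports the property from $P^\perp$ to $P$; since the interpretation is defined directly only on the positive connectives (together with atoms, units, $=$, $p\t$ and $\mu$) and on the negative connectives only through that clause, it will suffice to treat the direct cases.

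The base cases require nothing: the sets defining $\interp{a\vec u}^\E$, $\interp{\llone}^\E$, $\interp{\llzero}^\E$ and $\interp{u=v}^\E$ do not mention the terms occurring in the formula at all (they range over all instances of a fixed rule at a fixed type), and $\interp{p\t}^\E=\E(p)$ does not depend on $\t$; since $\theta$ preserves the relevant types, applying it changes nothing. The $\lltens$ and $\llplus$ cases are immediate from the induction hypothesis: $\theta$ distributes over the connective, and each such interpretation is the bi-orthogonal of a set of proofs that refers to its immediate subformulas only through their interpretations, which the induction hypothesis leaves unchanged.

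The existential case is the only one calling for a little bookkeeping. Choosing the bound variable $x$ fresh for $\theta$ (renaming if necessary, harmless modulo $\alpha\beta\eta$), we have $(\exists_\gamma x.~ P x)\theta=\exists_\gamma x.~ (P\theta)x$, and by Definition~\ref{def:interp} the interpretation of the latter is the bi-orthogonal of the family, indexed by a term $t$ together with a proof $\Pi:Qt\in\interp{(P\theta)t}^\E$, of the corresponding existential introductions. Now $(P\theta)t$ equals $(Px)\rho$ for a single composed substitution $\rho$ built from $\theta$ and $[t/x]$, so applying the induction hypothesis to the smaller formula $Px$ twice gives $\interp{(P\theta)t}^\E=\interp{Px}^\E=\interp{(Px)[t/x]}^\E=\interp{Pt}^\E$. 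Hence the two indexing families coincide term by term, so their bi-orthogonals coincide; the universal case then follows by the negation clause.

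Finally, the fixed point case uses the standing assumption that predicate operators contain no free term variable, hence $B\theta=B$ and $(\mu B\t)\theta=\mu B(\t\theta)$. Here $\interp{\mu B\t}^\E=\lfp(\phi_\E)$ where $\phi_\E$ refers to $\t$ only through $\interp{Bp\t}^{\E,p\mapsto X}$, and likewise $\interp{\mu B(\t\theta)}^\E=\lfp(\phi'_\E)$ with $\phi'_\E$ using $\interp{Bp(\t\theta)}^{\E,p\mapsto X}$. Since $(Bp\t)\theta=Bp(\t\theta)$ — again $B$ has no free term variable and $p$ is a predicate variable — the induction hypothesis applied to the operator body $Bp\t$, with the environment extended by $p\mapsto X$, yields $\interp{Bp\t}^{\E,p\mapsto X}=\interp{Bp(\t\theta)}^{\E,p\mapsto X}$ for every candidate $X$; thus $\phi_\E=\phi'_\E$ and their least fixed points agree. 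Appealing to the induction hypothesis on $Bp\t$ is legitimate for exactly the reason that Definition~\ref{def:interp} is itself a valid definition by induction on $P$ — one descends into the operator body while treating $p$ as atomic — so no new well-foundedness argument is needed. I expect no real obstacle: this is a routine structural induction, the only mildly delicate points being the capture-avoiding renaming and substitution composition in the existential case and the use of $B\theta=B$ in the fixed point case.
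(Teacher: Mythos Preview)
Your proposal is correct and follows the same approach as the paper, which simply notes that the proposition is proved ``by induction on $P$''; you have merely fleshed out the cases that the paper leaves implicit. The only point worth flagging is that your handling of the negation clause and of the fixed point body implicitly relies on the same well-foundedness argument that underlies Definition~\ref{def:interp} itself, which you correctly acknowledge.
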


\begin{proposition}
For any $\E$, monotonic $B$ and $S$,
$\interp{B S}^\E =
 \interp{B p}^{\E{}, p\mapsto \interp{S}^\E}$.
\end{proposition}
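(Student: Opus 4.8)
The plan is to prove the identity by structural induction on the operator $B$ --- more precisely, on the formula $Bp\x$ viewed as an abstract syntax tree with a distinguished predicate variable $p$ --- following the same scheme that underlies Definition~\ref{def:interp}. Throughout I read $\interp{BS}^\E$ and $\interp{Bp}^{\E,p\mapsto\interp{S}^\E}$ as the interpretations of the formulas $BS\x$ and $Bp\x$ for a fixed choice of fresh variables $\x$; the particular choice is irrelevant by Proposition~\ref{prop:interp_subst}, which will also be invoked each time a predicate gets applied to some terms, in order to discard those terms. The second ingredient is the negation proposition $(\interp{P}^\E)^\perp=\interp{P^\perp}^\E$: it makes the interpretation \emph{compositional} even on the negative connectives, which Definition~\ref{def:interp} introduces only indirectly through the clause $\interp{P}^\E=(\interp{P^\perp}^\E)^\perp$. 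Concretely, for each connective $C$ the candidate $\interp{C(Q_1,\ldots,Q_k)}^\E$ is a fixed function of $\interp{Q_1}^\E,\ldots,\interp{Q_k}^\E$: directly from the definition for $\lltens$, $\llplus$, $\exists$ and the units and equality, and for $\llpar$, $\bot$, $\llwith$, $\top$, $\forall$, $\neq$ by rewriting the dual clause with the negation proposition.

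With compositionality recorded, the induction is routine. If $p$ does not occur in $Bp\x$, then $BS\x$ and $Bp\x$ are literally the same formula and $p$ does not occur in it, so both sides equal $\interp{Bp\x}^\E$. If $B=\lambda p\lambda\x.\,p\,\vec u$, then $\interp{Bp\x}^{\E,p\mapsto\interp{S}^\E}=\interp{p\,\vec u}^{\E,p\mapsto\interp{S}^\E}=\interp{S}^\E$, while $\interp{BS\x}^\E=\interp{S\,\vec u}^\E=\interp{S\x}^\E=\interp{S}^\E$ by Proposition~\ref{prop:interp_subst}. For any other connective, the toplevel of $Bp\x$ is some $C$ whose immediate subformulas have the form $B_ip\x$ for operators $B_i$ in which $p$ occurs strictly shallower; the induction hypothesis on each $B_i$ gives $\interp{B_iS\x}^\E=\interp{B_ip\x}^{\E,p\mapsto\interp{S}^\E}$, and compositionality propagates the equality through $C$. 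In the negative cases this simply passes through the dual clause: occurrences of $p$ in $Bp\x$ become occurrences of $p^\perp$ in $(Bp\x)^\perp$, whose interpretation unwinds to $(\E(p))^\perp$, so substituting $S$ for $p$ on one side matches substituting $S^\perp$ for $p^\perp$ on the other.

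The one case demanding more than unfolding is the least fixed point, say $Bp\x=\mu(\lambda q\lambda\vec z.\,D\,p\,q\,\vec z)\,\vec u$ with $p$ occurring in $D$, so that $BS\x=\mu(\lambda q\lambda\vec z.\,D\,S\,q\,\vec z)\,\vec u$. By Definition~\ref{def:interp}, $\interp{BS\x}^\E$ and $\interp{Bp\x}^{\E,p\mapsto\interp{S}^\E}$ are the least fixed points of operators on candidates of the shape $X\mapsto\{\mu\Pi:\ldots\in\interp{D\,S\,q\,\vec z}^{\E,q\mapsto X}\}^{\perp\perp}$ and $X\mapsto\{\mu\Pi:\ldots\in\interp{D\,p\,q\,\vec z}^{\E,p\mapsto\interp{S}^\E,q\mapsto X}\}^{\perp\perp}$ respectively, where the elided proof-type side conditions coincide up to first-order details and are therefore harmless by Proposition~\ref{prop:interp_subst}. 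Since the least fixed point is determined by the operator, it suffices to show the two operators agree pointwise, \ie\ that $\interp{D\,S\,q\,\vec z}^{\E,q\mapsto X}=\interp{D\,p\,q\,\vec z}^{\E,p\mapsto\interp{S}^\E,q\mapsto X}$ for every candidate $X$; but $\lambda p.\,D\,p\,q\,\vec z$ is an operator whose body is structurally smaller than $Bp\x$ in the ordering that drives Definition~\ref{def:interp}, so this is exactly the induction hypothesis with the enlarged environment $\E,q\mapsto X$ (using also that the fresh $q$ does not occur in $S$, so $\interp{S}^{\E,q\mapsto X}=\interp{S}^\E$). I expect this fixed-point case to be the main obstacle --- not because it is deep, but because one must check that the operator handed to the recursive least-fixed-point construction genuinely shrinks and that the bookkeeping around the nested $\mu$ and its term arguments is harmless; the remainder is pure compositional unwinding.
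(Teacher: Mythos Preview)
Your proposal is correct and follows essentially the same approach as the paper, which simply says the result is proved ``by induction on $B$'' without further detail. Your write-up fills in that induction carefully, handling the vacuous, variable, compositional, and fixed-point cases in the expected way; the only mildly delicate point---that the body of a nested $\mu$ is structurally smaller so the induction hypothesis applies under the extended environment---is exactly the one you flag.
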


Those three propositions are easy to prove,
the first one immediately following from Definition~\ref{def:interp}
by involutivity of both negations (on formulas and on candidates),
the other two by induction (respectively on $P$ and $B$).
Proposition~\ref{prop:interp_subst} has an important
consequence: $\Pi\in[P]$ implies $\Pi\theta\in [P\theta]$,
\ie\ our interpretation is independent of first-order aspects.
This explains some probably surprising parts of the definition
such as the interpretation of least fixed points, where it
seems that we are not allowing the parameter of the fixed point
to change from one instance to its recursive occurrences.

In the following,
when the term structure is irrelevant or confusing, we shall write
$\interp{S}^\E$ for $\interp{S\t}^\E$.
For a predicate operator expression
$(\lambda \vec{p}.~ B\vec{p})$ of first-order arity $0$,
we shall write $\interp{B}^\E$ for
$\vec{X} \mapsto \interp{B\vec{p}}^{\E,(p_i\mapsto X_i)_i}$.
When even more concision is desirable,
we may also write $\interp{B\vec{X}}^\E$ for
$\interp{B}^\E\vec{X}$.
Finally, we simply write $[P]$ and $[B]$ when $\E$ is empty.

\begin{lemma} \label{lem:fun}
  Let $X$ and $Y$ be two reducibility candidates,
  and $\Pi$ be a proof of $\vdash P\x,Q\x$ that is $(X,Y)$-reducible.
  Then $F_B(\Pi)$ is $([B]X,[\B]Y)$-reducible.
\end{lemma}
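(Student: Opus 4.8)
The plan is to prove Lemma~\ref{lem:fun} by induction on the operator $B$ — more precisely, on the maximum depth of occurrences of the functoriality variable $p$ in $B\,p$ — following exactly the case analysis that defines $F_B(\Pi)$. Throughout, I would work modulo first-order details: by Proposition~\ref{prop:interp_subst} the interpretations $[B]X$ and $[\B]Y$ are insensitive to term substitutions, and by Proposition~\ref{prop:red-interp} the reducibility of a derivation of $\vdash BP,\B Q$ may be checked in whichever of its two conclusion slots is the more convenient at each step.

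For the base cases I expect no difficulty. When $B=\lambda p.~P'$, $F_B(\Pi)$ is $Id_{P'}$ and $[B]X=[P']$, $[\B]Y=[P'^\perp]=([P'])^\perp$, so the claim is an instance of the earlier observation that $Id_P$ is $(X,X^\perp)$-reducible for every candidate $X$. When $B=\lambda p.~p\t$, $F_B(\Pi)=\Pi[\t/\x]$ while $[B]X=X$ and $[\B]Y=Y$, and $\Pi[\t/\x]$ is still $(X,Y)$-reducible because reducibility is preserved by proof instantiation.

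The $\eta$-expansion cases ($\lltens$, $\llplus$, $\exists$, and dually — via $[P]^\E=([P^\perp]^\E)^\perp$ — $\llpar$, $\llwith$, $\forall$; units, $=$ and $\neq$ being absorbed into the vacuous-abstraction case) should follow the standard candidate-of-reducibility pattern, essentially as in Girard's treatment of MALL. In each such case $F_B(\Pi)$ applies the introduction rule(s) of the toplevel connective of $B$ to the derivations $F_{B_i}(\Pi)$, and the matching clause of Definition~\ref{def:interp} defines $[B]X$ precisely as the bi-orthogonal of the set of proofs built by that very rule from members of the $[B_i]X$. So I would use Proposition~\ref{prop:red-interp} to turn the goal into a membership claim $cut(F_B(\Pi)\theta;\Pi'')\in[B]X$, cut this against an arbitrary element of $([B]X)^\perp$, permute the cut down to the freshly introduced connective, fire the corresponding main cut reduction, and conclude by the induction hypotheses on the $F_{B_i}(\Pi)$, together with the closure of candidates under substitution and expansion.

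The hard part will be the fixed point case $B=\lambda p.~\mu(B'\,p)\t$, where $[B]X$ is a genuine least fixed point $\lfp(\phi)$ of a monotone operator $\phi$ on candidates and $F_B(\Pi)$ is the $\nu$-rule derivation exhibiting $\nu(\overline{B'}P^\perp)$ as a coinvariant of $\nu(\overline{B'}Q)$. My expectation is that the argument here is the semantic counterpart of the main $\mu/\nu$ cut reduction: one unfolds $\lfp(\phi)=\phi(\lfp(\phi))$ once, so that, inside the orthogonality check and after the relevant cut permutations and a single application of the $\mu/\nu$ main reduction, showing $cut(F_B(\Pi)\theta;\Pi'')\in[B]X$ comes down to a reducibility statement about $F_{(\lambda p.~B'\,p\,(\mu(B'P))\x)}(\Pi)$. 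Since the operator $\lambda p.~B'\,p\,(\mu(B'P))\x$ has strictly smaller $p$-depth than $B$ (the leading $\mu$ has been stripped off), the induction hypothesis applies and closes the case. The two things I would be most careful about are, first, the environment bookkeeping — checking that the coinvariant's interpretation and the components of $\phi$ line up, using Proposition~\ref{prop:interp_subst} to ignore the mismatch between the literal parameter $P$ appearing in the derivation and the abstract candidate $X$ — and, second, that the recursive use of functoriality is genuinely on a structurally smaller operator, so that the induction is well-founded and there is no vicious circularity with the adequacy theorem that will subsequently rely on this lemma.
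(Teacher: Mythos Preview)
Your plan for the base cases and the MALL $\eta$-expansion cases is fine and matches the paper. The fixed point case, however, does not close the way you sketch it, and the repair is structural rather than cosmetic.

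First, ``one unfolding of $\lfp(\phi)$ plus one $\mu/\nu$ main reduction'' is not enough. The derivation $\Pi''$ you cut against is an arbitrary element of a biorthogonal; it need not start with $\mu$, so the main reduction does not fire. Even when it does fire, the reduct contains $F_{B{\bullet}\t}(\nu(Id,\Theta))$ --- a fresh functoriality applied not to $\Pi$ but to the $\nu$-derivation $\nu(Id,\Theta)$ --- so you need to know that $\nu(Id,\Theta)$ is itself reducible in the right candidate. That is precisely Lemma~\ref{lem:nu}, whose proof in turn exhibits a candidate $Y':=\{cut(\nu(Id_{S\t},\Theta);\Xi):\Xi\in X\}^\perp$, shows it is a pre-fixed point of $\phi$ (this is where the ``induction on $\lfp$'' really happens), and while doing so calls back into Lemma~\ref{lem:fun}. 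The paper therefore proves the two lemmas \emph{simultaneously} by induction on $B$; proving Lemma~\ref{lem:fun} alone, as you propose, leaves you stuck at the outer $\nu$.

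Second, the induction hypothesis you want on $\lambda p.\;B'\,p\,(\mu(B'P))\x$ yields reducibility relative to $[B'p(\mu(B'P))\x]^{p\mapsto X}$, in which the inner $\mu(B'P)$ is interpreted with $P$ at its own semantics $[P]$, not at $X$. Proposition~\ref{prop:interp_subst} only neutralises \emph{term} substitutions, so it cannot bridge this mismatch. The paper's fix is to strengthen the statement: carry a vector of auxiliary predicates $\vec{A}$ interpreted by \emph{arbitrary} candidates $\vec{Z}$ (not necessarily $[\vec{A}]$), and in the $\mu$ case set $A_{n+1}:=\mu(B'\vec{A}P)$ but $Z_{n+1}:=[\mu(B'\vec{Z}X)]$, decoupling the syntactic coinvariant from its intended interpretation. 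The same generalisation is forced again inside Lemma~\ref{lem:nu}, where one must apply Lemma~\ref{lem:fun} with one of the $Z_i$ equal to the freshly constructed $Y'$, which is not the interpretation of any formula.
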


\begin{lemma} \label{lem:nu}
  Let $X$ be a candidate
  and $\Theta$ a derivation of $\vdash S\x^\perp, \B S \x$ that
  is $(X^\perp,[\B]X)$-reducible.
  Then
  $\nu(Id_{S\t},\Theta)$ is $(X^\perp,[\nu \B\t])$-reducible for any $\t$.
\end{lemma}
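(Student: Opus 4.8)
The plan is to route everything through Proposition~\ref{prop:red-interp}. Taking the index $j$ there to be the position of $S\t^\perp$, recalling that $(\interp{\nu\B\t})^\perp = \interp{\mu B\t} = \interp{\mu B}$ is insensitive to first-order details (Proposition~\ref{prop:interp_subst}), and noting that $\nu(Id_{S\t},\Theta)\theta = \nu(Id_{S\t\theta},\Theta)$ since $\Theta$ is closed over its own signature $\x$, the lemma becomes equivalent to the statement: for every $\t$ and every $\Pi_\mu : \mu B\t \in \interp{\mu B}$, the derivation $cut(\nu(Id_{S\t},\Theta);\Pi_\mu)$ lies in $X^\perp$. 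I would prove this by the Knaster--Tarski route on $\interp{\mu B} = \lfp(\phi)$, where $\phi(Z) = \{\,\mu\Pi \mid \Pi : B(\mu B)\x \in \interp{Bp\x}^{p\mapsto Z}\,\}^{\perp\perp}$: it suffices to exhibit a reducibility candidate $\mathcal{Z}$ that is a pre-fixed point of $\phi$ and coincides with $\{\,\Pi\in\WN \mid cut(\nu(Id_{S\t},\Theta);\Pi)\in X^\perp\,\}$, for then $\interp{\mu B}=\lfp(\phi)\subseteq\mathcal{Z}$ delivers exactly the desired statement.

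For $\mathcal{Z}$ I would take $\mathcal{Y}^\perp$, where $\mathcal{Y}$ is the set of derivations $cut(\nu(Id_{S\t},\Theta);\Pi_X)$ with $\Pi_X : S\t\in X$; these are proofs of type $\nu\B\t$, and each lies in $\WN$ for a trivial reason, as the $\nu$-commutation reduction followed by absorption of a $cut(Id;\cdot)$ turns it into $\nu(\Pi_X',\Theta)$, which normalizes since $\Pi_X$ and $\Theta$ do. Hence $\mathcal{Z}=\mathcal{Y}^\perp$ is a genuine candidate. Using the cut-permutation reductions exactly as in the proof of Proposition~\ref{prop:red-interp}, together with the closure of candidates under substitution and expansion, one checks that $\mathcal{Z}$ indeed coincides with $\{\,\Pi\in\WN \mid cut(\nu(Id_{S\t},\Theta);\Pi)\in X^\perp\,\}$; read through Proposition~\ref{prop:red-interp} once more, this is precisely the assertion that \emph{$\nu(Id_{S\x},\Theta)$ is $(X^\perp,\mathcal{Z}^\perp)$-reducible}. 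It then remains only to prove that $\mathcal{Z}$ is a pre-fixed point of $\phi$, and, $\mathcal{Z}$ being bi-orthogonally closed, for that it is enough to show $\mu\Pi_0\in\mathcal{Z}$ for every $\Pi_0 : B(\mu B)\t \in \interp{Bp\t}^{p\mapsto\mathcal{Z}}$.

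So fix such a $\Pi_0$; the task is $cut(\nu(Id_{S\t},\Theta);\mu\Pi_0)\in X^\perp$. The main fixed-point cut reduction followed by elimination of the trivial $cut(Id_{S\t};\cdot)$ rewrites this derivation to $cut(\Theta[\t/\x];\,cut(F_{B{\bullet}\t}(\nu(Id_{S\x},\Theta));\Pi_0))$, and since candidates are closed under expansion it suffices to place this reduct in $X^\perp$. Now apply Lemma~\ref{lem:fun} with operator $B{\bullet}\t$ to the proof $\nu(Id_{S\x},\Theta)$ of $\vdash S^\perp\x,\nu\B\x$: from its $(X^\perp,\mathcal{Z}^\perp)$-reducibility obtained above, functoriality yields that $F_{B{\bullet}\t}(\nu(Id_{S\x},\Theta))$ is $([B](X^\perp),[\B](\mathcal{Z}^\perp))$-reducible. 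Since $\Pi_0\in\interp{Bp\t}^{p\mapsto\mathcal{Z}} = ([\B](\mathcal{Z}^\perp))^\perp$ and is of type dual to the conclusion $\B(\nu\B)\t$, Proposition~\ref{prop:red-interp} gives $cut(F_{B{\bullet}\t}(\nu(Id_{S\x},\Theta));\Pi_0)\in [B](X^\perp)$. Finally $\Theta$, hence $\Theta[\t/\x]$, is $(X^\perp,[\B]X)$-reducible, and $([\B]X)^\perp = [B](X^\perp)$ by the duality propositions, so cutting the two derivations and appealing to Proposition~\ref{prop:red-interp} a last time lands $cut(\Theta[\t/\x];\,cut(F_{B{\bullet}\t}(\nu(Id_{S\x},\Theta));\Pi_0))$ in $X^\perp$, which finishes the pre-fixed-point inclusion and hence the lemma.

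The main obstacle is the bookkeeping around the auxiliary candidate $\mathcal{Z}$. On the one hand one must verify carefully that $\mathcal{Y}^\perp$ is exactly captured by the cut-elimination property that drives the induction --- a routine but error-prone application of the cut-permutation lemmas, together with the substitution- and expansion-closure of candidates. On the other hand one must keep straight the several dualities between interpretations of operators, in particular the identity $([\B]X)^\perp = [B](X^\perp)$, so that every candidate lines up when Lemma~\ref{lem:fun} and Proposition~\ref{prop:red-interp} are invoked. I would stress that there is no vicious circle even though the conclusion is itself a reducibility statement about $\nu(Id,\Theta)$: the candidate $\mathcal{Z}$ is rigged so that precisely the instance of that reducibility needed inside the induction step holds by construction, while it is the climb through $\lfp(\phi)$ that actually establishes the full statement.
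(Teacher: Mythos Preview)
Your proof takes essentially the same route as the paper's: define the auxiliary candidate as the orthogonal of $\{cut(\nu(Id_{S\vec{t'}},\Theta);\Xi):\Xi\in X\}$ (the paper calls it $Y$, you call it $\mathcal Z$), observe that $\nu(Id,\Theta)$ is $(X^\perp,\mathcal Z^\perp)$-reducible by construction, and show that $\mathcal Z$ is a pre-fixed point of $\phi$ by reducing $cut(\nu(Id,\Theta);\mu\Pi_0)$ via the main $\mu/\nu$ rule and invoking functoriality on $\nu(Id,\Theta)$. The pre-fixed point conclusion $\lfp(\phi)\subseteq\mathcal Z$ then delivers the lemma.

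The one structural point you do not surface is that the paper proves Lemmas~\ref{lem:fun} and~\ref{lem:nu} \emph{simultaneously}, by induction on the operator $B$. The reason is that the fixed-point case in the proof of Lemma~\ref{lem:fun} (when $B=\lambda\vec p\lambda q.\,\mu(B'\vec p q)\t$) itself appeals to Lemma~\ref{lem:nu} on the strictly smaller $B'$; conversely, the proof of Lemma~\ref{lem:nu} for $B$ uses Lemma~\ref{lem:fun} at the same $B$ (your call to $F_{B\bullet\t}$). So Lemma~\ref{lem:fun} cannot be established in full before Lemma~\ref{lem:nu}: the two are intertwined, and the well-founded order is the size of $B$, with (1) before (2) at each size. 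Your ``no vicious circle'' remark correctly handles the internal self-reference through $\mathcal Z$, but the dependency with Lemma~\ref{lem:fun} is a separate circularity that only the simultaneous induction resolves. Once you make that framing explicit, your argument is exactly the paper's part~(2), and your single invocation of Lemma~\ref{lem:fun} is precisely the paper's appeal to induction hypothesis~(1).
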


\begin{proof}[of Lemmas~\ref{lem:fun} and \ref{lem:nu}]
We prove them simultaneously, generalized as follows
for any monotonic operator $B$ of second-order arity $n+1$,
and any predicates $\vec{A}$ and candidates $\vec{Z}$:
\begin{enumerate}
\item For any $(X,Y)$-reducible $\Pi$,
  $F_{B\vec{A}}(\Pi)$ is
    $([B]\vec{Z}X,[\overline{B}]\vec{Z}^\perp Y)$-reducible.
\item For any $(X^\perp,[\B]\vec{Z}^\perp X)$-reducible $\Theta$,
  $\nu(Id_{S\t},\Theta)$ is
  $(X^\perp,[\nu(\B\vec{Z}^\perp)\t])$-reducible.
\end{enumerate}
We proceed by induction on $B$: we first establish (1),
relying on strictly smaller instances of both (1) and (2);
then we prove (2) by relying on (1) for the same $B$
(modulo size-preserving first-order details).
The purpose of the generalization is to separate the main part of $B$
from auxiliary parts $\vec{A}$, which may be large
and whose interpretations $\vec{Z}$ may depend on $X$ and $Y$,
but play a trivial role.

\begin{enumerate}
\item
  If $B$ is of the form $(\lambda \vec{p} \lambda q.~ B'\vec{p})$,
  then $F_{B\vec{A}}(\Pi)$ is simply $Id_{B'\vec{A}}$,
  which is trivially $([B'\vec{Z}],[\overline{B'}\vec{Z}^\perp])$-reducible
  since $[\overline{B'}\vec{Z}^\perp] = [B'\vec{Z}]^\perp$.
  If $B$ is of the form
  $(\lambda\vec{p}\lambda q.~ q\t)$,
  then $F_{B\vec{A}}(\Pi)$
  is $\Pi[\t/\x]$ which is $(X,Y)$-reducible.

  Otherwise, $B$ starts with a logical connective.
  Following the definition of $F_B$, dual connectives are treated
  in a symmetric way.
  The tensor case essentially consists in showing that
  if $\Pi'\vdash~P',Q'$ is $([P'],[Q'])$-reducible and
  $\Pi''\vdash~P'',Q''$ is $([P''],[Q''])$-reducible then
  the following derivation is $([P'\lltens P''], [Q'\llpar Q''])$-reducible:
  \[ \infer[\llpar]{\vdash P'\lltens P'', Q'\llpar Q''}{
      \infer[\lltens]{\vdash P'\lltens P'',Q',Q''}{
        \infer{\vdash P',Q'}{\Pi'} &
        \infer{\vdash P'',Q''}{\Pi''}}} \]
  \begin{longitem}
  \item
  The subderivation $\Pi'\lltens\Pi''$ is
  $([P'\lltens P''],[Q'],[Q''])$-reducible: By 
  Proposition~\ref{prop:red-interp} it suffices to show that
  for any $\theta$ and compatible $\Xi'\in[Q']^\perp$ and $\Xi''\in[Q'']^\perp$,
  $cut(\Pi\theta;\Xi',\Xi'')$ belongs to $[P'\lltens P'']$.
  This follows from: the fact that it reduces to
  $cut(\Pi'\theta;\Xi')\lltens cut(\Pi''\theta;\Xi'')$;
  that those two conjuncts are respectively in $[P']$ and $[P'']$ by 
  hypothesis;
  and that $\set{u\lltens v}{u\in[P'], v\in[P'']}$ is a subset
  of $[P'\lltens P'']$ by definition of the interpretation.
  \item
  We then prove that the full derivation, instantiated by $\theta$
  and cut against any compatible $\Xi\in[P'\lltens P'']^\bot$,
  is in $[Q'\llpar Q'']$.
  Since the interpretation of ${\llpar}$
  is $\set{u\lltens v}{u\in[Q']^\perp, v\in[Q'']^\perp}^\perp$,
  it suffices to show that
  $cut((\llpar(\Pi'\lltens\Pi''))\theta;\Xi)$ normalizes (which
  follows from the reducibility of $\Pi'\lltens\Pi''$) and that
  for any substitutions $\sigma$ and $\sigma'$,
  $cut((\llpar(\Pi'\lltens\Pi''))\theta;\Xi)\sigma$ normalizes when cut against
  any such compatible $(u\lltens v)\sigma'$.
  Indeed, that cut reduces, using cut permutations and
  the main multiplicative reduction, into 
  $cut(cut((\Pi'\lltens\Pi'')\theta\sigma;\Xi\sigma);u\sigma',v\sigma')$
  which normalizes by reducibility of $\Pi'\lltens\Pi''$.
  \end{longitem}

  The additive case follows the same outline.
  There is no case for units, including $=$ and $\neq$,
  since they are treated with all formulas where $p$ does not occur.

  In the case of first-order quantifiers,
  say $B = \lambda \vec{p} \lambda q.~ \exists x.~ B' \vec{p} q x$,
  we essentially have to show that,
  assuming that $\Pi$ is $([P x], [Q x])$-reducible,
  the following derivation is
  $([\exists x.~ P x], [\forall x.~ Q x])$-reducible:
\[
    \infer[\forall]{\Sigma;\vdash \exists x.~ P x, \forall x.~ Q x}{
      \infer[\exists]{\Sigma,x;\vdash \exists x.~ P x, Q x}{
      \infer{\Sigma,x;\vdash P x, Q x}{\Pi}}}
\]
  \begin{longitem}
  \item
  We first establish that the immediate subderivation $\exists(\Pi)$
  is reducible, by considering $cut(\exists(\Pi)\theta;\Xi)$
  for any $\theta$ and compatible $\Xi\in[Q x]^\perp$.
  We reduce that derivation into
  $\exists(cut(\Pi\theta;\Xi))$ and conclude by 
  definition of $[\exists x.~ P x]$ and the fact that 
  $cut(\Pi\theta;\Xi)\in [P x]$.
  \item
  To prove that $\forall(\exists(\Pi))$ is reducible,
  we show that $cut(\forall(\exists(\Pi))\theta;\Xi)$ belongs to
  $[\forall x.~ Q x]$ for any $\theta$
  and compatible $\Xi \in [\exists x.~ P x]^\perp$.
  Since $[\forall x.~ Q x] = \set{\exists \Xi'}{\Xi'\in [Q t]^\perp}^\perp$,
  this amounts to show that our derivation normalizes
  (which follows from the reducibility of $\exists(\Pi)$)
  and that
  $cut(cut(\forall(\exists(\Pi))\theta;\Xi)\sigma;(\exists\Xi')\sigma')$
  normalizes for any $\sigma$, $\sigma'$ and compatible $\Xi' \in [Q t]^\perp$.
  Indeed, this derivation reduces,
  by permuting the cuts and performing the main $\forall/\exists$ reduction,
  into $cut(\exists(\Pi)\theta\sigma[t\sigma'/x];\Xi'\sigma',\Xi\sigma)$,
  which normalizes by reducibility of $\exists(\Pi)$.
  \end{longitem}

Finally, we show the fixed point case in full details since this
is where the generalization is really useful.
When $B$ is of the form $\lambda \vec{p} \lambda q.~ \mu(B'\vec{p}q)\t$,
we are considering the following derivation:
\[ \hspace{-0.3cm}\infer[\nu]{
      \vdash \mu(B'\vec{A}P)\t, \nu(\overline{B'}\vec{A}^\perp Q) \t}{
      \infer[init]{
        \vdash \mu(B'\vec{A}P)\t, \nu(\overline{B'}\vec{A}^\perp P^\perp)\t}{} &
      \infer[\mu]{
        \vdash \mu(B'\vec{A} P)\x,
             \overline{B'}\vec{A}^\perp Q
                (\nu(\overline{B'}\vec{A}^\perp P^\perp))\x}{
        \infer{\vdash B'\vec{A} P (\mu (B'\vec{A} P))\x,
             \overline{B'}\vec{A}^\perp Q
                (\nu(\overline{B'}\vec{A}^\perp P^\perp))\x}{
          F_{B'\vec{A}{\bullet}(\mu(B'\vec{A}P))\x}(\Pi)}}} \]
We apply induction hypothesis (1) on
$B'' := (\lambda \vec{p} \lambda p_{n+1} \lambda q.~ B' \vec{p} q p_{n+1} \x)$,
with
$A_{n+1} := \mu (B'\vec{A}P)$ and $Z_{n+1} := [\mu (B'\vec{Z}X)]$,
obtaining that the subderivation $F_{\ldots}(\Pi)$ is
$([B'']\vec{Z} Z_{n+1} X,
  [\overline{B''}]\vec{Z}^\perp Z_{n+1}^\perp Y)$-reducible.
Then, we establish that $\mu(F_{\ldots}(\Pi))$
is reducible: for any $\theta$ and compatible
$\Xi\in [B'']\vec{Z}Z_{n+1}Y^\perp$,
$cut(\mu(F_{\ldots}(\Pi))\theta;\Xi)$ reduces to
$\mu(cut(F_{\ldots}(\Pi)\theta;\Xi))$ which
belongs to
$ [\mu (B'\vec{Z}X)\x] =
 \set{\mu \Pi'}{
   \Pi' \in [B'\vec{Z}X(\mu(B'\vec{Z}X))\x]}^{\perp\perp} $
by reducibility of $F_{\ldots}(\Pi)$.
We finally obtain the reducibility of the whole derivation by
applying induction hypothesis (2) on $B'$ with $A_{n+1}:=Q^\perp$,
$Z_{n+1} := Y^\perp$ and $X := \interp{\mu (B'\vec{Z}X) \x}^\perp$.

\item
Here we have to show that for any $\theta$ and any compatible $\Xi\in X$,
the derivation
$cut(\nu(Id_{S\t},\Theta)\theta;\Xi)$ belongs to $[\mu(B\vec{Z})\t]^\perp$.
Since only $\t$ is affected by $\theta$ in such derivations,
we generalize on it directly, and consider the following set:
\[ Y := \set{cut(\nu(Id_{S\vec{t'}},\Theta);\Xi)}{\Xi: S\vec{t'}\in X}^\perp \]
Note that we can form the orthogonal to obtain $Y$, since we are indeed
considering a subset of $\WN$: any $cut(\nu(Id;\Theta);\Xi)$ reduces
to $\nu(\Xi;\Theta)$, and $\Xi$ and $\Theta$ normalize.
We shall establish that $Y$ is a pre-fixed point of the operator $\phi$
such that $[\mu(B\vec{Z})\t]$ has been defined as $\lfp(\phi)$,
from which it follows that $[\mu(B\vec{Z})\t]\subseteq Y$,
which entails our goal |
note that this is essentially a proof by induction on $[\mu(B\vec{Z})]$.

So we prove the pre-fixed point property:
\[ \set{\mu\Pi}{\Pi: B\vec{A}(\mu (B\vec{A}))\vec{t''}
                 \in[B\vec{Z} Y \t']}^{\perp\perp} \subseteq Y \]
Observing that, for any $A,B\subseteq\WN$, we have
  $A^{\perp\perp} \subseteq B^\perp \Leftrightarrow
   A^{\perp\perp} \Perp B \Leftrightarrow
   B \subseteq A^\perp \Leftrightarrow
   B \Perp A$,
our property can be rephrased equivalently:
\[ \set{cut(\nu(Id_{S\vec{t'}},\Theta);\Xi)}{\Xi:S\vec{t'}\in X}
 \Perp \set{\mu \Pi}{\Pi\in [B\vec{Z}Y\t']} \]
Since both sides are stable by substitution, there is no need
to consider compatibility substitutions here, and it suffices to consider cuts
between any compatible left and right-hand side derivations:
$cut(cut(\nu(Id,\Theta);\Xi);\mu \Pi)$.
It reduces, using cut exchange, the main fixed point reduction
and finally the identity reduction, into:
\[ \hspace{-0.5cm}\infer[cut]{\vdash \Gamma,\Delta}{
     \infer{\vdash \Gamma, S\vec{t'}}{\Xi} &
     \infer[cut]{\vdash S^\perp\vec{t'}, \Delta}{
       \infer{\vdash S^\perp\vec{t'}, \B\vec{A}^\perp S\vec{t'}}{
         \Theta[\vec{t'}/\x]} &
       \infer[cut]{\vdash B\vec{A} S^\perp \vec{t'}, \Delta}{
         \infer{\vdash B\vec{A} S^\perp \vec{t'},
                       \B\vec{A}^\perp(\nu (\B\vec{A}^\perp))\vec{t'}}{
                F_{B\vec{A}{\bullet}\vec{t'}}(\nu(Id_{S\x},\Theta))} &
         \infer{\vdash B\vec{A}(\mu (B\vec{A}))\vec{t'},\Delta}{\Pi}
       }
     }
   } \]
By hypothesis, $\Xi\in X$,
$\Pi\in[B\vec{Z}Y\vec{t'}]$
and $\Theta[\vec{t'}/\x]$ is $(X^\perp,[\B\vec{Z}^\perp X \t'])$-reducible.
Moreover,
$\nu(Id_{S\x},\Theta)$ is $(X^\perp,Y^\perp)$-reducible by definition
of $Y$,
and thus, by applying (1) on the operator
$\lambda \vec{p} \lambda q.~ B \vec{p} q \vec{t'}$,
which has the same size as $B$, we obtain that
$F_{B\vec{A}{\bullet}\vec{t'}}(\nu(Id_{S\x},\Theta))$ is
$([B\vec{Z} X^\perp \vec{t'}],
 [\B\vec{Z}^\perp Y^\perp \vec{t'}])$-reducible\footnote{
  This use of (1) involving $Y$ is the reason why our two lemmas
  need to deal with arbitrary candidates
  and not only interpretations of formulas.
}.
We can finally compose all that to conclude that our derivation normalizes.
\end{enumerate}
\vspace{-0.5cm}\end{proof}

\subsection{Normalization}

\begin{lemma} \label{lem:mainbis}
Any proof of $\;\vdash P_1,\ldots,P_n$ is $([P_1],\ldots,[P_n])$-reducible.
\end{lemma}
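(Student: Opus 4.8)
The plan is to prove the statement by transfinite structural induction on the proof $\Pi$, in the style of Girard's adequacy argument but using the interpretation of Definition~\ref{def:interp} and the reformulation of reducibility from Proposition~\ref{prop:red-interp}. For each possible last rule of $\Pi$ — with its immediate subproofs reducible by induction hypothesis (for the $\neq$ rule there may be infinitely many subproofs, absorbed by the transfinite induction; for $\forall$ the subproof lives over an extended signature) — the uniform recipe is: by Proposition~\ref{prop:red-interp} it suffices to fix a substitution $\theta$ and candidate members $\Xi$ orthogonal to all-but-one of the conclusion formulas; push those cuts through the last rule of $\Pi$ using the auxiliary cut-permutation reductions of Section~\ref{sec:norm}; observe that the result is a constructor (such as $\Xi'\lltens\Xi''$, $\exists(\Xi')$, $\mu(\Xi')$, \ldots) built from pieces that, by the induction hypotheses and Proposition~\ref{prop:red-interp}, lie in the relevant interpretations, so that it belongs to the generator set of the distinguished formula's interpretation; and conclude by closure of candidates under expansion. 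The identities $([P]^\E)^\perp=[P^\perp]^\E$, $[BS]^\E=[Bp]^{\E,p\mapsto[S]^\E}$ and Proposition~\ref{prop:interp_subst} are used throughout to match formulas with candidates.

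The identity group, the MALL connectives and the first-order connectives are all handled by this recipe; these cases are mild variants of ones already treated inside the proof of Lemma~\ref{lem:fun} and the basic observations on candidates, so I would present them briefly. The points worth flagging: the $init$ case is exactly the observation that $Id_P$ is $([P],[P^\perp])$-reducible; the $cut$ case applies Proposition~\ref{prop:red-interp} twice, cutting a $[P]^\perp$-member against a $([P],\ldots)$-reducible proof; and the $\forall$, $\exists$ and $\neq$ cases additionally invoke the corresponding main reductions ($\forall/\exists$, equality) together with stability of reducibility under substitution (which follows directly from the definition and Proposition~\ref{prop:theta_wn}).

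The substance is in the two fixed-point cases. For the $\mu$ rule, $\Pi=\mu(\Pi')$: $cut(\mu(\Pi')\theta;\vec\Xi)$ reduces to $\mu(cut(\Pi'\theta;\vec\Xi))$, the argument $cut(\Pi'\theta;\vec\Xi)$ lies in $[B(\mu B)\t]=[Bp\t]^{p\mapsto[\mu B\t]}$ by the induction hypothesis on $\Pi'$, hence $\mu(cut(\Pi'\theta;\vec\Xi))$ lies in the generator set of $\phi([\mu B\t])=[\mu B\t]$ (using that $[\mu B\t]=\lfp(\phi)$ is a fixed point of $\phi$), and closure under expansion concludes. For the $\nu$ rule, $\Pi=\nu(\Pi_r,\Theta)$: after pushing the context cuts into $\Pi_r$, one is reduced to showing $\nu(\Xi_S,\Theta)\in[\nu\B\t]$ for every $\Xi_S\in[S\t]$, since $cut(\Pi_r\theta;\vec\Xi)\in[S\t]$ by the induction hypothesis on $\Pi_r$. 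This is the same Knaster--Tarski argument as in Lemma~\ref{lem:nu}: writing $[\mu B\t]=\lfp(\phi)$ for the operator dual to the one at hand, one shows $\{\,\nu(\Xi_S,\Theta):\Xi_S\in[S\t]\,\}^{\perp\perp}\subseteq[\nu\B\t]$ by checking that its orthogonal is a candidate pre-fixed point of $\phi$; the verification unfolds one main fixed-point reduction and composes the reducibility of $\Xi_S$, of $\Theta$ (and of $\Theta[\t/\x]$), and of $F_{B{\bullet}\t}(\nu(Id_{S\x},\Theta))$ — the last obtained from Lemmas~\ref{lem:fun} and~\ref{lem:nu}, crucially with the candidate $\{\,\nu(\Xi_S,\Theta):\ldots\,\}^{\perp\perp}$ placed in the positive slot so that it meets the interpretation of the $\mu$-headed partner in $[B{\bullet}\t]$. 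Closure under expansion then transports $\nu(\Xi_S,\Theta)\in[\nu\B\t]$ back to $cut(\nu(\Pi_r,\Theta)\theta;\vec\Xi)\in[\nu\B\t]$, and Proposition~\ref{prop:red-interp} gives the reducibility of $\Pi$.

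I expect the $\nu$ case to be the main obstacle. One must get the candidate algebra exactly right — matching $[\B]$-images of orthogonals with orthogonals of $[B]$-images — and carry along the generalization over auxiliary predicates and candidates that makes Lemmas~\ref{lem:fun} and~\ref{lem:nu} applicable. Moreover, since weak normalization is not preserved under reduction, every step must argue in the "expansion" direction only (from a reduct known to lie in a candidate back to the original proof), which is precisely why the argument is organized around reducing the goal to a statement about the single constructor $\nu(\Xi_S,\Theta)$ rather than about arbitrary members of $[\nu\B\t]$. The $\mu$ case and the propositional/first-order cases, by contrast, I expect to be routine once the recipe above is set up.
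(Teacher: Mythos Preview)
Your overall plan is correct and matches the paper's proof: induction on the derivation, case analysis on the last rule, and for each logical rule push the context cuts through via the auxiliary reductions, land in the generator set of the relevant interpretation, and conclude by closure under expansion and Proposition~\ref{prop:red-interp}. The identity, cut, MALL, first-order and $\mu$ cases are all as you describe and essentially identical to the paper's treatment.

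The $\nu$ case is where your proposal diverges from the paper, and there is a subtle slip. The paper does \emph{not} rerun the Knaster--Tarski argument here. Instead it uses the fixed-point equation $[\mu\B\t]=\phi([\mu\B\t])$ to write $[\nu B\t]=\set{\mu\Xi}{\Xi\in[\B(\mu\B)\t]}^\perp$ and then checks orthogonality of $cut(\nu(\Pi',\Theta)\theta;\vec\gamma)$ against each $\mu\Xi$ with $\Xi\in[\B(\mu\B)\t]$. Because $\Xi$ lives in the \emph{actual} interpretation $[\B(\mu\B)\t]$, the piece $F_{\B{\bullet}\vec v}(\nu(Id,\Theta))$ can be handled by a black-box appeal to Lemmas~\ref{lem:nu} and~\ref{lem:fun}: by induction hypothesis $\Theta$ is $([S]^\perp,[BS])$-reducible, Lemma~\ref{lem:nu} then gives $\nu(Id,\Theta)$ is $([S]^\perp,[\nu B])$-reducible, and Lemma~\ref{lem:fun} gives the reducibility of $F_{\B{\bullet}\vec v}(\nu(Id,\Theta))$ with the $\nu$-slot candidate equal to $[\nu B]$, exactly matching $\Xi$. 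No custom candidate is needed.

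Your route---introduce $Y=\set{\nu(\Xi_S,\Theta)}{\Xi_S\in[S]}^{\perp\perp}$ and show $Y^\perp$ is a pre-fixed point of $\phi$---can also be made to work, but your invocation of Lemma~\ref{lem:nu} inside it is misplaced. In your argument $\Xi$ ranges over $[\B p\t]^{p\mapsto Y^\perp}$, so you need $F_{\B{\bullet}\t}(\nu(Id,\Theta))$ reducible with the $\nu$-slot candidate equal to $Y$, hence $\nu(Id,\Theta)$ must be $([S]^\perp,Y)$-reducible. Lemma~\ref{lem:nu} only yields $([S]^\perp,[\nu B])$-reducibility; using that would require $[\nu B]\subseteq Y$, which is precisely the goal, so the argument becomes circular. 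The fact you actually need, $\nu(Id,\Theta)$ is $([S]^\perp,Y)$-reducible, is immediate from the definition of $Y$ together with the reduction $cut(\nu(Id,\Theta);\Xi_S)\ra\nu(\Xi_S,\Theta)$ and closure under expansion---no appeal to Lemma~\ref{lem:nu} is needed at this spot, only Lemma~\ref{lem:fun}. Either fix this, or (more simply) follow the paper and drop the custom $Y$ entirely.
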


\begin{proof}
By induction on the height of the derivation $\Pi$,
with a case analysis on the first rule.
We are establishing that for any $\theta$ and compatible
$(\gamma_i\in [P_i]^\perp)_{i=1\ldots n}$,
$cut(\Pi\theta;\vec{\gamma})$ normalizes.
If $\Pi\theta$ is an axiom on $P \equiv P_1\theta \equiv P_2^\perp\theta$,
the cut against a proof of $[P]$ and a proof of $[P]^\perp$
reduces into a cut between those two proofs, which normalizes.
If $\Pi\theta = cut(\Pi'\theta;\Pi''\theta)$ is a cut on the formula $P$,
$cut(\Pi\theta;\vec{\gamma})$ reduces to
$cut(cut(\Pi'\theta;\vec{\gamma}');cut(\Pi''\theta;\vec{\gamma}''))$
and the two subderivations belong to dual
candidates $[P]$ and $[P]^\perp$ by induction hypothesis
and Proposition~\ref{prop:red-interp}.

Otherwise, $\Pi$ starts with a rule from the logical group,
the end sequent is of the form $\vdash\Gamma,P$ where $P$ is the principal
formula, and we shall prove that $cut(\Pi\theta;\vec{\gamma}) \in [P]$
when $\vec{\gamma}$ is taken
in the duals of the interpretations of $\Gamma\theta$,
which allows to conclude again using Proposition~\ref{prop:red-interp}.

\begin{longitem}
\item
  The rules $\llone$, $\lltens$, $\llplus$, $\exists$, $=$
  and $\mu$ are treated similarly,
  the result coming directly from the definition of the interpretation.

  Let us consider, for example, the fixed point case: $\Pi = \mu \Pi'$.
  By induction hypothesis, $cut(\Pi'\theta;\vec{\gamma})\in [B(\mu B)\t]$.
  By definition,
  $[\mu B\t] = \lfp(\phi) = \phi(\lfp(\phi)) = X^{\perp\perp}$
  where $X:=\set{\mu \Xi}{\Xi\in[B{\bullet}\t][\mu B]}$.
  Since $[B(\mu B)\t] = [B{\bullet}\t][\mu B]$,
  we obtain that $\mu (cut(\Pi'\theta;\vec{\gamma})) \in X$ and thus also
  in $X^{\perp\perp}$.
  Hence
  $cut(\Pi\theta;\vec{\gamma})$, which reduces to the former, is
  also in $[\mu B\t]$.

\item
  The rules $\perp$, $\llpar$, $\top$, $\llwith$, $\forall$, ${\neq}$,
  and $\nu$ are treated similarly:
  we establish that $cut(\Pi\theta;\vec{\gamma})\Perp X$ for some $X$
  such that $[P] = X^\perp$.
  First, we have to show that our derivation normalizes, which comes
  by permuting up the cuts, and concluding by induction hypothesis |
  this requires that after the permutation the derivations $\vec{\gamma}$
  are still in the right candidates, which relies on closure under
  substitution and hence signature extension for the case of disequality
  and $\forall$.
  Then we have to show that for any $\sigma$ and $\sigma'$, and any
  compatible $\Xi \in X$, the derivation 
  $cut(cut(\Pi\theta;\vec{\gamma})\sigma;\Xi\sigma')$ normalizes too.
  We detail this last step for two key cases.

  In the $\forall$ case we have
  $[\forall x.~ P x] = \set{\exists\Xi'}{\Xi'\in [P t^\perp]}^\perp$,
  so we consider
  $cut(cut((\forall\Pi')\theta;\vec{\gamma})\sigma;(\exists\Xi')\sigma')$,
  which reduces to $cut(\Pi'\theta[t/x];$ $\vec{\gamma}\sigma,\Xi\sigma')$.
  This normalizes by induction hypothesis on
  $\Pi'[t/x]$, which remains smaller than $\Pi$.
  
  The case of $\nu$ is the most complex, but is similar to the
  argument developed for Lemma~\ref{lem:nu}.
  If $\Pi$ is of the form $\nu(\Pi',\Theta)$ and $P \equiv \nu B \t$ then
  $cut(\Pi;\gamma)\theta$ has type $\nu B \vec{u}$ for $u := \t\theta$.
  Since $[\nu B\vec{u}] =
  \set{\mu \Xi}{\Xi\in[\B{\bullet}\vec{u}][\mu \B]}^\perp$,
  we show that for any $\sigma$, $\sigma'$ and compatible
  $\Xi\in[\B(\mu\B)\vec{u}]$,
  the derivation
  $cut(cut(\nu(\Pi',\Theta)\theta;\vec{\gamma})\sigma;(\mu \Xi)\sigma')$
  normalizes.
  Let $\vec{v}$ be $\vec{u}\sigma$,
  the derivation reduces to:
  \[ cut(cut(\Pi'\theta\sigma;\vec{\gamma}\sigma);
       cut(\Theta[\vec{v}/\x];
           cut(F_{\B{\bullet}\vec{v}}(\nu(Id,\Theta));\Xi\sigma'))) \]
  By induction hypothesis,
  $cut(\Pi'\theta\sigma;\vec{\gamma}\sigma)\in [S\vec{v}]$,
  and $\Theta$ is $([S\x]^\perp,[BS\x])$-reducible.
  By Lemmas~\ref{lem:fun} and \ref{lem:nu} we obtain
  that $F_{\B{\bullet}\vec{v}}(\nu(Id,\Theta))$ is
  $([\B{}S^\perp\vec{v}],[B(\nu B)\vec{v}])$-reducible.
  Finally, $\Xi\in[\B(\mu\B)\vec{v}]$.
  We conclude by composing all these reducibilities
  using Proposition~\ref{prop:red-interp}.
\end{longitem}
\vspace{-0.5cm}\end{proof}

\begin{theorem}[Cut elimination]
Any derivation can be reduced into a cut-free derivation.
\end{theorem}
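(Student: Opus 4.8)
The plan is to obtain the theorem as an immediate corollary of the main reducibility lemma, Lemma~\ref{lem:mainbis}, together with the standing facts about candidates recorded just after the definition of reducibility. All the genuine work has already been done; what remains is to feed $Id$ proofs into a reducible derivation and read off normalization.

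First, let $\Pi$ be an arbitrary derivation of $\vdash P_1,\ldots,P_n$. By Lemma~\ref{lem:mainbis}, $\Pi$ is $([P_1],\ldots,[P_n])$-reducible. I would then instantiate the definition of reducibility at the identity substitution and, for each $i$, at the axiom derivation $Id_{P_i^\perp}$, which has type $P_i^\perp$. This is legitimate: $[P_i]^\perp$ is a reducibility candidate (every orthogonal is), it equals $[P_i^\perp]$ by the proposition relating negation on formulas to negation on candidates, and every $Id$ proof belongs to every candidate; hence $Id_{P_i^\perp}\in[P_i]^\perp$. Reducibility then yields that $cut(\Pi; Id_{P_1^\perp},\ldots,Id_{P_n^\perp})\in\WN$, i.e.\ this derivation $\Ra$-reduces to some cut-free derivation. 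To transfer normalization back to $\Pi$ itself, observe that the $n$ successive auxiliary reductions erasing a cut against an axiom exhibit $\Pi$ as an $Id$-simplification of $cut(\Pi; Id,\ldots,Id)$; by Proposition~\ref{prop:id_wn}, normalizability of the latter entails normalizability of $\Pi$. So $\Pi\Ra\Pi'$ for some cut-free $\Pi'$, which is exactly the claim. (Equivalently, one may simply invoke the remark following the reducibility definition, that any reducible derivation normalizes.)

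The main obstacle is not located in this last argument but well upstream of it: it lies in defining the candidate interpretation of formulas (Definition~\ref{def:interp}), in particular making sense of nested and interleaved least and greatest fixed points as (bi-)orthogonal-closed fixed points in the lattice of candidates and proving the requisite (anti)monotonicity; in the functoriality and coinduction lemmas (Lemmas~\ref{lem:fun} and \ref{lem:nu}), whose generalization over auxiliary operator arguments is what makes the fixed-point cut reduction work; and above all in Lemma~\ref{lem:mainbis}. The only delicate point remaining at the level of the theorem statement is purely terminological: ``reduced into a cut-free derivation'' must be read via the transfinite reflexive--transitive closure $\Ra$ used to define $\WN$, not as a finite reduction sequence, because the $\neq$ rule may be infinitely branching; this is the reading forced by the whole development and is harmless for the intended applications.
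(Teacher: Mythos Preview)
Your proposal is correct and takes essentially the same approach as the paper: invoke Lemma~\ref{lem:mainbis} to obtain reducibility, then use the observation (recorded right after the definition of reducibility, via Proposition~\ref{prop:id_wn} and cutting against $Id$ proofs) that any reducible derivation normalizes. You have simply unfolded the one-line ``reducible, hence normalizes'' into its constituent steps, which is exactly what the paper intends.
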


\begin{proof}
By Lemma~\ref{lem:mainbis}, any derivation is reducible, and hence normalizes.
\end{proof}

The usual immediate corollary of the cut elimination result is that
\mumall\ is consistent, since there is obviously no cut-free derivation
of the empty sequent.
However, note that unlike in simpler logics, cut-free derivations
do not enjoy the subformula property,
because of the $\mu$ and $\nu$ rules.
While it is easy to characterize the new formulas that can arise from $\mu$,
nothing really useful can be said for $\nu$,
for which no non-trivial restriction is known.
Hence, \mumall\ only enjoys restricted forms of the subformula property,
applying only to (parts of) derivations that do not involve coinductions.


\section{Focusing} \label{sec:focusing}


\label{sec:foc_mumall}

In \cite{andreoli92jlc}, Andreoli identified some important structures
in linear logic, which led to the design of his focused proof system.
This complete proof system for (second-order) linear logic structures
proofs in stripes of \emph{asynchronous} and \emph{synchronous} rules.
Choices in the order of application of asynchronous rules do not matter,
so that the real non-determinism lies in the synchronous phase.
However, the focused system tames this non-determinism by forcing
to hereditarily chain these choices: once the focus is set on a
synchronous formula, it remains on its subformulas
as its connectives are introduced, and so on,
to be released only on asynchronous subformulas.
We refer the reader to \cite{andreoli92jlc} for a complete
description of that system,
but note that Figure~\ref{fig:focused}, without the fixed point rules,
can be used as a fairly good reminder:
it follows the exact same structure, only missing the rules for exponentials.

Focusing \mumall\ can be approached simply by reading the
focusing of second-order linear logic through the encoding of fixed points.
But this naive approach yields a poorly
structured system.
Let us recall the second-order encoding of $\mu B \t$:
\[ \forall S.~ \oc(\forall \x.~ B S \x \llimp S \x) \llimp S \t \]
This formula starts with a layer of asynchronous connectives:
$\forall$, $\llimp$ and ${\wn}$, the dual of $\,{\oc}$.
Once the asynchronous layer has been 
processed, the second-order eigenvariable $S$ represents $\mu B$
and one obtains unfoldings of $S$ into $BS$
by focusing on the pre-fixed point hypothesis.
Through that encoding, one would thus obtain a system where several unfoldings
necessarily require several phase alternations.
This is not satisfying:
the game-based reading of focusing identifies fully synchronous (positive)
formulas with data types, which should be built in one step by the player,
\ie\ in one synchronous phase.
In \mumall, least fixed points over fully synchronous operators
should be seen as data types.
That intuition, visible in previous examples, is also justified
by the classification of connectives in Definition~\ref{def:connectives},
and is indeed accounted for in the focused system
presented in Figure~\ref{fig:focused}.

It is commonly believed that asynchrony corresponds to invertibility.
The two notions do coincide in many cases but it should not be taken too
seriously, since this does not explain, for example,
the treatment of exponentials,
or the fact that $init$ has to be synchronous while it is trivially
invertible.
In the particular case of fixed points,
invertibility is of no help in designing a complete focused proof system.
Both $\mu$ and $\nu$ are invertible (in the case of $\nu$, this is
obtained by using the unfolding coinvariant) but
this does not capture the essential aspect of fixed points, that is
their infinite behavior.
As a result, a system requiring that the $\mu$ rule is applied
whenever possible would not be complete, notably failing
on $\;\vdash \top \lltens \llone, \mu p. p$ or $\;\vdash nat~x \llimp nat~x$.
As we shall see, the key to obtaining focused systems is to consider
the permutability of asynchronous rules, rather than their invertibility,
as the fundamental guiding principle.

We first design the $\mu$-focused system in Section~\ref{sec:mufoc},
treating $\mu$ synchronously,
which is satisfying for several reasons starting with its positive nature.
We show in Section~\ref{sec:nufoc} that it is also possible to consider
a focused system for \mumall\ where $\nu$ is treated synchronously.
In Section~\ref{sec:foc_mulj}, we apply the $\mu$-focused system
to a fragment of \muLJ.

\subsection{A complete $\mu$-focused calculus} \label{sec:mufoc}

In this section,
we call \emph{asynchronous} (\resp \emph{synchronous})
the negative (\resp positive) connectives of Definition~\ref{def:connectives}
and the formulas whose top-level connective is asynchronous (\resp 
synchronous).
Moreover, we classify non-negated atoms as synchronous and negated
ones as asynchronous. As with Andreoli's original system, this latter choice
is arbitrary and can easily be changed for a case-by-case 
assignment~\cite{miller07cslb,chaudhuri08jar}.

We present the system in Figure~\ref{fig:focused} as a good
candidate for a focused proof system for \mumall.
In addition to asynchronous and synchronous formulas as defined above,
focused sequents can contain \emph{frozen formulas} $P^*$
where $P$ is an asynchronous atom or fixed point.
Frozen formulas may only be found at toplevel in sequents.
We use explicit annotations of the sequents in the style of Andreoli:
in the synchronous phase, sequents have the form
$\;\vdash \Gamma \Downarrow P$;
in the asynchronous phase, they have the form
$\;\vdash \Gamma \Uparrow \Delta$.
In both cases,
$\Gamma$ and $\Delta$ are sets of formulas of disjoint locations,
and $\Gamma$ is a multiset of synchronous or frozen formulas.
The convention on $\Delta$ is a slight departure from Andreoli's
original proof system where $\Delta$ is a list: we shall emphasize
the irrelevance of the order of asynchronous rules without
forcing a particular, arbitrary ordering.
Although we use an explicit freezing annotation,
our treatment of atoms is really the same one as Andreoli's;
the notion of freezing is introduced here as a technical device for
dealing precisely with fixed points,
and we also use it for atoms for a more uniform presentation.

\begin{figure}[htpb]
\begin{center}
$\begin{array}{c}
\mbox{Asynchronous phase}
\\[6pt]
\infer{\vdash\Gamma\Uparrow P\llpar Q,\Delta}{\vdash\Gamma\Uparrow P,Q,\Delta}
\quad
\infer{\vdash\Gamma\Uparrow P\llwith Q, \Delta}{
  \vdash\Gamma\Uparrow P,\Delta & \vdash\Gamma\Uparrow Q,\Delta
}
\quad
\infer{\vdash\Gamma\Uparrow a^\perp\t, \Delta}{
  \vdash\Gamma,(a^\perp\t)^*\Uparrow\Delta}
\\[6pt]
\infer{\vdash \Gamma \Uparrow \bot, \Delta}{\vdash \Gamma\Uparrow\Delta}
\quad
\infer{\vdash \Gamma \Uparrow \top, \Delta}{}
\quad
\infer{\vdash \Gamma \Uparrow s\neq t, \Delta}{
 \{ \vdash \Gamma\theta \Uparrow \Delta\theta :
      \theta\in csu(s\unif t) \} }
\\[6pt]
\infer{\vdash\Gamma\Uparrow\forall{}x. P x,\Delta}{
  \vdash\Gamma\Uparrow P c,\Delta}
\\[6pt]
\infer{\vdash\Gamma\Uparrow \nu{}B\t,\Delta}{
  \vdash\Gamma\Uparrow S\t,\Delta &
  \vdash\Uparrow BS\x, S\x^\bot
}
\quad
\infer{\vdash\Gamma\Uparrow \nu{}B\t,\Delta}{
  \vdash\Gamma,(\nu{}B\t)^*\Uparrow\Delta}
\end{array}
$

\vspace{10pt}
$
\begin{array}{c}
\mbox{Synchronous phase}
\\[6pt]
\infer{\vdash\Gamma,\Gamma'\Downarrow P\lltens Q}{
  \vdash\Gamma\Downarrow P &
  \vdash \Gamma'\Downarrow Q
}
\quad
\infer{\vdash \Gamma\Downarrow P_0\llplus P_1}{\vdash\Gamma\Downarrow P_i}
\quad
\infer{\vdash (a^\perp\t)^*\Downarrow a\t}{}
\\[6pt]
\infer{\vdash \Downarrow \llone}{} \quad \infer{\vdash \Downarrow t=t}{}
\\[6pt]
\infer{\vdash \Gamma\Downarrow\exists{}x. P x}{\vdash\Gamma\Downarrow P t}
\\[6pt]
\infer{\vdash\Gamma\Downarrow \mu{}B\t}{\vdash\Gamma\Downarrow B(\mu{}B)\t}
\quad
\infer{\vdash (\nu{}\B\t)^*\Downarrow \mu{}B\t}{}
\end{array}$

\vspace{10pt}
Switching rules (where $P$ is synchronous, $Q$ asynchronous)
\[
\infer{\vdash\Gamma\Uparrow P,\Delta}{\vdash\Gamma,P\Uparrow\Delta}
\quad
\infer{\vdash \Gamma, P \Uparrow}{\vdash \Gamma \Downarrow P}
\quad
\infer{\vdash \Gamma \Downarrow Q}{\vdash \Gamma \Uparrow Q}
\]
\end{center}
\caption{The $\mu$-focused proof-system for \mumall}
\label{fig:focused}
\end{figure}

The $\mu$-focused system extends the usual focused system for MALL.
The rules for equality are not surprising,
the main novelty here is the treatment of fixed points.
Each of the fixed point connectives has two rules in the focused 
system:  one treats it ``as an atom'' and the other one as an expression with
internal logical structure.
In accordance with Definition~\ref{def:connectives},
$\mu$ is treated during the synchronous phase
and $\nu$ during the asynchronous phase.

Roughly, what the focused system implies is that
if a proof involving a $\nu$-expression
proceeds by coinduction on it, then this coinduction can be done at the 
beginning;
otherwise that formula can be ignored in the whole derivation,
except for the $init$ rule.
The latter case is expressed by the rule which moves the greatest fixed
point to the left zone, freezing it.
Focusing on a $\mu$-expression yields two choices: unfolding or applying the 
initial rule for fixed points.
If the considered operator is fully synchronous, the focus will never be lost.
For example, if $nat$ is the (fully synchronous) expression
$\mu N. \lambda{}x.~ x=0 \llplus \exists{}y.~ x=s~y \lltens N~y$,
then focusing puts a lot of structure on a proof of
$\vdash \Gamma\Downarrow nat~t$: 
either $t$ is a closed term representing a natural number and $\Gamma$ is empty,
or $t = s^n t'$ for some $n\geq 0$ and $\Gamma$ only contains $(nat~t')^\bot$.

We shall now establish the completeness of our focused proof system:
If the unfocused sequent $\;\vdash\Gamma$ is provable then so is
$\;\vdash\Uparrow\Gamma$, and the order of application of asynchronous
rules does not affect provability.
From the perspective of proofs rather than provability,
we are actually going to provide transformations from unfocused to focused
derivations (and back) which can reorder asynchronous rules arbitrarily.
However, this result cannot hold without a simple condition
avoiding pathological uses of infinite branching, as illustrated with
the following counter-example.
The unification problem $s~(f~0)\unif f~(s~0)$, where $s$ and $0$ are constants,
has infinitely many solutions $[(\lambda x.~ s^n x) / f]$.
Using this, we build a derivation $\Pi_\omega$
with infinitely many branches, each $\Pi_n$
unfolding a greatest fixed point $n$ times:
\[
 \Pi_0 \eqdef \infer[\top]{\vdash \nu p. p, \top}{} \quad\quad
 \Pi_{n+1} \eqdef \infer[\nu]{\vdash \nu p. p, \top}{
                  \infer{\vdash \nu p. p, \top}{\Pi_n} &
                  \infer[init]{\vdash \mu p. p, \nu p. p}{}} \]
 \[ \Pi_\omega \eqdef \infer[\neq]{
                     f; \vdash s~(f~0) \neq f~(s~0), \nu p. p, \top}{
                     \Pi_0 & \Pi_1 & \ldots & \Pi_n & \ldots} \]
Although this proof happens to be already in a focused form,
in the sense that focusing annotations can be added in a straightforward
way,
the focused transformation must also provide a way to change the order
of application of asynchronous rules.
In particular it must allow to permute down the introduction of
the first $\nu p. p$. The only reasonable way to do so is as follows,
expanding $\Pi_0$ into $\Pi_1$ and then pulling down the $\nu$ rule
from each subderivation, changing $\Pi_{n+1}$ into $\Pi_n$:
\[ \Pi_\omega \quad\rightsquigarrow\quad
      \infer[\nu]{f; \vdash s~(f~0) \neq f~(s~0), \nu p. p, \top}{
         \infer{f; \vdash s~(f~0) \neq f~(s~0), \nu p. p, \top}{\Pi_\omega} &
         \infer[init]{\vdash \mu p. p, \nu p. p}{}} \]
This leads to a focusing transformation that may not terminate.
The fundamental problem here is that although each additive
branch only finitely explores the asynchronous formula $\nu p.p$,
the overall use is infinite.
A solution would be to admit infinitely deep derivations,
with which such infinite balancing process may have a limit.
But our goal here is to develop finite proof representations
(this is the whole point of (co)induction rules)
so we take an opposite approach and require a minimum amount
of finiteness in our proofs.

\begin{definition}[Quasi-finite derivation]
A derivation is said to be quasi-finite if
it is cut-free,
has a finite height
and only uses a finite number of different coinvariants.
\end{definition}

This condition may seem unfortunate, but it appears to be essential
when dealing with transfinite proof systems involving fixed points. More
precisely, it is related to the choice regarding the introduction of
asynchronous fixed points, be they greatest fixed points in $\mu$-focusing
or least fixed points in $\nu$-focusing.
Note that quasi-finiteness is trivially satisfied
for any cut-free derivation that is finitely branching,
and that any derivation which does not involve the $\neq$ rule
can be normalized into a quasi-finite one.
Moreover,
quasi-finiteness is a natural condition from a practical perspective,
for example in the context of automated or interactive theorem proving,
where $\neq$ is restricted to finitely branching instances anyway.
However, it would be desirable to refine the notion of quasi-finite
derivation in a way that allows cuts and is preserved by cut elimination,
so that quasi-finite proofs could be considered a proper proof fragment.
Indeed, the essential idea behind quasi-finiteness is that
only a finite number of locations are explored in a proof,
and the cut-free condition is only added because cut reductions
do not obviously preserve this.
We conjecture that a proper, self-contained notion of quasi-finite
derivation can be attained,
but leave this technical development to further work.


The core of the completeness proof follows~\cite{miller07cslb}.
This proof technique proceeds by transforming standard derivations
into a form where focused annotations can be added to obtain a focused
derivation. Conceptually, focused proofs are simply special cases
of standard proofs, the annotated sequents of the focused proof system
being a concise way of describing their shape.
The proof transformation proceeds by iterating two lemmas which
perform rule permutations: the first lemma expresses that 
asynchronous rules can always be applied first, while the second one
expresses that synchronous rules can be applied in a hereditary fashion
once the focus has been chosen.
The key ingredient of \cite{miller07cslb} is the notion of focalization
graph, analyzing dependencies in a proof and showing that there is always
at least one possible focus.

In order to ease the proof, we shall consider an intermediate
proof system whose rules enjoy a one-to-one correspondence with
the focused rules.
This involves getting rid of the cut, non-atomic axioms,
and also explicitly performing freezing.

\begin{definition}[Freezing-annotated derivation]
The freezing-annotated variant of \mumall\ is obtained by
removing the cut rule,
enriching the sequent structure with an annotation for frozen fixed points
or atoms,
restricting the initial rule to be applied only on frozen asynchronous
formulas,
and adding explicit annotation rules:
\[
   \infer{\vdash \freeze{a^\perp\t}, a\t}{}
\quad\quad
   \infer{\vdash \freeze{\nu\B\t}, \mu B \t}{}
\quad\quad
   \infer{\vdash \Gamma, \nu B \t}{\vdash \Gamma, \freeze{\nu B \t}}
\quad\quad
   \infer{\vdash \Gamma, a^\perp\t}{\vdash \Gamma, \freeze{a^\perp\t}}
\]

Atomic instances of $init$ can be translated into freezing-annotated
derivations:
\[ \infer{\vdash \nu B\t, \mu \B\t}{}
\quad \longrightarrow \quad
   \infer{\vdash \nu B\t , \mu \B\t }{
   \infer{\vdash \freeze{\nu B\t}, \mu \B\t }{}}
\quad\quad\quad\quad
   \infer{\vdash a^\perp\t, a\t}{}
\quad \longrightarrow \quad
   \infer{\vdash a^\perp\t , a\t }{
   \infer{\vdash \freeze{a^\perp\t}, a\t }{}} \]
Arbitrary instances of $init$ can also be obtained by first expanding them
to rely only on atomic $init$, using Proposition~\ref{def:atomicinit},
and then translating atomic $init$ as shown above.
We shall denote by $init*$ this derived generalized axiom.
Any \mumall\ derivation can be transformed into a freezing-annotated one
by normalizing it and translating $init$ into $init*$.
\end{definition}

The asynchronous freezing-annotated rules (that is,
those whose principal formula is asynchronous) correspond naturally
to asynchronous rules of the $\mu$-focused system.
Similarly, synchronous freezing-annotated rules correspond to
synchronous focused rules, which includes the axiom rule.
The switching rules of the $\mu$-focused system
do not have a freezing-annotated equivalent:
they are just book-keeping devices marking phase transitions.

From now on we shall work on freezing-annotated derivations,
simply calling them derivations.

\subsubsection{Balanced derivations}

In order to ensure that the focalization process terminates, we have to 
guarantee that the permutation steps preserve some measure over derivations.
The main problem here comes from the treatment of fixed points,
and more precisely from the fact that there is a choice in the asynchronous
phase regarding greatest fixed points.
We must ensure that a given greatest fixed point formula is always used in 
the same way in all additive branches of a proof:
if a greatest fixed point is copied by an additive conjunction or $\neq$,
then it should either be used for coinduction in all branches,
or frozen and used for axiom in all branches.
Otherwise it would not be possible to permute the treatment of the 
$\nu$ under that of the $\llwith$ or $\neq$ while controlling the size of
the transformed derivation.

\begin{definition}[Balanced derivation]
A greatest fixed point occurrence is \emph{used in a balanced way}
if all of its principal occurrences are used consistently:
either they are all frozen or they are all used for coinduction, 
with the same coinvariant.
We say that a derivation is \emph{balanced} if it is quasi-finite
and all greatest fixed points occurring in it are used in a balanced way.
\end{definition}

\begin{lemma}\label{lem:invwith}
If $S_0$ and $S_1$ are both coinvariants for $B$
then so is $S_0\llplus S_1$.
\end{lemma}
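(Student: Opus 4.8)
The plan is to reduce coinvariance of $S_0 \llplus S_1$ to that of $S_0$ and $S_1$ by composing functoriality (which internalises the monotonicity of $B$) with a cut. Recall that, by the convention on predicate-level connectives and the definition of negation, $(S_0 \llplus S_1)\x$ abbreviates $S_0\x \llplus S_1\x$, hence $((S_0 \llplus S_1)\x)^\perp$ is $(S_0\x)^\perp \llwith (S_1\x)^\perp$. So, to produce the required coinvariance derivation $\x;\vdash B(S_0\llplus S_1)\x,\,((S_0\llplus S_1)\x)^\perp$, I would begin with the $\llwith$ rule, which leaves, for each $i\in\{0,1\}$, the subgoal $\x;\vdash B(S_0\llplus S_1)\x,\,(S_i\x)^\perp$.

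For a fixed $i$, first construct the trivial entailment $S_i \llimp S_0\llplus S_1$: from an instance of $init$ and the appropriate $\llplus$ rule one gets $\x;\vdash (S_i\x)^\perp,\, S_0\x\llplus S_1\x$. Feeding this to functoriality via the derived rule $\infer[B]{\Sigma;\vdash BP,\B Q}{\x;\vdash P\x,Q\x}$, instantiated with $P := S_0\llplus S_1$ and $Q := S_i^\perp$, yields a derivation of $\x;\vdash B(S_0\llplus S_1)\x,\, \B(S_i^\perp)\x$. Since $B$ is monotonic, unfolding the definition of the operator negation and using involutivity gives $\B(S_i^\perp)\x = (B S_i\x)^\perp$, so this is in fact a proof of $\x;\vdash B(S_0\llplus S_1)\x,\, (B S_i\x)^\perp$. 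Cutting it against the assumed coinvariance derivation $\Theta_i$ of $\x;\vdash B S_i\x,\, (S_i\x)^\perp$ on the formula $B S_i\x$ produces exactly $\x;\vdash B(S_0\llplus S_1)\x,\, (S_i\x)^\perp$. Assembling the two branches ($i=0,1$) under the $\llwith$ rule gives the desired coinvariance derivation for $S_0\llplus S_1$.

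I do not anticipate a real obstacle here; the only delicate points are bookkeeping ones. One must check the predicate-level identity $\B(S_i^\perp) = (B S_i)^\perp$, which follows from the definition of operator negation together with $(S_i^\perp)^\perp = S_i$ and relies on $B$ being monotonic (so that $\B$ is the genuine dual operator). If one further needs the resulting coinvariance proof to be cut-free — as will be relevant for the quasi-finite, balanced derivations used in the focusing development — one appeals to the cut-elimination theorem established above to remove the cut against $\Theta_i$ and the functoriality proof, both of which are built from the fixed operator $B$.
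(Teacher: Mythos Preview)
Your argument is correct and is essentially the paper's own proof: both start with the $\llwith$ rule, then for each $i$ cut the coinvariance derivation $\Theta_i$ against the functoriality of $B$ applied to the injection $\llplus_i(init):\;\vdash S_i^\perp\y,\, S_0\y\llplus S_1\y$. Your remark about cut elimination possibly enlarging the resulting coinvariance proof also matches the paper's observation that this transformation must be performed as a preprocessing step rather than during the rule-permutation phase.
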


\begin{proof}
Let $\Pi_i$ be the derivation of coinvariance for $S_i$.
The proof of coinvariance of $S_0\llplus S_1$ is as follows:
\[ \infer[\llwith]{\vdash S_0^\perp\x \llwith S_1^\perp\x,
                          B (S_0\llplus S_1) \x}{
   \infer{\vdash S_0^\perp\x, B (S_0\llplus S_1)\x}{
   \phi_0(\Pi_0)}
   &
   \infer{\vdash S_1^\perp\x, B (S_0\llplus S_1)\x}{
   \phi_1(\Pi_1)}
} \]
The transformed derivations $\phi_i(\Pi_i)$ are obtained by functoriality:
\[ \phi_i(\Pi_i) = \infer[cut]{\vdash S_i^\perp\x, B (S_0\llplus S_1)\x}{
                  \infer{\vdash S_i^\perp\x, B S_i \x}{\Pi_i} &
                  \infer[B]{\vdash \B S_i^\perp \x,
                                   B (S_0\llplus S_1) \x}{
                  \infer[\llplus]{\vdash S_i^\perp \y, S_0\y\llplus S_1\y}{
                  \infer[init]{\vdash S_i^\perp\y, S_i\y}{}}}} \]
Notice that after the elimination of cuts, the proof of coinvariance
that we built can be larger than the original ones:
this is why this transformation cannot be done as part of
the rule permutation process.
\end{proof}

\begin{lemma} \label{lem:balance}
Any quasi-finite derivation of $\;\vdash\Gamma$
can be transformed into a balanced
derivation of $\;\vdash\Gamma$.
\end{lemma}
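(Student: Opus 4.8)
The plan is to process the quasi-finite derivation top-down, inductively rewriting it so that every greatest fixed point occurrence that appears as principal of a rule is used consistently across all additive branches that copy it. The key technical handle is Lemma~\ref{lem:invwith}: if a greatest fixed point $\nu B\t$ is introduced by coinduction with coinvariant $S_0$ in one branch of a $\llwith$ (or in one subderivation generated by a $\neq$) and with coinvariant $S_1$ in another, then $S_0\llplus S_1$ is a coinvariant for $B$, and we can replace both usages by coinduction on the \emph{common} coinvariant $S_0\llplus S_1$; the functoriality constructions $\phi_i$ from Lemma~\ref{lem:invwith} adapt the existing subderivations. More generally, since the derivation is quasi-finite it uses only finitely many coinvariants overall, so the finite additive disjunction $\bigoplus_k S_k$ of all coinvariants ever used for a given $\nu B$ is itself a coinvariant, and we can uniformly replace every coinductive use of that $\nu B$ by coinduction on this single canonical coinvariant.

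First I would fix, for each greatest fixed point \emph{location} $\ell$ occurring in the derivation, the finite set $\{S^\ell_1,\dots,S^\ell_{m_\ell}\}$ of coinvariants used at principal occurrences of $\ell$ (finite by quasi-finiteness), and set $\hat S^\ell := S^\ell_1 \llplus \cdots \llplus S^\ell_{m_\ell}$, which is a coinvariant by iterating Lemma~\ref{lem:invwith}. Then I would rewrite the derivation bottom-up: whenever a $\nu$ rule is applied on an occurrence of location $\ell$ with some coinvariant $S^\ell_j$, I replace it by a $\nu$ rule with coinvariant $\hat S^\ell$, using the $\phi_j$-style functoriality transformation to convert the old coinvariance subderivation (proving $\vdash (S^\ell_j)^\perp\x, B S^\ell_j\x$) into one proving $\vdash (\hat S^\ell)^\perp\x, B\hat S^\ell\x$, and converting the main subderivation (which continues on $S^\ell_j\t$) by precomposing with the injection $S^\ell_j\t \vdash \hat S^\ell\t$ realized via $\llplus$ and $init$, then a cut. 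After this pass, all coinductive uses of $\ell$ share the coinvariant $\hat S^\ell$. The remaining discrepancy — some occurrences of $\ell$ frozen, others used for coinduction — is resolved by a second, independent choice: for each $\ell$, decide globally whether $\ell$ is "frozen everywhere" or "coinduced everywhere". If $\ell$ is used for coinduction anywhere, then wherever it was instead frozen-then-axiomed we replace the freezing rule and the subsequent $init*$ on $\ell$ by the $\nu$ rule with coinvariant $\hat S^\ell$ followed by the derivation of $\nu B\t \lleq B(\nu B)\t$-style material; conversely if $\ell$ is frozen everywhere there is nothing to do. One must check that these rewrites compose coherently when several nested fixed points are involved and that the result is still quasi-finite (it uses at most as many distinct coinvariants as before, up to replacing each family by its disjunction) and still a valid (cut-free, after a final cut-elimination pass) derivation — here I would invoke the cut elimination theorem to absorb the cuts introduced by the $\phi_j$ transformations, noting as in Lemma~\ref{lem:invwith} that this step genuinely cannot be folded into pure rule permutation.

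The main obstacle I anticipate is \emph{termination and control of size} under nesting: when a fixed point $\ell$ occurs inside the body of another fixed point $\ell'$, rewriting $\ell'$'s coinvariant to $\hat S^{\ell'}$ may itself create new coinvariants mentioning $\ell$, and the functoriality transformations push material under $B$, potentially interacting with further $\nu$ rules. I would handle this by processing fixed points in an order respecting the subformula/nesting structure (innermost first, or by induction on the nesting depth of fixed points in the operators involved), arguing that at each stage the set of coinvariants for not-yet-processed fixed points stays finite, so the canonical disjunctions $\hat S^\ell$ are well-defined; and by appealing to cut elimination only \emph{after} the combinatorial rewriting is complete, so that the size blow-up from the $\phi_j$ cuts does not feed back into the rewriting process. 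The secondary subtlety is purely bookkeeping: making the notion of "principal occurrence of a greatest fixed point" and "used consistently" precise enough (in terms of locations, as set up earlier in the paper) that the rewritten derivation demonstrably satisfies Definition of balanced — I expect this to be routine given the locative discipline already established.
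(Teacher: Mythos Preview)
Your two-phase plan --- merge coinvariants location by location via Lemma~\ref{lem:invwith}, then resolve the frozen/coinducted discrepancy by converting frozen instances to coinductions --- is exactly the paper's strategy. Two concrete points need fixing.

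First, your merged coinvariant $\hat S^\ell = S^\ell_1\llplus\cdots\llplus S^\ell_{m_\ell}$ lacks a crucial disjunct. To replace a frozen axiom $\vdash(\nu B\t)^*,\mu\B\t$ by a $\nu$ rule with coinvariant $\hat S^\ell$, the main premise becomes $\vdash \hat S^\ell\t,\mu\B\t$. Being a coinvariant only yields $\hat S^\ell\llimp\nu B$; here you need the converse, which fails in general. The paper adjoins the unfolding coinvariant $B(\nu B)$ as an extra disjunct of $S_{\nu B}$ precisely for this step: one injects into that disjunct via $\llplus$, applies $\mu$ on the other side, and closes by $init*$ on $\vdash B(\nu B)\t,\B(\mu\B)\t$. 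Your appeal to ``$\nu B\t\lleq B(\nu B)\t$-style material'' does not fill the hole without that extra disjunct already sitting inside the coinvariant.

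Second, you locate the termination difficulty in the wrong phase and consequently propose the wrong processing order. Coinvariant merging need not be iterative: compute every $\hat S^\ell$ from the \emph{original} derivation and substitute them all simultaneously; no feedback arises, so your worry about ``new coinvariants mentioning $\ell$'' evaporates. The genuine termination issue is in the second phase. Converting a frozen $\nu B$ to a coinduction ends (via the construction above) with $init*$ on $\vdash B(\nu B)\t,\B(\mu\B)\t$, and the expansion of that $init*$ introduces fresh freezings on greatest-fixed-point \emph{sublocations} of $\nu B$; if such a sublocation is coinducted on elsewhere, it becomes unbalanced. Processing innermost-first therefore runs backwards: balancing an outer fixed point can undo the balancing of an inner one you already handled. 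The paper instead assigns each greatest fixed point a \emph{degree} --- its maximum sublocation-distance to any greatest fixed point occurring in $\Pi_0$ --- and always selects an unbalanced fixed point of \emph{maximal} degree, so that any newly unbalanced sublocations have strictly smaller degree and the multiset of degrees of unbalanced fixed points strictly decreases.
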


\begin{proof}
We first ensure that all coinvariants used for the same (locatively identical)
greatest fixed point are the same.
For each $\nu B$ on which at least one coinduction is performed
in the proof, this is achieved by taking the union of all coinvariants
used in the derivation,
thanks to Lemma~\ref{lem:invwith},
adding to this union the unfolding coinvariant $B (\nu B)$.
Note that quasi-finiteness is needed here to
ensure that we are only combining finitely many coinvariants.
Let $S_{\nu B}$ be the resulting coinvariant,
of the form $S_0 \llplus \ldots \llplus S_n \llplus B (\nu B)$,
and $\Theta_{\nu B}$ be the proof of its coinvariance.
We adapt our derivation by
changing every instance of the $\nu$ rule as follows:
\[ \infer{\vdash \Gamma, \nu B \t}{
      \vdash \Gamma, S_i \t &
      \infer{\vdash S_i^\perp \x, B S_i \x}{\Theta_i}}
   \quad\longrightarrow\quad
   \infer{\vdash \Gamma, \nu B \t}{
     \infer=[\llplus]{\vdash
       \Gamma, S_{\nu B}\t}{
       \vdash \Gamma, S_i\t}
     &
     \infer{\vdash S_{\nu B}^\perp\x, BS_{\nu B}\x}{\Theta_{\nu B}}
  } \]

It remains to ensure that a given fixed point is either always coinducted on
or always frozen in the derivation.
We shall balance greatest fixed points,
starting with unbalanced fixed points closest to the root,
and potentially unbalancing deeper fixed points in that process,
but without ever introducing unbalanced fixed points that were not initially
occurring in the proof.

Let $\Pi_0$ be the derivation obtained at this point.
We define the degree of a greatest fixed point
to be the maximum distance in the sublocation ordering
to a greatest fixed point sublocation occurring in $\Pi_0$,
$0$ if there is none.
Quasi-finiteness ensures that degrees are finite,
since there are only finitely many locations occurring at toplevel
in the sequents of a quasi-finite derivation.
We shall only consider derivations in which greatest fixed points that
are coinducted on are also coinducted on with the same coinvariant
in $\Pi_0$, and maintain this condition
while transforming any such derivation into a balanced one.
We proceed by induction on
the multiset of the degrees of unbalanced fixed points in the derivation,
ordered using the standard multiset ordering |
note that degrees are well defined for all unbalanced fixed points since they
must also occur in $\Pi_0$.
%
%
If there is no unbalanced fixed point, we have a balanced proof.
Otherwise, pick an unbalanced fixed point of maximal degree.
It is frozen in some branches 
and coinducted on in others.
We remove all applications of freezing on that fixed point,
which requires to adapt axioms\footnote{
  Note that instead of the unfolding coinvariant $B(\nu B)$ we could have
  used the coinvariant $\nu B$. This would yield a simpler proof,
  but that would not be so easy to adapt for $\nu$-focusing in
  Section~\ref{sec:nufoc}.
}:
\[ \infer{\vdash \freeze{\nu B \t}, \mu \B \t}{}
\quad\longrightarrow\quad
   \infer[\nu]{\vdash \nu B \t, \mu\B\t}{
   \infer=[\llplus]{\vdash S_{\nu B}\t,\mu\B\t}{
   \infer[\mu]{\vdash B (\nu B) \t, \mu\B\t}{
   \infer[init*]{\vdash B (\nu B) \t, \B(\mu\B)\t}{}}}
   & \infer{\vdash S^\perp_{\nu B}\x, BS_{\nu B}\x}{\Theta_{\nu B}}} \]
The fixed point $\nu B$ is used in a balanced way in the resulting derivation.
Our use of the derived rule $init*$ might have introduced
some new freezing rules on greatest fixed point
sublocations of $B(\nu B)$ or $\B (\mu \B)$.
Such sublocations, if already present in the proof,
may become unbalanced, but have a smaller degree.
Some new sublocations may also be introduced,
but they are only frozen as required.
The new derivation has a smaller multiset
of unbalanced fixed points, and we can conclude by induction hypothesis.
\end{proof}

Balancing is the most novel part of our focalization process.
This preprocessing is a technical device ensuring
termination in the proof of completeness,
whatever rule permutations are performed.
It should be noted that balancing is often too strong,
and that many focused proofs are indeed not balanced.
For example,
it is possible to obtain unbalanced focused proofs
by introducing an additive conjunction before treating a greatest 
fixed point differently in each branch.

\subsubsection{Focalization graph}

We shall now present the notion of focalization graph
and its main properties~\cite{miller07cslb}.
As we shall see, their adaptation to \mumall{}
is trivial\footnote{
  Note that we do not use the same notations:
  in \cite{miller07cslb}, ${\prec}$ denotes the subformula relation
  while it represents accessibility in the focalization graph in our case.
}.

\begin{definition}
The \emph{synchronous trunk} of a derivation is its largest prefix
containing only applications of synchronous rules.
It is a potentially open subderivation having the same conclusion sequent.
The open sequents of the synchronous trunk (which are conclusions
of asynchronous rules in the full derivation) and its initial sequents
(which are conclusions of $init$, $\llone$ or ${=}$)
are called \emph{leaf sequents} of the trunk.
\end{definition}

\begin{definition}
We define the relation $\prec$ on the formulas of
the base sequent of a derivation $\Pi$:
$P\prec Q$ if and only if
there exists $P'$, asynchronous subformula\footnote{
  This does mean subformula in the locative sense,
  in particular with (co)invariants being subformulas of
  the associated fixed points.
} of $P$,
and $Q'$, synchronous subformula of $Q$,
such that $P'$ and $Q'$ occur in the same
leaf sequent of the synchronous trunk of $\Pi$.
\end{definition}

The intended meaning of $P\prec Q$ is that we must focus on $P$ before $Q$.
Therefore, the natural question is the existence of minimal elements for that 
relation, equivalent to its acyclicity.

\begin{proposition} \label{prop:mini_subform}
If $\Pi$ starts with a synchronous rule,
and $P$ is minimal for $\prec$ in $\Pi$,
then so are its subformulas in their respective subderivations.
\end{proposition}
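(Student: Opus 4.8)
The plan is to unfold the definition of $\prec$ and reduce the whole statement to one structural observation about synchronous trunks. Since $\Pi$ starts with a synchronous rule $r$, that rule lies inside the synchronous trunk of $\Pi$, and the part of that trunk sitting above the $i$-th premise $\Pi_i$ of $r$ is exactly the synchronous trunk of $\Pi_i$; hence $\operatorname{trunk}(\Pi) = r(\operatorname{trunk}(\Pi_1),\ldots,\operatorname{trunk}(\Pi_n))$, and every leaf sequent of $\operatorname{trunk}(\Pi_i)$ is, as a node of the derivation tree, also a leaf sequent of $\operatorname{trunk}(\Pi)$, containing the same formulas. If $r$ has no premises — it is $init$, $init*$, $\llone$ or $=$ — there is nothing to prove, so I would assume $r$ is one of $\lltens$, $\llplus$, $\exists$, $\mu$, with principal formula $P$. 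Inspecting these rules and using the remarks on locations from Section~\ref{sec:mumall}, the descendant of $P$ appearing in a premise $\Pi_i$ is always a sublocation of $P$ (it is $P'$ or $P''$ for $\lltens$, the chosen disjunct for $\llplus$, the instance $P't$ for $\exists$, and the unfolding $B(\mu B)\t$ for $\mu$), and the context of $\Pi_i$ is a submultiset of the context of $\Pi$, so every formula of $\Pi_i$'s base sequent other than that descendant already occurs in $\Pi$'s base sequent.

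The key step is then a contraposition. Fix a premise $\Pi_i$, write $P_i$ for the sublocation of $P$ that it contains, and suppose $P_i$ is not minimal for $\prec$ in $\Pi_i$: some $Q$ in the base sequent of $\Pi_i$ satisfies $Q \prec_{\Pi_i} P_i$, witnessed by an asynchronous subformula $Q'$ of $Q$ and a synchronous subformula $P_i''$ of $P_i$ occurring together in a leaf sequent $L$ of $\operatorname{trunk}(\Pi_i)$. Now $L$ is also a leaf sequent of $\operatorname{trunk}(\Pi)$; $P_i''$ is a synchronous subformula of $P$ because $P_i$ is a sublocation of $P$; and $Q$ is a formula of $\Pi$'s base sequent — either it lies in the shared context, or $Q = P_i$, in which case $Q'$ is already an asynchronous subformula of $P$ and the same witness reads as $P \prec_\Pi P$. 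Either way we obtain a $\prec_\Pi$-predecessor of $P$, contradicting the minimality of $P$ in $\Pi$. Hence every such $P_i$ is minimal in its subderivation, which is exactly the claim; the proposition then serves elsewhere as the one-step lemma which, iterated until an asynchronous rule or an initial sequent is reached, produces hereditary focusing on a minimal formula.

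The whole argument is essentially bookkeeping, and the only places where I expect to need care are purely ``locative'': checking that in the $\mu$ case the unfolding $B(\mu B)\t$ genuinely counts as a sublocation of $\mu B\t$, so that its subformulas are subformulas of $\mu B\t$; that the $\exists$ rule replaces the location of $\exists x.\,P'x$ by its single sublocation, instantiated at $t$; and that a possible self-loop $P_i \prec_{\Pi_i} P_i$ still lifts to $P \prec_\Pi P$ and so still violates minimality. None of these is a real obstacle, and in particular no induction is needed inside this proposition.
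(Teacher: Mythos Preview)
Your approach is essentially the paper's: lift a $\prec$-predecessor in a subderivation to a $\prec$-predecessor in $\Pi$, contradicting minimality. The observation that leaf sequents of $\operatorname{trunk}(\Pi_i)$ are leaf sequents of $\operatorname{trunk}(\Pi)$ is exactly what drives the paper's argument as well.

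There is, however, a genuine gap. You assume that the principal formula of the first rule $r$ is $P$ itself, but the proposition does not say this, and it is in fact applied in the paper in situations where the first rule acts on some \emph{other} formula (see Lemma~\ref{lem:sync}, where one first permutes a different synchronous rule and then needs $P$ to remain minimal in the unique subderivation containing it). In that situation the premise $\Pi_i$ contains $P$ unchanged together with a subformula of the actual principal formula $R\neq P$; your case analysis ``either $Q$ lies in the shared context, or $Q=P_i$'' then misses the possibility $Q=$ that subformula of $R$. The witnessing leaf sequent still lies in $\operatorname{trunk}(\Pi)$, but the predecessor you obtain in $\Pi$ is $R\prec P$, not $Q\prec P$.

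The fix is immediate and is precisely what the paper does: drop the assumption that $P$ is principal, and observe uniformly that every formula in the conclusion of $\Pi_i$ either already occurs in the conclusion of $\Pi$ or is a subformula of the principal formula of $r$; a relation $Q\prec_{\Pi_i} P'$ (with $P'$ equal to $P$ or to the relevant immediate subformula of $P$) then lifts to $Q'\prec_\Pi P$ where $Q'$ is $Q$ in the first case and the principal formula in the second. Your locative remarks about $\mu$, $\exists$, and the self-loop $P\prec P$ are fine and carry over without change.
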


\begin{proof}
There is nothing to do
if $\Pi$ simply consists of an initial rule.
In all other cases
($\lltens$, $\llplus$, $\exists$ and $\mu$)
let us consider any subderivation $\Pi'$ in which
the minimal element $P$ or one of its subformulas $P'$ occurs
| there will be exactly one such $\Pi'$, except in the case of a tensor
applied on $P$.
The other formulas occurring in the conclusion of $\Pi'$
either occur in the conclusion of $\Pi$ or are subformulas
of the principal formula occurring in it.
This implies that a $Q\prec P$ or $Q\prec P'$ in $\Pi'$
would yield a $Q'\prec P$ in $\Pi$,
which contradicts the minimality hypothesis.
\end{proof}

\begin{lemma}\label{lem:mini}
The relation $\prec$ is acyclic.
\end{lemma}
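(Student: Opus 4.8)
The plan is to prove acyclicity by induction on the structure of the derivation $\Pi$, using the synchronous trunk decomposition and Proposition~\ref{prop:mini_subform} as the key inductive tool. Since acyclicity of a finite relation is equivalent to the existence of a minimal element in every nonempty restriction, and since $\prec$ is defined on the finitely many formulas of the base sequent, it suffices to show that $\prec$ always has a minimal element; but actually the cleanest route is to show the stronger statement directly by induction, namely that in any (balanced) derivation $\Pi$, the relation $\prec$ on the conclusion sequent is acyclic. First I would observe that if $\Pi$ begins with an asynchronous rule, then the synchronous trunk is trivial (just the root), so each leaf sequent is the conclusion itself; in this degenerate case one inspects the asynchronous rule applied and reduces to the subderivation(s): the principal asynchronous formula is decomposed, and any $\prec$-cycle in $\Pi$ would project to a $\prec$-cycle in a premise derivation (using that asynchronous subformulas of the principal formula are asynchronous subformulas of the premises, and that the context formulas are carried along unchanged up to their sublocations). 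The $\llwith$ and $\neq$ cases need the balanced-derivation hypothesis so that the greatest fixed points are treated uniformly across branches, which is exactly what lets the cycle in $\Pi$ be reflected into a single premise.

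Second, for the case where $\Pi$ begins with a synchronous rule, I would argue as follows. Suppose for contradiction that $\prec$ has a cycle $P_1 \prec P_2 \prec \cdots \prec P_k \prec P_1$ among the formulas of the conclusion. Because $\prec$ relates an asynchronous subformula of the source to a synchronous subformula of the target through co-occurrence in a leaf sequent of the synchronous trunk, and because the synchronous trunk here is nontrivial, I would push the analysis into the subderivations hanging off the trunk. The point is that Proposition~\ref{prop:mini_subform} gives us the inductive handle: if $P$ is $\prec$-minimal in $\Pi$, its subformulas stay minimal in the respective subderivations. Dually, a cycle would force some formula in the cycle to fail minimality hereditarily, but tracing through the trunk the relation $\prec$ on the leaf-sequent formulas is, by induction hypothesis applied to the open subderivations, acyclic — and any edge $P_i \prec P_{i+1}$ of the top-level cycle arises from an edge between subformulas in some leaf sequent, which the subderivation rooted there (or more precisely the asynchronous subderivation continuing from that leaf) must also record, contradicting acyclicity there.

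The main obstacle, and the step requiring the most care, is controlling the interaction between the synchronous trunk and the asynchronous subderivations that continue from its leaf sequents: $\prec$ is defined via co-occurrence of an asynchronous subformula $P'$ of $P$ and a synchronous subformula $Q'$ of $Q$ in a leaf sequent, so a cycle at the level of the original formulas must be transported to a cycle at the level of these subformulas, and then the induction hypothesis has to be applied to the derivation that begins at that leaf sequent (after the asynchronous phase resumes). Making this transport precise — tracking which subformulas land in which leaf sequent, and verifying that the "first focus" ordering is genuinely inherited — is where the bookkeeping lives; this is essentially the adaptation of the argument of~\cite{miller07cslb} to the fixed-point setting, and the remark in the excerpt that "their adaptation to \mumall{} is trivial" signals that once the leaf-sequent/subformula correspondence is set up (with (co)invariants counted as subformulas), no genuinely new difficulty arises beyond the MALL case, the fixed-point rules $\mu$ and $\nu$ behaving just like the other synchronous and asynchronous rules respectively for the purpose of this combinatorial analysis.
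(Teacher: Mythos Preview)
Your plan has two real gaps, and one misconception.

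First, the asynchronous case is not a recursive step but an immediate base case. When $\Pi$ starts with an asynchronous rule (or an initial synchronous rule), the synchronous trunk is just the root, so the \emph{only} leaf sequent is the conclusion itself. Then $P\prec Q$ holds exactly when $P$ itself is asynchronous and $Q$ itself is synchronous; the relation is bipartite with all edges going one way, hence trivially acyclic. There is no need to inspect the asynchronous rule or recurse into its premises, and in fact recursing does not help: the $\prec$ relation on the premise is computed with respect to the premise's own synchronous trunk, which may be nontrivial and bears no direct relation to the bipartite $\prec$ at the root. Your claim that balancing is needed for the $\llwith$/$\neq$ branches is therefore misplaced --- the lemma holds for arbitrary derivations, and the paper's proof never invokes balancing.

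Second, and more seriously, your synchronous case does not go through as stated. You propose to transport a putative cycle $P_1\prec\cdots\prec P_k\prec P_1$ to the leaf sequents of the trunk and then invoke the induction hypothesis on the subderivations rooted there. But each edge $P_i\prec P_{i+1}$ is witnessed in \emph{some} leaf sequent, and different edges may be witnessed in \emph{different} leaf sequents; there is no single subderivation that sees the whole cycle, so ``contradicting acyclicity there'' does not make sense. Moreover, the subderivations at leaf sequents begin with asynchronous rules, so by the base case their $\prec$ is the trivial bipartite relation, which tells you nothing about the original cycle. The paper instead inducts on \emph{immediate} subderivations: for $\llplus$, $\exists$, $\mu$ the relation is isomorphic to that of the unique premise; the crux is $\lltens$, where the conclusion's graph is the union of the two premise graphs with the two principal subformulas merged into $P\lltens P'$. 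Since only that merged node connects the two sides, a minimal cycle cannot cross between them (it would have to visit the merged node twice), hence lies entirely on one side and contradicts the induction hypothesis for that premise. This graph-decomposition argument is the missing idea; Proposition~\ref{prop:mini_subform} is not the right tool here, as it concerns preservation of minimality, not transport of cycles.
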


\begin{proof}
We proceed by induction on the derivation $\Pi$.
If it starts with an asynchronous rule or an initial synchronous rule,
\ie\ its conclusion sequent is a leaf of its synchronous trunk,
acyclicity is obvious since $P\prec Q$ iff $P$ is asynchronous
and $Q$ is synchronous.
If $\Pi$ starts with $\llplus$, $\exists$ or $\mu$,
the relations $\prec$ in $\Pi$ and its subderivation
are isomorphic (only the principal formula changes)
and we conclude by induction hypothesis.
In the case of $\lltens$,
say $\Pi$ derives $\,\vdash\Gamma,\Gamma',P\lltens P'$,
only the principal formula $P\lltens P'$ has subformulas in both premises
$\,\vdash\Gamma,P$ and $\,\vdash\Gamma',P'$.
Hence there cannot be any $\prec$ relation between a formula of $\Gamma$
and one of $\Gamma'$.
In fact, the graph of $\prec$ in the conclusion is obtained by taking
the union of the graphs in the premises
and merging $P$ and $P'$ into $P\lltens P'$.
Suppose, \emph{ab absurdo}, that $\prec$ has cycles in $\Pi$,
and consider a cycle of minimal length.
It cannot involve nodes from both $\Gamma$ and $\Gamma'$:
since only $P\lltens P'$ connects those two components,
the cycle would have to go twice through it,
which contradicts the minimality of the cycle's length.
Hence the cycle must lie within
$(\Gamma,P\lltens P')$ or $(\Gamma',P\lltens P')$
but then there would also be a cycle in the corresponding premise
(obtained by replacing $P\lltens P'$ by its subformula)
which is absurd by induction hypothesis.
\end{proof}

\subsubsection{Permutation lemmas and completeness} 

We are now ready to describe the transformation of a balanced derivation
into a $\mu$-focused derivation.

\begin{definition}
We define the \emph{reachable locations} of a balanced
derivation $\Pi$, denoted by $|\Pi|$, by taking
the finitely many locations occurring at toplevel in sequents of $\Pi$,
ignoring coinvariance subderivations,
and saturating this set by adding the sublocations of
locations that do not correspond to fixed point expressions.
\end{definition}

It is easy to see that $|\Pi|$ is a finite set.
Hence $|\Pi|$, ordered by strict inclusion, is a well-founded measure
on balanced derivations.

Let us illustrate the role of reachable locations with the following
derivations:
\[ \infer[\nu]{\vdash \nu B \t, a \llpar b, \top}{
   \infer[\top]{\vdash S \t, a\llpar b, \top}{} &
   \infer{\vdash S^\perp\x, BS\x}{\vdots}}
\quad\quad\quad
   \infer[\llpar]{\vdash \nu B \t, a \llpar b, \top}{
   \infer[\top]{\vdash \nu B \t, a, b, \top}{}} \]
For the first derivation, the set of reachable locations is
$\{ \nu B \t, a \llpar b, \top, S\t, a, b \}$.
For the second one, it is $\{ \nu B \t, a\llpar b, \top, a, b \}$.
As we shall see, the focalization process may involve transforming
the first derivation into the second one, thus loosing reachable locations,
but it will never introduce new ones.
In that process, the asynchronous rule $\llpar$ is ``permuted'' under
the $\top$, \ie\ the application of $\top$ is delayed by the insertion
of a new $\llpar$ rule.
This limited kind of proof expansion does not affect reachable locations.
A more subtle case is that of ``permuting'' a fixed point rule under $\top$.
This will never happen for $\mu$. For $\nu$, the permutation
will be guided by the existing reachable locations:
if $\nu$ currently has no reachable sublocation it will be frozen,
otherwise it will be coinducted on,
leaving reachable sublocations unchanged in both cases.
The set of reachable locations is therefore
a skeleton that guides the focusing process,
and a measure which ensures its termination.

\begin{lemma} \label{lem:inst}
For any balanced derivation $\Pi$,
$|\Pi\theta|$ is balanced and $|\Pi\theta|\subseteq|\Pi|$.
\end{lemma}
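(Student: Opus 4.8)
The plan is to exploit the fundamental fact, stressed when locations were introduced, that locations are independent of the term structure of formulas: all instantiations of a formula share the same location. Consequently, applying a substitution to a derivation changes neither the locations occurring in its sequents nor the rule applied to any given occurrence, and it leaves every closed coinvariant untouched. The only genuine effect of instantiation is on the $\neq$ rule, where Definition~\ref{def:inst} tells us that subderivations may be erased (when a unifier becomes incompatible with $\theta$) or duplicated, possibly infinitely, each copy being an instance $\Pi_\sigma\theta'$ of an original subderivation. I would carry out the whole argument by induction on $\Pi$, following the construction of $\Pi\theta$ in Definition~\ref{def:inst}.

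First I would check that $\Pi\theta$ is quasi-finite. Cut-freeness is immediate, since $\Pi\theta$ starts with the same rule as $\Pi$ and no rule ever turns into a cut. Finiteness of height follows by the same induction: every immediate subderivation of $\Pi\theta$ has the form $\Pi'\theta'$ for an immediate subderivation $\Pi'$ of $\Pi$, hence has finite height by induction hypothesis; the height of $\Pi\theta$ is in fact bounded by that of $\Pi$. For the coinvariants: the $\nu$ rule carries a closed (co)invariant $S$, and its coinvariance premise $\x;\vdash BS\x,(S\x)^\perp$ lives under a signature disjoint from the domain of $\theta$, so instantiation leaves each coinvariant, and in fact each coinvariance subderivation, literally unchanged. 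Hence the coinvariants used in $\Pi\theta$ form a subset of those used in $\Pi$, which is finite.

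Next I would check that $\Pi\theta$ is balanced. A $\nu B$ occurrence in $\Pi\theta$ corresponds, via the preservation of locations, to a $\nu B$ occurrence in $\Pi$, and each of its principal occurrences in $\Pi\theta$ is an instance of a principal occurrence of that occurrence in $\Pi$. Since instantiation changes neither which rule (freezing versus the $\nu$ rule) is applied to a formula nor the closed coinvariant attached to a $\nu$ rule, and since erasing or duplicating $\neq$-branches can only remove principal occurrences or replace them by identically-behaving instances, the balanced use of $\nu B$ in $\Pi$ is inherited by $\Pi\theta$. Finally, for $|\Pi\theta|\subseteq|\Pi|$, the same induction shows that the set of locations occurring at toplevel in the sequents of $\Pi\theta$ (ignoring coinvariance subderivations, which are unchanged) is contained in the corresponding set for $\Pi$ — equality would hold but for the possible erasure of $\neq$-branches. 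Since the reachable-locations operator then applies the same monotone closure (saturating by sublocations of non-fixed-point locations), we conclude $|\Pi\theta|\subseteq|\Pi|$.

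The only point requiring care is the bookkeeping around the $\neq$ rule: one must keep track, as recorded in Definition~\ref{def:inst}, that each branch of $\Pi\theta$ is an \emph{instance} $\Pi_\sigma\theta'$ of a branch of $\Pi$ rather than something new. Once this is in place the argument is routine, because such instances create no new locations, introduce no new coinvariants, and cannot turn a consistent use of a fixed point into an inconsistent one — so none of the three properties (quasi-finiteness, balancedness, inclusion of reachable locations) is threatened.
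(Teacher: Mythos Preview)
Your proof is correct and follows essentially the same approach as the paper: an induction on $\Pi$ along Definition~\ref{def:inst}, observing that rule applications and locations are preserved by instantiation, with the $\neq$ rule being the only place where branches may be dropped (yielding the possible strict inclusion of reachable locations). You are in fact more careful than the paper's two-line proof, explicitly treating the duplication of $\neq$-branches and the invariance of coinvariance premises, but the underlying argument is the same.
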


\begin{proof}
By induction on $\Pi$, following the definition of $\Pi\theta$.
The preservation of balancing and reachable locations is obvious since
the rule applications in $\Pi\theta$ are the same as in $\Pi$,
except for branches that are erased by $\theta$
(which can lead to a strict inclusion of reachable locations).
\end{proof}

\begin{lemma}[Asynchronous permutability] \label{lem:async}
Let $P$ be an asynchronous formula.
If $\;\vdash \Gamma, P$ has a balanced derivation $\Pi$,
then it also has a balanced derivation $\Pi'$ where $P$ is principal in the
conclusion sequent, and such that $|\Pi'|\subseteq|\Pi|$.
\end{lemma}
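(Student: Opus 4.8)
The plan is to prove Lemma~\ref{lem:async} by induction on the reachable-location measure $|\Pi|$ (ordered by strict inclusion, which is well-founded by the preceding remark), with an inner case analysis on the first rule of $\Pi$. The statement is essentially a permutation argument: we want to pull the introduction of the asynchronous formula $P$ up to the root, possibly by expanding it first so that it becomes principal immediately.

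First I would dispatch the trivial case: if $\Pi$ already has $P$ principal in its last rule, we are done with $\Pi' = \Pi$. Otherwise, $\Pi$ starts with some rule $\R$ whose principal formula $Q$ is distinct from $P$. I would then permute $\R$ below the introduction of $P$. There are two sub-cases according to whether $\R$ is synchronous or asynchronous; in either case $Q \neq P$ means $P$ appears in the context of $\R$, so $P$ occurs (identically) in each premise of $\R$, and by induction hypothesis — applied to the premise derivations, which have smaller or equal reachable-location sets by Lemma~\ref{lem:inst} and by the fact that subderivations of a balanced derivation are balanced with no more reachable locations — each premise can be rearranged so that $P$ is principal there. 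We then introduce $P$ first and apply $\R$ to the collection of premises. The delicate connectives are $\llwith$ and $\neq$, where $\R$ duplicates (indeed, $\neq$ infinitely replicates and instantiates) the context: here balancing is exactly what makes the rearrangement uniform across all branches, since a greatest fixed point inside $P$ must be treated identically (all frozen, or all coinducted with the same coinvariant) in every branch, so that after permutation a single $\nu$ or freezing step at the root suffices. For $\neq$ we additionally invoke Lemma~\ref{lem:inst} to see that each instantiated branch still has reachable locations included in $|\Pi|$, and that the $csu$ is the same so quasi-finiteness is preserved.

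The more substantial case is when $P$ itself needs to be expanded before it can be made principal — specifically when $\R$ is a synchronous rule like $\top$ or $\llone$ that closes a branch with no premises (or with a premise not mentioning the relevant subformulas of $P$), so there is no premise to recurse into. This is the situation illustrated in the paper by ``permuting a $\llpar$ under $\top$''. Here I would expand $P$ one layer: replace the $\top$-instance $\,\vdash \Gamma, P$ (where the connective of $P$ is, say, $\llpar$, $\llwith$, $\forall$, $\bot$, $\neq$, or $\nu$) by first applying the corresponding asynchronous rule for $P$ and then re-deriving $\top$ (or whichever synchronous terminal rule) in each resulting premise. For $\bot$, $\llpar$, $\forall$, $\llwith$, $\neq$ this is immediate and leaves reachable locations unchanged (the new sublocations of $P$ were already reachable since $P$ itself is reachable and not a fixed point expression). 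For $\nu B\,\t$, as the paper's discussion of reachable locations anticipates, the expansion is guided by $|\Pi|$: if $\nu B\,\t$ has a reachable sublocation (i.e. some coinvariant sublocation is in $|\Pi|$), coinduct with that same coinvariant — legitimate by balancedness — and otherwise freeze it; either way the reachable-location set is unchanged, and the coinvariance subderivation required is furnished already (balancedness records a consistent coinvariant with its proof, or we use the unfolding coinvariant via Proposition~\ref{prop:unfold}). The key technical point to verify throughout is the measure: we must check $|\Pi'| \subseteq |\Pi|$, which amounts to checking that no permutation or expansion step ever introduces a genuinely new toplevel location or a new reachable sublocation. Expansions of non-fixed-point asynchronous connectives only reveal sublocations of an already-reachable, non-fixed-point location, hence already reachable; fixed-point expansions are pinned to the existing reachable sublocations by construction; and pure rule permutations reorder existing locations without adding any.

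The main obstacle I expect is bookkeeping the $\nu$/freezing case so that both balancedness and the $|\Pi'|\subseteq|\Pi|$ inclusion are maintained simultaneously: one must ensure that when $\nu B\,\t$ is expanded, the coinvariant chosen agrees with the one used elsewhere in the (balanced) derivation, that no branch acquires an unbalanced fixed point, and that the $init*$-expansions hidden inside the freezing/coinduction translation (cf.~the proof of Lemma~\ref{lem:balance}) do not smuggle in fresh reachable locations — which is true precisely because those expansions only touch fixed-point sublocations, which are never added to $|\Pi|$ by definition. Once this is pinned down, the induction on $|\Pi|$ closes: every step either leaves the measure fixed while strictly decreasing some finer ordering on rule-depth (so a secondary induction on the height of the synchronous prefix above $P$ handles termination when $|\Pi'| = |\Pi|$), or strictly shrinks $|\Pi|$.
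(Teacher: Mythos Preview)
Your permutation argument and the role you assign to balancing are essentially right, and you correctly identify the two situations (genuine permutation across a rule with premises, versus expansion of $P$ above a premise-free rule). However, the induction you propose does not work as stated, and fixing it collapses your structure back onto the paper's.

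The primary measure you choose, $|\Pi|$, does \emph{not} strictly decrease when you pass to a premise: for most rules the set of reachable locations in the immediate subderivation equals that of $\Pi$ (the principal location is consumed but was already a reachable sublocation of a non-fixed-point formula, so saturation had included it). So you cannot invoke the induction hypothesis on premises using $|\Pi|$ alone. You notice this and add a secondary induction ``on the height of the synchronous prefix above $P$'', but that notion is not well defined in the unfocused freezing-annotated calculus (there is no synchronous/asynchronous phase structure yet, and $P$ may be duplicated by $\llwith$ or $\neq$). What actually decreases on every premise is simply the height of the derivation, which is finite by quasi-finiteness. But once you induct on height, the $|\Pi|$-induction becomes redundant: $|\Pi'|\subseteq|\Pi|$ is then just a \emph{property to maintain}, not the driver of termination. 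That is exactly what the paper does: structural induction on subderivations of the fixed initial proof $\Pi_0$, checking along the way that the result stays balanced in the same way as $\Pi_0$ and that no new reachable locations are introduced.

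Two smaller points. First, $\top$ is asynchronous, not synchronous, and $\llone$ (and $=$, and $init$) can never be the first rule here since their conclusion sequents leave no room for the asynchronous $P$; the only premise-free rules to worry about are $\top$ and a non-unifiable $\neq$. Second, your criterion for deciding whether to coinduct or freeze $\nu B\t$ above such a rule (``does $\nu B\t$ have a reachable sublocation?'') is correct but indirect; the paper simply looks at how $\nu B\t$ is used in $\Pi_0$ and reuses that coinvariance subderivation verbatim. Keeping the reference to $\Pi_0$ explicit throughout is also what lets you argue cleanly that the result remains balanced.
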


\begin{proof}
Let $\Pi_0$ be the initial derivation.
We proceed by induction on its subderivations,
transforming them while respecting the balanced use of fixed points
in $\Pi_0$.
If $P$ is already principal in the conclusion, there is nothing to do.
Otherwise, by induction hypothesis
we make $P$ principal in the immediate subderivations where it occurs,
and we shall then permute the first two rules.

If the first rule ${\cal R}$
is $\top$ or a non-unifiable instance of $\neq$, there is no subderivation,
and \emph{a fortiori} no subderivation where $P$ occurs.
In that case we apply an introduction rule for $P$,
followed by ${\cal R}$ in each subderivation.
This is obvious in the case of $\llpar$, $\llwith$, $\forall$, $\bot$,
$\neq$ and $\top$ (note that there may not be any subderivation in the last
two cases, in which case the introduction of $P$ replaces ${\cal R}$).
If $P$ is a greatest fixed point that is coinducted on in $\Pi_0$,
we apply the coinduction rule with the coinvariance premise taken in $\Pi_0$,
followed by ${\cal R}$.
Otherwise, we freeze $P$ and apply ${\cal R}$.
By construction, the resulting derivation is balanced in the same way as
$\Pi_0$, and its reachable locations are contained in $|\Pi_0|$.

In all other cases we permute the introduction of $P$ under the first rule.
The permutations of MALL rules are simple. We shall not detail them, but
note that if $P$ is $\top$ or a non-unifiable $u\neq v$, permuting
its introduction under the first rule erases that rule.
The permutations involving freezing rules are obvious,
and most of the ones involving fixed points, such as ${\lltens}/\nu$,
are not surprising:
\[\infer{\vdash \Gamma,\Gamma', P\lltens P', \nu{}B\t}{
  \infer{\vdash\Gamma,P,\nu{}B\t}{
    \vdash\Gamma,P,S\t &
    \vdash BS\x ,S\x^\bot
  } &
  \vdash\Gamma',P'
  }
\quad \longrightarrow \quad
  \infer{\vdash \Gamma,\Gamma',P\lltens P', \nu{}B\t}{
  \infer{\vdash \Gamma,\Gamma',P\lltens P', S\t}{
    \vdash \Gamma,P,S\t &
    \vdash \Gamma',P'}
  &
    \vdash BS\x, S~\x^\bot
  }
\]
The ${\llwith}/\nu$ and ${\neq}/\nu$ permutations
rely on the fact that the subderivations obtained by induction hypothesis
are balanced in the same way,
with one case for freezing in all additive branches
and one case for coinduction in all branches:
\disp{
\infer{\vdash\Gamma,P\llwith P', \nu{}B\t}{
  \infer{\vdash\Gamma,P,\nu{}B\t}{
    \infer{\vdash\Gamma,P,S\t}{\Pi} &
    \infer{\vdash BS\x, S\x^\bot\vphantom{\t}}{\Theta}
  } &
  \infer{\vdash\Gamma,P',\nu{}B\t}{
    \infer{\vdash\Gamma,P',S\t}{\Pi'} &
    \infer{\vdash BS\x, S\x^\bot\vphantom{\t}}{\Theta}
  }
}
}{
\infer{\vdash\Gamma,P\llwith P', \nu{}B\t}{
  \infer{\vdash\Gamma,P\llwith P', S\t}{
    \infer{\vdash \Gamma, P, S\t}{\Pi} &
    \infer{\vdash \Gamma, P', S\t}{\Pi'}}
  &
  \infer{\vdash BS\x, (S\x)^\bot\vphantom{\t}}{\Theta}}
}
Another non-trivial case is ${\lltens}/{\neq}$ which makes use of
Lemma~\ref{lem:inst}: %
\disp{
  \infer{\vdash \Gamma,\Gamma',P\lltens Q, u\neq v}{
  \infer{\vdash \Gamma,P,u\neq v}{
  \Set{\infer{\vdash (\Gamma,P)\sigma}{\Pi_\sigma}}{\sigma\in csu(u\unif v)}} &
  \infer{\vdash \Gamma',Q}{\Pi'}}
}{
  \infer{\vdash \Gamma,\Gamma',P\lltens Q, u\neq v}{
    \Set{
      \infer{\vdash (\Gamma,\Gamma',P\lltens Q)\sigma}{
          \infer{\vdash (\Gamma,P)\sigma}{\Pi_\sigma} &
          \infer{\vdash (\Gamma',Q)\sigma}{\Pi'\sigma}}
      }{\sigma\in csu(u\unif v)}
  }
} %
A simple inspection shows that in each case,
the resulting derivation is balanced in the same way as $\Pi_0$,
and does not have any new reachable location |
the set of reachable locations may strictly decrease only
upon proof instantiation in ${\lltens}/{\neq}$,
or when permuting $\top$ and trivial instances of $\neq$ under
other rules.
\end{proof}

\begin{lemma}[Synchronous permutability] \label{lem:sync}
Let $\Gamma$ be a sequent of synchronous and frozen formulas.
If $\;\vdash\Gamma$
has a balanced derivation $\Pi$ in which $P$ is minimal for $\prec$
then it also has a balanced derivation $\Pi'$ such that
$P$ is minimal and principal in the conclusion sequent of $\Pi'$,
and $|\Pi'|=|\Pi|$.
\end{lemma}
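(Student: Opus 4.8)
The plan is to prove Lemma~\ref{lem:sync} by a combination of two inductions: an outer induction on the reachable-location measure $|\Pi|$ (ordered by strict inclusion, which is well-founded by the finiteness of $|\Pi|$), and for the outermost structural manipulation an analysis based on the focalization graph. First, by Lemma~\ref{lem:mini} the relation $\prec$ is acyclic, so a minimal element $P$ exists; by hypothesis we are handed one. The key idea is that $P$ being minimal means nothing that must be focused on before $P$ stands in the way, so we can permute the synchronous rules acting on other formulas \emph{down} past the rules acting on (subformulas of) $P$, until $P$ itself is principal at the root. Concretely, I would look at the first rule $\mathcal{R}$ of $\Pi$: if it is already a synchronous rule acting on $P$, recurse into the premises, using Proposition~\ref{prop:mini_subform} to know that the subformulas of $P$ are still minimal in the respective subderivations, so the induction hypothesis applies there with a non-increasing measure. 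If $\mathcal{R}$ acts on some other synchronous formula $Q$, then minimality of $P$ (so $\neg(P \prec Q)$, and in the graph nothing forces $Q$ before $P$) lets us permute $\mathcal{R}$ below the treatment of $P$.

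The concrete permutations to carry out are the synchronous/synchronous cases: $\lltens$ against $\lltens$, $\llplus$, $\exists$, $\mu$, and $init$; $\llplus$ against the same list; and so on. Most of these are the familiar MALL rule commutations and are routine. The two cases needing care are the multiplicative split and the fixed-point unfolding. For $\lltens$ splitting $\Gamma$ into $\Gamma_1,\Gamma_2$: since $\Gamma$ is a sequent of synchronous/frozen formulas only, $P$ (or rather the tensor being pushed down) goes entirely to one side — here the analysis of the focalization graph from Lemma~\ref{lem:mini}'s proof is exactly what guarantees no cross-component $\prec$ edges, so $P$'s subformulas all live in one premise and we recurse there. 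For $\mu$: focusing on a least fixed point and unfolding it simply replaces $\mu B\t$ by $B(\mu B)\t$ at the same location, so the reachable locations are unchanged and the relation $\prec$ is isomorphic — we just recurse. Throughout, I would track that the transformation preserves balancedness: no new coinvariants are introduced (we only rearrange synchronous rules, never touching coinduction rules, which are asynchronous), and no new greatest-fixed-point principal occurrences are created, so the balanced use of each $\nu$-formula in $\Pi_0$ is preserved. Since synchronous rules neither instantiate terms destructively (no $\neq$, no freezing happens in this phase) nor erase branches, the reachable-location set is \emph{exactly} preserved, which is why the statement asserts $|\Pi'| = |\Pi|$ rather than mere inclusion.

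The main obstacle I expect is bookkeeping the termination argument cleanly: unlike the asynchronous lemma, where the measure $|\Pi|$ can strictly decrease and an inclusion suffices, here we need equality of $|\Pi'|$ with $|\Pi|$, so the outer induction cannot be on $|\Pi|$ itself. Instead I would use an inner structural induction on $\Pi_0$ (or on the multiset of sizes of the synchronous-rule instances above the eventual position of $P$), making $P$ principal in each immediate subderivation first (by the inner induction hypothesis) and then performing a single permutation of the top two rules. One must check that each such single permutation strictly decreases this structural measure while leaving $|\Pi|$ fixed and preserving balancedness — the delicate point being the $\lltens$ case, where pushing the tensor down duplicates nothing but does require re-reading the focalization graph in each premise, for which Proposition~\ref{prop:mini_subform} is the essential tool. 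Once all cases are discharged, $P$ is principal and minimal at the root of $\Pi'$, and we are done.
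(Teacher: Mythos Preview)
Your proposal is essentially correct and converges on the same argument the paper uses: a structural induction on the derivation, applying the induction hypothesis to the (unique) subderivation containing $P$ and then permuting the top two synchronous rules, with Proposition~\ref{prop:mini_subform} ensuring minimality is inherited and the observation that purely synchronous permutations neither touch coinduction (so balancing is trivially preserved) nor alter the set of reachable locations (so $|\Pi'|=|\Pi|$). Your initial detour through an outer induction on $|\Pi|$ is indeed unworkable for exactly the reason you identify, and your first-paragraph case ``if $\mathcal{R}$ already acts on $P$, recurse into the premises'' overshoots --- the lemma only requires $P$ principal at the root, so that case is simply the base case with nothing to do --- but your third paragraph states the correct structure.
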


\begin{proof}
We proceed by induction on the derivation.
If $P$ is already principal, there is nothing to do.
Otherwise, since the first rule must be synchronous,
$P$ occurs in a single subderivation.
We can apply our induction hypothesis on that subderivation:
its conclusion sequent still cannot contain any asynchronous formula by
minimality of $P$ and,
by Proposition~\ref{prop:mini_subform}, $P$ is still minimal in it.
We shall now permute the first two rules, which are both synchronous.
The permutations of synchronous MALL rules are simple.
As for $\llone$, there is no permutation involving $=$.
The permutations for $\mu$ follow the same geometry as those for $\exists$
or $\llplus$. For instance, ${\lltens}/{\mu}$ is as follows:
\disp{
  \infer[\lltens]{\vdash \Gamma, \Gamma', P\lltens P', \mu B \t}{
    \vdash \Gamma,P &
    \infer[\mu]{\vdash \Gamma',P',\mu B\t}{\vdash \Gamma',P', B(\mu B)\t}}
}{
  \infer[\mu]{\vdash \Gamma, \Gamma', P\lltens P', \mu B \t}{
  \infer[\lltens]{\vdash \Gamma, \Gamma', P\lltens P', B(\mu B) \t}{
    \vdash \Gamma,P &
    \vdash \Gamma',P', B(\mu B)\t}}
}
All those permutations preserve $|\Pi|$.
Balancing and minimality are obviously preserved, respectively
because asynchronous rule applications and
the leaf sequents of the synchronous trunk are left unchanged.
\end{proof}

\begin{theorem}
The $\mu$-focused system is sound and complete with respect to \mumall:
If $\;\vdash\Uparrow\Gamma$ is provable, then $\;\vdash\Gamma$
  is provable in \mumall.
If $\;\vdash\Gamma$ has a quasi-finite \mumall\ derivation,
  then $\;\vdash\Uparrow\Gamma$ has a (focused) derivation.
\end{theorem}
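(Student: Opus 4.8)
\emph{Soundness.} This is the routine direction. Given a derivation of $\;\vdash\Uparrow\Gamma$ in the $\mu$-focused system, I would erase all phase annotations ($\Uparrow$, $\Downarrow$) and all freezing marks $(\cdot)^*$, and observe that every focused rule then becomes an instance of a \mumall\ rule or of a derived rule: the three switching rules and the two freezing/unfreezing rules collapse to the identity on sequents; the synchronous rule on $(a^\perp\t)^*\Downarrow a\t$ and the rule $(\nu\B\t)^*\Downarrow\mu B\t$ become instances of \mumall's $init$ (invoking Proposition~\ref{def:atomicinit} to recover the general axiom from atomic ones where needed); and the focused $\nu$ rule with coinvariant $S$ is literally the $\nu$ rule of Figure~\ref{fig:mumall}. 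An easy induction on the focused derivation then yields a \mumall\ derivation of $\;\vdash\Gamma$.

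\emph{Completeness: reduction to a balanced derivation.} Starting from a quasi-finite \mumall\ derivation of $\;\vdash\Gamma$, which is cut-free by definition, I would first translate every instance of $init$ into $init*$, as in the discussion following the definition of freezing-annotated derivations, obtaining a freezing-annotated derivation. This only inserts, above each axiom, the finite expansion provided by Proposition~\ref{def:atomicinit}, and introduces no new coinvariant, so the result is still quasi-finite. Lemma~\ref{lem:balance} then turns it into a \emph{balanced} derivation $\Pi$ of $\;\vdash\Gamma$. It therefore suffices to establish the main lemma: any balanced derivation of a sequent can be transformed into a $\mu$-focused derivation of the correspondingly annotated sequent.

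\emph{Completeness: the focalization loop.} The main lemma is proved by iterating the two permutation lemmas, alternating asynchronous and synchronous phases along the lines of \cite{miller07cslb}. In an asynchronous phase one repeatedly selects an asynchronous formula in the active zone, uses Lemma~\ref{lem:async} to make it principal without enlarging the set of reachable locations, and applies the corresponding focused asynchronous rule; balancing ensures that each greatest fixed point is treated uniformly in all additive branches, so this step is well defined (in particular a $\nu$ with no reachable sublocation is frozen, otherwise it is coinducted on with its fixed coinvariant). Once the active zone contains only synchronous and frozen formulas, Lemma~\ref{lem:mini} (acyclicity of $\prec$) supplies a $\prec$-minimal formula $P$: if $P$ is a frozen formula whose dual is present we close with $init$, and otherwise $P$ is synchronous, Lemma~\ref{lem:sync} lets us focus on it, and Proposition~\ref{prop:mini_subform} keeps its subformulas minimal so the focus is chained hereditarily until it is released on asynchronous subformulas, returning us to an asynchronous phase.

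\emph{The main obstacle.} Everything hinges on termination of this alternation. The outer measure is the finite set $|\Pi|$ of reachable locations ordered by strict inclusion: Lemma~\ref{lem:async} gives $|\Pi'|\subseteq|\Pi|$, Lemma~\ref{lem:sync} gives $|\Pi'|=|\Pi|$, and Lemma~\ref{lem:inst} controls the instantiations triggered by $\neq$. The delicate point is to combine this well-founded measure with structural inductions inside each phase so that the full loop is shown to terminate. Two dangers must be excluded: the asynchronous phase must not loop on greatest fixed points, which is exactly what balancing rules out once quasi-finiteness bounds the pool of coinvariants; and the synchronous phase must not loop on $\mu$-unfoldings, which cannot happen because the balanced derivation being transformed has finite height, so the hereditary decomposition of a focused-upon formula is bounded by its structure. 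Assembling these ingredients into a single terminating argument is the real work; the remaining rule permutations are bookkeeping already carried out in Lemmas~\ref{lem:async} and~\ref{lem:sync}.
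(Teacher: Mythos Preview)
Your proposal follows the paper's approach: soundness by erasing annotations, completeness by first balancing (Lemma~\ref{lem:balance}) and then iterating the two permutation lemmas organized via the focalization graph. The strategy is right, but your termination argument has a genuine gap and two imprecisions.

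The gap concerns coinvariance premises. When the focused $\nu$ rule is applied during the asynchronous phase, it produces not only the main premise on $\Gamma,S\t,\Delta$ but also the coinvariance premise $\;\vdash\Uparrow BS\x,(S\x)^\perp$. The latter is a fresh sequent whose locations are \emph{not} counted in $|\Pi|$ --- the definition of reachable locations explicitly ignores coinvariance subderivations --- so your induction on $|\Pi|$ does not cover it. The paper closes this with an \emph{outer} induction on the height of the balanced derivation: the permutation lemmas leave coinvariance subderivations untouched (Lemma~\ref{lem:async} reuses the coinvariance proof from the original $\Pi_0$ verbatim), so those subderivations have strictly smaller height and are focused by the outer induction hypothesis. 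You never mention this, and without it the argument does not close.

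Your stated reasons for termination within each phase are also off. The synchronous phase terminates not because ``the balanced derivation has finite height'' but because, after Lemma~\ref{lem:sync} makes $P$ principal (preserving $|\Pi|$), each immediate subderivation lacks the principal location and hence has strictly smaller $|\cdot|$. The asynchronous phase needs, besides $|\Pi|$, a sub-induction on the number of non-frozen formulas in the conclusion, because the freezing rule leaves $|\Pi|$ unchanged. The paper's actual measure is: an outer induction on height (for coinvariance premises), under which a well-founded induction on $|\Pi|$ with a sub-induction on the non-frozen count. Your ``combine this well-founded measure with structural inductions inside each phase'' gestures toward this but does not supply it.

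Minor: a $\prec$-minimal formula chosen for focus is never frozen. Frozen formulas appear only on the left of the focused $init$ rules; it is the synchronous dual ($a\t$ or $\mu B\t$) that is $\prec$-minimal and gets focused upon.
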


\begin{proof}
For soundness, we observe that an unfocused derivation can be obtained
simply from a focused one by erasing focusing annotations
and removing switching rules
($\;\vdash\Delta\Uparrow\Gamma$ gives $\;\vdash\Delta,\Gamma$ and
 $\;\vdash\Gamma\Downarrow P$ gives $\;\vdash \Gamma,P$).
To prove completeness, we first obtain a balanced derivation using
Lemma~\ref{lem:balance}. Then, we use permutation lemmas to reorder rules
in the freezing-annotated derivation so that we
can translate it to a $\mu$-focused derivation.
Formally, we first use an induction on the height of the derivation.
This allows us to assume that coinvariance proofs can be focused,
which will be preserved since those subderivations are left untouched
by the following transformations.
Then, we prove simultaneously the following two statements:
\begin{enumerate}
\item
  If $\;\vdash\Gamma,\Delta$ has a balanced derivation $\Pi$,
  where $\Gamma$ contains only synchronous and frozen formulas,
  then $\;\vdash\Gamma\Uparrow\Delta$ has a derivation.
\item
  If $\vdash\Gamma,P$ has a balanced derivation $\Pi$
  in which $P$ is minimal for ${\prec}$,
  and there is no asynchronous formula in its conclusion,
  then there is a focused derivation of $\vdash\Gamma\Downarrow P$.
\end{enumerate}
We proceed by well-founded induction on $|\Pi|$
with a sub-induction on the number of non-frozen formulas in the
conclusion of $\Pi$.
Note that (1) can rely on (2) for the same $|\Pi|$ but
(2) only relies on strictly smaller instances of (1) and (2).
\begin{enumerate}
\item
If there is any, pick \emph{arbitrarily} an asynchronous formula $P$,
and apply Lemma \ref{lem:async} to make it principal in the first rule.
The subderivations of the obtained proof can be focused,
either by the outer induction in the case of coinvariance proofs,
or by induction hypothesis (1) for the other subderivations:
if the first rule is a freezing, then the reachable locations of the
subderivation and the full derivation are the same, but there is one
less non-frozen formula;
with all other rules, the principal location is consumed
and reachable locations strictly decrease.
Finally, we obtain the full focused derivation by composing those
subderivations using the focused equivalent of the rule applied on $P$.

When there is no asynchronous formula left, we have shown
in Lemma~\ref{lem:mini} that there is a minimal synchronous formula $P$
in $\Gamma,\Delta$.
Let $\Gamma'$ denote $\Gamma,\Delta$ without $P$.
Using switching rules,
we build the derivation of $\vdash\Gamma\Uparrow\Delta$
from $\vdash\Gamma'\Downarrow P$,
the latter derivation being obtained by (2) with $\Pi$ unchanged.

\item
Given such a derivation,
we apply Lemma~\ref{lem:sync} to make the formula $P$ principal.
Each of its subderivations has strictly less reachable locations,
and a conclusion of the form $\;\vdash\Gamma'', P'$
where $P'$ is a subformula of $P$
that is still minimal by Proposition~\ref{prop:mini_subform}.
For each of those we build a focused derivation of
$\;\vdash\Gamma''\Downarrow P'$:
if the subderivation still has no asynchronous formula in its conclusion,
we can apply induction hypothesis (2);
otherwise $P'$ is asynchronous by minimality
and we use the switching rule releasing focus on $P'$,
followed by a derivation of $\vdash\Gamma''\Uparrow P'$
obtained by induction hypothesis (1).
Finally, we build the expected focused derivation from those
subderivations by using the focused equivalent of the
synchronous freezing-annotated rule applied on $P$.
\end{enumerate}
\vspace{-0.6cm}\end{proof}

In addition to a proof of completeness, we have actually defined
a transformation that turns any unfocused proof into a focused one.
This process is in three parts:
first, balancing a quasi-finite unfocused derivation;
then, applying rule permutations on unfocused balanced derivations;
finally, adding focusing annotations to obtain a focused proof.
The core permutation process allows to reorder asynchronous rules
arbitrarily, establishing that, from the proof search viewpoint,
this phase consists of inessential non-determinism as usual,
except for the choice concerning greatest fixed points.

In the absence of fixed points, balancing disappears,
and the core permutation process is known to preserve the essence of
proofs, \ie\ the resulting derivation behaves the same as the original
one with respect to cut elimination.
A natural question is whether our process enjoys the same property.
This is not a trivial question,
because of the merging of coinvariants which is performed during balancing,
and to a smaller extent the unfoldings also performed in that process.
We conjecture that those new transformations, which are essentially
loop fusions and unrolling, do also preserve the
cut elimination behavior of proofs.

A different proof technique for establishing completeness
consists in focusing a proof by cutting it against focused 
identities~\cite{laurent04unp,chaudhuri08jar}.
The preservation of the essence of proofs is thus an immediate
corollary of that method.
However, the merging of coinvariants cannot be performed through
cut elimination, so this proof technique (alone) cannot be used
in our case.


\subsection{The $\nu$-focused system}
\label{sec:nufoc}

While the classification of $\mu$ as synchronous and $\nu$ as
asynchronous is rather satisfying and coincides with several other observations,
that choice does not seem to be forced from the focusing point of view alone.
After all, the $\mu$ rule also commutes with all other rules.
It turns out that one can design a $\nu$-focused system
treating $\mu$ as asynchronous and $\nu$ as synchronous,
and still obtain completeness.
That system is obtained from the previous one by changing only
the rules working on fixed points:
\[ \renewcommand\arraystretch{2}\begin{array}{cp{0.3cm}c}
\infer{\vdash\Gamma\Uparrow\mu{}B\t,\Delta}{
       \vdash\Gamma\Uparrow B(\mu{}B)\t,\Delta}
& &
\infer{\vdash\Gamma\Uparrow\mu{}B\t,\Delta}{
   \vdash\Gamma,(\mu{}B\t)^*\Uparrow\Delta}
\\
\infer{\vdash\Gamma\Downarrow\nu{}B\t}{
       \vdash\Gamma\Downarrow S\t & \vdash\Uparrow BS\x, (S\x)^\bot}
& &
\infer{\vdash(\mu\B\t)^*\Downarrow\nu{}B\t}{}
\end{array} \]

Note that a new asynchronous phase must start in the coinvariance premise:
asynchronous connectives in $BS\x$ or $(S\x)^\perp$ might have to be
introduced before a focus can be picked.
For example, if $B$ is $(\lambda p.~ a^\perp \llpar \bot)$ and
$S$ is $a^\perp$, one cannot focus on $S^\perp$ immediately
since $a^\perp$ is not yet available for applying the $init$;
conversely, if $B$ is $(\lambda p.~ a)$ and $S$ is
$a\lltens\llone$, one cannot focus on $BS$ immediately.

\begin{theorem}
The $\nu$-focused system is sound and complete with respect to \mumall:
If $\;\vdash\Uparrow\Gamma$ is provable, then $\;\vdash\Gamma$
  is provable in \mumall.
If $\;\vdash\Gamma$ has a quasi-finite \mumall\ derivation,
  then $\;\vdash\Uparrow\Gamma$ has a (focused) derivation.
\end{theorem}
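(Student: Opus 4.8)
The plan is to mirror the development of Section~\ref{sec:mufoc}, interchanging the roles of $\mu$ and $\nu$. Soundness is again immediate: erasing the focusing annotations and the switching rules from a $\nu$-focused derivation yields a \mumall\ derivation, since each $\nu$-focused rule is either one of the \mumall\ rules read modulo decorations or a pure book-keeping step. For completeness I first set up the \emph{dual} freezing-annotated calculus, in which the freezable formulas are least fixed points $\mu B\t$ and asynchronous atoms: the freezing rules, the restriction of $init$ to frozen asynchronous formulas, and the derived generalized axiom $init*$ are the exact duals of those of Section~\ref{sec:mufoc}, with $\mu B\t$ now playing the part previously played by $\nu B\t$. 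I call a freezing-annotated derivation \emph{balanced} when it is quasi-finite and every least fixed point occurrence is used consistently: all of its principal occurrences frozen, or all of them unfolded.

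Next I re-prove the balancing lemma, dual to Lemma~\ref{lem:balance}: one processes unbalanced least fixed points from the root downward, ordering unbalanced occurrences by their degree in the sublocation ordering (finite by quasi-finiteness) so that unfreezing a $\mu$ can only unbalance strictly deeper occurrences, and concludes by multiset induction on these degrees. The key local rewrite replaces an axiom between a frozen $\mu B\t$ and the matching $\nu\B\t$ by the derivation that unfolds $\mu B\t$ with the $\mu$ rule, applies the derived unfolding rule $\nu R$ of Proposition~\ref{prop:unfold} to $\nu\B\t$, and closes with an atomic $init*$ between $B(\mu B)\t$ and $\B(\nu\B)\t$. This is exactly the step anticipated by the footnote to Lemma~\ref{lem:balance}: it is because the $\mu$-focused balancing standardized greatest-fixed-point coinvariants so as to include $B(\nu B)$, rather than $\nu B$, that $\nu R$ is available here with the right shape. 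There is no merging of coinvariants during balancing here, the asynchronous fixed point ($\mu$) carrying none; the $\nu R$ detour is what takes its place.

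Then I replay the permutation machinery on balanced freezing-annotated derivations. The reachable-locations measure $|\Pi|$ and Lemma~\ref{lem:inst} carry over unchanged. Asynchronous permutability (Lemma~\ref{lem:async}) is re-proved with the fixed-point cases now concerning $\mu$: ``making $\mu$ principal'' means unfolding or freezing it according to its balanced use, and the ${\llwith}/\mu$ and ${\neq}/\mu$ permutations go through because balancing forces every additive (or $csu$) branch to treat the occurrence the same way. For the focalization graph I re-prove Proposition~\ref{prop:mini_subform} and the acyclicity of $\prec$ (Lemma~\ref{lem:mini}) by the same inductions, except that the synchronous case for $\mu$ is dropped and a new synchronous case for the $\nu$-coinduction rule is added; here one must check that the fresh formulas in its coinvariance premise create no $\prec$-edges between base-sequent formulas and that relabelling $\nu B$ to its coinvariant $S$ preserves acyclicity, $S$ being a sublocation of $\nu B$. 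Synchronous permutability (Lemma~\ref{lem:sync}) is re-proved with the $\nu$-coinduction permutations following the geometry of those for $\exists$/$\llplus$, the coinvariance subderivation being carried along untouched so that minimality and balancing are preserved. Finally the two-part well-founded induction on $|\Pi|$ with a sub-induction on the number of non-frozen formulas, itself under an outer induction on derivation height (which lets us assume coinvariance subderivations are already focused), assembles a $\nu$-focused derivation from a balanced one.

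I expect the main obstacle to be the termination argument inside the balancing lemma. Unfreezing a $\mu$ via the $\nu R$ detour introduces a coinvariance subderivation --- the one built by functoriality in Proposition~\ref{prop:unfold} --- and one must verify, exactly as in the original proof, that its internal fixed points are either only frozen or have strictly smaller degree, so that the multiset of degrees of unbalanced occurrences strictly decreases and quasi-finiteness survives. This is where the coinvariant choice in the $\mu$-focused development is load-bearing, and getting that bookkeeping right --- rather than any conceptual novelty --- is where the real work lies. A secondary point of care is the new $\nu$-coinduction case in the acyclicity lemma, but there the argument is a routine adaptation once one notes, as in the remark following the $\nu$-focused rules, that the coinvariance premise opens a fresh asynchronous phase whose formulas are subformulas of nothing in the ambient base sequent.
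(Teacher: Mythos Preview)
Your proposal is correct and follows essentially the same dualization that the paper's own proof sketch outlines: freeze least fixed points, define balancing for $\mu$ (consistent freezing versus unfolding, which recursively forces the same number of unfoldings in all additive branches), adapt Lemma~\ref{lem:balance}, and then observe that the focalization graph and both permutation lemmas extend with the roles of $\mu$ and $\nu$ exchanged. Your detailed unfreezing step via a single $\mu$-unfolding followed by $\nu R$ and $init*$ is exactly the adaptation the footnote to Lemma~\ref{lem:balance} anticipates, and your identification of the degree-based termination argument and the handling of the coinvariance premise in the synchronous trunk correctly unpacks what the paper leaves as ``adapted easily'' and ``extends trivially.''
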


\begin{proof}[sketch]
The proof follows the same argument as for the $\mu$-focused system.
We place ourselves in a logic with explicit freezing annotations
for atoms and least fixed points,
and define balanced annotated derivations, requiring
that any instance of a least fixed point is used consistently throughout
a derivation, either always frozen or always unfolded;
together with the constraint on its sublocations, this means that
a least fixed point has to be unfolded the same number of times in
all (additive) branches of a derivation.
We then show that any quasi-finite annotated derivation can be balanced;
the proof of Lemma~\ref{lem:balance} can be adapted easily.
Finally, balanced derivations can be transformed into focused
derivations using permutations: the focalization graph technique
extends trivially, the new asynchronous permutations involving the
$\mu$ rule are simple thanks to balancing, and the new synchronous
permutations involving the $\nu$ rule are trivial.
\end{proof}

This flexibility in the design of a focusing system is unusual.
It is not of the same nature as the arbitrary bias assignment that
can be used in Andreoli's system: atoms are non-canonical, and the bias
can be seen as a way to indicate what is the synchrony of the formula
that a given atom might be instantiated with. But our fixed points
have a fully defined logical meaning, they are canonical.
The flexibility highlights the fact that focusing is a somewhat shallow
property, accounting for local rule permutability
independently of deeper properties such as positivity.
%
Although we do not see any practical use of such flexibility,
it is not excluded that one is discovered in the future,
like with the arbitrary bias assignment on atoms in Andreoli's original 
system.

It is not possible to treat both least and greatest fixed points
as asynchronous. Besides creating an unclear situation regarding $init$,
this would require to balance both kinds of fixed points, which is
impossible. In $\mu$-focusing, balancing greatest fixed points unfolds
least fixed points as a side effect, which is harmless since there is
no balancing constraint on those. The situation is symmetric in
$\nu$-focusing. But if both least and greatest fixed points
have to be balanced, the two unfolding processes interfere
and may not terminate anymore.
%
It is nevertheless possible to consider mixed bias assignments
for fixed point formulas, if the $init$ rule is restricted accordingly.
We would consider two logically identical variants
of each fixed point: $\mu^+$ and $\nu^+$ being treated synchronously,
$\mu^-$ and $\nu^-$ asynchronously, and the axiom rule would be
restricted to dual fixed points of opposite bias:
\[ \infer{\vdash (\mu B\t)^+, (\nu \B \t)^-}{} \quad
   \infer{\vdash (\nu B\t)^+, (\mu \B \t)^-}{} \]
This restriction allows to perform simultaneously the balancing of
$\nu^-$ and $\mu^-$ without interferences. Further, we conjecture
that a sound and complete focused proof system for that logic would be
obtained by superposing
the $\mu$-focused system for $\mu^+$, $\nu^-$ and the $\nu$-focused
system for $\mu^-$, $\nu^+$.

\subsection{Application to \muLJL} \label{sec:foc_mulj}

The examples of Section~\ref{sec:mumall_examples} showed that despite
its simplicity and linearity, \mumall\ can be related to a more
conventional logic.
In particular we are interested in drawing some connections with
\muLJ~\cite{baelde08phd},
the extension of LJ with least and greatest fixed points.
In the following, we show a simple first step to this program,
in which we capture a rich fragment of \muLJ{}
even though \mumall\ does not have exponentials.
In this section, we make use of the properties of negative formulas
(Definition~\ref{def:connectives}), which has two important consequences:
we shall use the $\mu$-focused system, and could not use the
alternative $\nu$-focused one, since it does not agree with the 
classification;
moreover, we shall work in a fragment of \mumall\ without atoms,
since atoms do not have any polarity.

We have observed (Proposition~\ref{prop:struct}) that structural rules are 
admissible for negative formulas of \mumall.
This property allows us to obtain a 
faithful encoding of a fragment of \muLJ\ in \mumall{}
despite the absence of exponentials.
The encoding must be organized so that formulas appearing on
the left-hand side of intuitionistic sequents can be encoded positively
in \mumall.
The only connectives allowed to appear negatively shall thus be
$\wedge$, $\vee$, $=$, $\mu{}$ and $\exists$.
Moreover, the encoding must commute with negation,
in order to translate the (co)induction rules correctly.
This leaves no choice in the following design.

\begin{definition}[$\H$, $\G$, \muLJL]
The fragments $\H$ and $\G$ are given by the following grammar:
{\allowdisplaybreaks\begin{eqnarray*}
\G &::=& \G\wedge \G \| \G \vee \G \| s=t \| \exists x. \G x \|
         \mu{}(\lambda{}p\lambda\x.\G p \x)\t \| p \t
        \\
  &\|& \forall{}x. \G x \| \H \supset \G \|
       \nu{}(\lambda{}p\lambda\x.\G p \x)\t \\
\H &::=& \H\wedge \H \| \H \vee \H \| s=t \|
         \exists{}x. \H x \|
         \mu{}(\lambda{}p\lambda\x.\H p \x)\t \| p \t
\end{eqnarray*}}
The logic \muLJL\ is the restriction of \muLJ\ to sequents
where all hypotheses are in the fragment $\H$,
and the goal is in the fragment $\G$.
This implies a restriction of induction and coinduction rules to
(co)invariants in $\H$.

Formulas in $\H$ and $\G$ are translated in \mumall\ as follows:
\[  \begin{array}{rcl}
 \enc{P\wedge Q} &\stackrel{def}{=}& \enc{P}\lltens\enc{Q} \\
 \enc{P\vee Q}   &\stackrel{def}{=}& \enc{P}\llplus\enc{Q} \\
 \enc{s=t} &\stackrel{def}{=}& s=t \\
 \enc{\exists x.Px} &\stackrel{def}{=}& \exists x. \enc{Px} \\
 \enc{\mu{}B\t} &\stackrel{def}{=}& \mu\enc{B}\t \\
\end{array} \qquad
\begin{array}{rcl}
 \enc{\forall x.Px} &\stackrel{def}{=}& \forall x. \enc{Px} \\
 \enc{\nu{}B\t} &\stackrel{def}{=}& \nu\enc{B}\t \\
 \enc{P\supset Q} &\stackrel{def}{=}& \enc{P} \llimp \enc{Q} \\
 \enc{\lambda p\lambda\x. B p \x} &\stackrel{def}{=}&
         \lambda p\lambda\x.\enc{B p \x} \\
 \enc{p\t} &\eqdef& p\t
\end{array} \]
\end{definition}


For reference, the rules of \muLJL\ can be obtained simply from
the rules of the focused system presented in Figure~\ref{fig:muLJL},
by translating $\Gamma;\Gamma'\vdash P$ into $\Gamma,\Gamma'\vdash P$,
allowing both contexts to contain any $\H$ formula
and reading them as sets to allow contraction and weakening.

\begin{proposition} \label{prop:01}
Let $P$ be a $\G$ formula, and $\Gamma$ a context of $\H$ formulas.
Then $\Gamma\vdash P$ has a quasi-finite \muLJL\ derivation if and only if
$\;\vdash [\Gamma]^\perp, [P]$ has a quasi-finite \mumall\ derivation,
under the restrictions that (co)invariants
in \mumall\ are of the form $\lambda\x.~ [S\x]$ for $S\x\in\enc{\H}$.
\end{proposition}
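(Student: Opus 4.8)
The plan is to prove the two implications separately, handling the \mumall-to-\muLJL\ direction through focusing, which is exactly what the $\mu$-focused system was built for. Throughout, write $[\Gamma]^\perp$ for the pointwise negated encoding of a context and recall from Definition~\ref{def:connectives} that every $\enc{\H}$ formula is \emph{positive} (it is built only from $\lltens$, $\llplus$, $=$, $\exists$, $\mu$ and predicate variables), hence every $[\H]^\perp$ formula is \emph{negative}; dually a $\enc{\G}$ formula is a negative shell of $\forall$, $\llimp$ and $\nu$ wrapped around positive $\enc{\H}$-pieces.

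For the ``only if'' direction I would argue by induction on a quasi-finite \muLJL\ derivation of $\Gamma\vdash P$, translating each rule to its \mumall\ counterpart on the encoded sequent $\;\vdash[\Gamma]^\perp,[P]$: right rules of \muLJL\ become the matching \mumall\ rules on $[P]$; left rules become \mumall\ rules on the relevant $[\H]^\perp$ formula (so $\wedge$-left is $\llpar$, $\vee$-left is $\llwith$, $=$-left is $\neq$, $\exists$-left is $\forall$, and induction on a $\mu$-hypothesis is the $\nu$ rule on $(\mu B\t)^\perp\equiv\nu\B\t$); and the (co)induction rules translate correctly precisely because the encoding commutes with negation, so a \muLJL\ (co)invariant $S\in\H$ becomes the \mumall\ (co)invariant $\lambda\x.[S\x]^\perp$. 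The one non-structural point is that \muLJL\ treats its hypothesis context as a set, which I would simulate with the admissibility of $C$ and $W$ for the negative formulas $[\H]^\perp$ (Proposition~\ref{prop:struct}); here one has to be careful, since the generic realisation of those structural rules uses a cut and cut elimination need not preserve quasi-finiteness, so either a dedicated cut-free argument for the contractions actually needed is required, or --- cleaner when it is available --- one routes this direction through the focused \muLJL\ system of Figure~\ref{fig:muLJL}, whose derivations are automatically quasi-finite.

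For the ``if'' direction, given a quasi-finite \mumall\ derivation of $\;\vdash[\Gamma]^\perp,[P]$ all of whose (co)invariants are of the shape $\lambda\x.[S\x]$ with $S\x\in\enc{\H}$, I would first invoke the completeness theorem for the $\mu$-focused system to obtain a focused derivation of $\;\vdash\Uparrow[\Gamma]^\perp,[P]$, still quasi-finite (cut-free, finite height, one coinvariant per $\nu$-location after balancing). The heart of the proof is then the observation that the encoded fragment --- sequents built from translations of $\H$- and $\G$-formulas and their frozen forms --- is closed under the $\mu$-focused rules, and that the resulting restricted system is \emph{verbatim} the focused \muLJL\ system of Figure~\ref{fig:muLJL}: decomposing an asynchronous $[\H]^\perp$ formula stays within the encoded hypothesis fragment and reproduces the \muLJL\ left rules, an asynchronous $\nu$ coming from $[\mu B'\t]^\perp$ is either frozen (using the $\mu$-hypothesis ``as an atom'') or coinducted on with a $[\H]^\perp$-shaped coinvariant (\muLJL-style induction), and the synchronous phase on $[\G]$ mirrors the right-focused \muLJL\ phase. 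Erasing the focusing annotations yields the desired quasi-finite \muLJL\ derivation of $\Gamma\vdash P$; running this correspondence backwards, together with soundness of the focused \mumall\ system, also re-proves the ``only if'' direction while keeping quasi-finiteness manifest.

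The hard part will be establishing this closure/isomorphism. Two points need the real care. First, one must check that the \emph{balancing} phase of the $\mu$-focusing completeness theorem does not leave the $[\H]$-shaped coinvariant fragment: it merges coinvariants by $\llplus$ and adjoins the unfolding coinvariant, so one needs $\H$ closed under $\vee$ (it is) and the unfolding invariant of an $\H$-operator again to lie in $\H$ (it does, since $\H$-operators are built from $\H$-connectives). Second, one must go through the $\mu$-focused rules one by one and verify that none of them, applied to an encoded sequent under the stated coinvariant restriction, can produce a sequent outside the encoded fragment --- the delicate cases being the asynchronous $\nu$ on a negated $\mu$-hypothesis and the interaction between freezing and $init$, where the restriction on coinvariants is exactly what makes the passage back to the $\H$-(co)invariants of \muLJL\ go through; showing that Figure~\ref{fig:muLJL} is then literally this restricted system is routine but lengthy bookkeeping.
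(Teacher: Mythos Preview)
Your forward direction (\muLJL\ to \mumall) matches the paper's. For the backward direction, however, the paper takes a much simpler route: it runs the same compositional translation in reverse, with no appeal to focusing. Under the coinvariant restriction, every sequent in a cut-free \mumall\ derivation of $\;\vdash[\Gamma]^\perp,[P]$ keeps the shape $\;\vdash[\Delta]^\perp,[Q]$ with $\Delta$ a multiset of $\H$-formulas and $Q\in\G$; one tracks the distinguished goal $[Q]$ locatively, and each \mumall\ rule translates to the matching \muLJL\ left or right rule (a $\lltens$ on the goal becomes $\wedge R$, the resulting context split being absorbed on the \muLJL\ side where hypothesis contexts are sets). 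Your focusing detour is essentially the argument for the \emph{next} proposition (soundness and completeness of focused \muLJL), which in the paper \emph{uses} Proposition~\ref{prop:01} rather than re-deriving it.

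Beyond being heavier, your route has a gap in the balancing step. You verify that $\H$ is closed under $\vee$ and that the unfolding of an $\H$-operator stays in $\H$, but this covers only half the cases. For a $\nu$ arising as $(\mu[B]\t)^\perp$ with $B\in\H$ (induction), the \mumall\ coinvariant has the form $[S]^\perp$, and merging gives $[S_1]^\perp\llplus[S_2]^\perp$, which is not $[S']^\perp$ for any $S'\in\H$ since no $\H$-connective encodes to $\llwith$. For a $\nu[B]\t$ with $B$ a $\G$-operator (coinduction), the unfolding coinvariant $B(\nu B)$ lies in $\G$, not $\H$. So balancing can leave the restricted fragment and your focused-to-\muLJL\ translation would stall on such coinvariants. (Your quasi-finiteness worry about realizing contraction and weakening via Proposition~\ref{prop:struct} in the forward direction is well taken; the paper glosses over it, but a direct cut-free admissibility argument for $C$ and $W$ on negative formulas closes it.)
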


\begin{proof}
The proof transformations are simple and compositional.
The induction rule corresponds to the $\nu$ rule for $(\mu{}[B]\t)^\bot$,
the proviso on invariants allowing the translations:
\[ \infer{\Gamma, \mu B \t \vdash G}{
          \Gamma,S\t\vdash G &
          B S \x \vdash S\x}
    \quad \longleftrightarrow \quad
   \infer{\vdash [\Gamma]^\perp, \nu \overline{[B]} \t, [G]}{
          \vdash [\Gamma]^\perp, [S]^\perp\t, [G] &
          \vdash \overline{[B]}[S]^\perp \x, [S]\x}
\]
Here, $[S]$ stands for $\lambda\x.~ [S\x]$,
and the validity of the translation relies on the fact that
$\overline{[B]}[S]^\perp\x$ is the same as $[BS\x]^\perp$.
Note that $BS$ belongs to $\H$ whenever both $S$ and $B$ are in $\H$,
meaning that for any $p$ and $\x$, $Bp\x \in \H$.
The coinduction rule is treated symmetrically,
except that in this case $B$ can be in $\G$:
\[ \infer{\Gamma \vdash \nu B \t}{\Gamma \vdash S \t &
          S\x\vdash BS\x}
   \quad\longleftrightarrow\quad
   \infer{\vdash \enc{\Gamma}^\perp, \nu\enc{B}\t}{
          \vdash \enc{\Gamma}^\perp, \enc{S}\t &
          \vdash \enc{S}^\perp\x, \enc{B}\enc{S}\x} \]
In order to restore the additive behavior of some intuitionistic
rules (\emph{e.g.}, $\wedge{}R$) and translate the structural rules,
we can contract and weaken the negative \mumall\ formulas corresponding
to encodings of $\H$ formulas.
\end{proof}

Linear logic provides an appealing proof theoretic setting because of
its emphasis on dualities and of its clear separation of concepts
(additive \vs{}multiplicative, asynchronous \vs{}synchronous).  Our experience
is that \mumall{} is a good place to study focusing in the presence of
least and greatest fixed point connectives.  To get similar results for
\muLJ, one can either work from scratch entirely
within the intuitionistic framework or use an encoding into linear logic.
Given a mapping from intuitionistic to linear logic, and a complete focused
proof system for linear logic, one can often build a complete
focused proof-system for intuitionistic logic.
\[
\xymatrix
{
 \vdash F \ar@{.>}[d]\ar[r] & \ar[d] \vdash [F] \\
 \vdash \Uparrow F  & \ar[l]\vdash \Uparrow [F] \\
}
\]
The usual encoding of intuitionistic logic into linear logic involves
exponentials, which can damage focusing structures by causing both
synchronous and asynchronous phases to end.
Hence, a careful study of the polarity of linear connectives must
be done (cf. \cite{danos93kgc,liang07csl}) in order to minimize the
role played by the exponentials in such encodings. Here, as a result of
Proposition~\ref{prop:01}, it is possible to get a complete focused
system for \muLJL\ that inherits exactly the strong structure of linear
$\mu$-focused derivations.

This system is presented in Figure~\ref{fig:muLJL}.
Its sequents have the form $\Gamma;\Gamma'\vdash P$ where
$\Gamma'$ is a multiset of synchronous formulas (fragment $\H$)
and the set $\Gamma$ contains frozen least fixed points
in $\H$.
First, notice that accordingly with the absence of exponentials
in the encoding into linear logic, there is no structural rule.
The asynchronous phase takes place on sequents where $\Gamma'$ is not empty.
The synchronous phase processes sequents of the form
$\Gamma ; \vdash P$, where the focus is without any ambiguity on $P$.
It is impossible to introduce any connective on the right when 
$\Gamma'$ is not empty.
As will be visible in the following proof of completeness,
the synchronous phase in \muLJL\ does not correspond exactly to
a synchronous phase in \mumall: it contains rules that are translated
into asynchronous \mumall\ rules, namely implication, universal
quantification and coinduction.
We introduced this simplification in order to simplify the presentation,
which is harmless since there is no choice in refocusing afterwards.

\begin{figure}[htpb]
\begin{center}
\[
\begin{array}{c}
\mbox{Asynchronous phase}
\\[6pt]
\infer{\Gamma;\Gamma', P\wedge Q \vdash R}{
       \Gamma;\Gamma', P, Q \vdash R}
\quad
\infer{\Gamma;\Gamma', P\vee Q \vdash R}{
       \Gamma;\Gamma', P\vdash R &
       \Gamma;\Gamma', Q\vdash R}
\\[6pt]
\infer{
  \Gamma;\Gamma', \exists{}x. P x \vdash Q}{
  \Gamma;\Gamma', P x \vdash Q}
\\[6pt]
\infer{\Gamma;\Gamma',  s = t \vdash P}{
    \{ (\Gamma;\Gamma' \vdash P)\theta :
      \theta\in csu(s\unif t) \} }
\\[6pt]
\infer{\Gamma;\Gamma',\mu{}B\t \vdash P}{
       \Gamma, \mu{}B\t;\Gamma'\vdash P}
\quad
\infer{\Gamma;\Gamma',\mu{}B\t \vdash P}{
  S \in \H &
  \Gamma;\Gamma', S\t \vdash P &
  ;BS\x \vdash S\x
}
\\ ~
\\
\mbox{Synchronous phase}
\\[6pt]
\infer{\Gamma;\vdash A\wedge B}{\Gamma;\vdash A & \Gamma;\vdash B}
\quad
\infer{\Gamma;\vdash A_0\vee A_1}{\Gamma;\vdash A_i}
\quad
\infer{\Gamma;\vdash A\supset B}{\Gamma; A \vdash B}
\\[6pt]
\infer{\Gamma;\vdash t=t}{}
\quad
\infer{\Gamma;\vdash \exists{}x. P x}{\Gamma;\vdash P t}
\quad
\infer{\Gamma;\vdash \forall{}x. P x}{\Gamma;\vdash P x}
\\[6pt]
\infer{\Gamma, \mu{}B\t;\vdash \mu{}B\t}{}
\quad
\infer{\Gamma;\vdash \mu{}B\t}{\Gamma;\vdash B(\mu{}B)\t}
\\[6pt]
\infer{\Gamma;\vdash \nu{}B\t}{
  S \in \H &
  \Gamma;\vdash S\t &
  ;S\x \vdash BS\x
}
\end{array}
\]
\end{center}

\caption{Focused proof system for \muLJL} \label{fig:muLJL}
\end{figure}

\begin{proposition}[Soundness and completeness]
The focused proof system for \muLJL\ is sound and complete
with respect to \muLJL:
any focused \muLJL\ derivation of $\Gamma';\Gamma\vdash P$
can be transformed into a \muLJL\ derivation of $\,\Gamma',\Gamma\vdash P$;
any quasi-finite \muLJL\ derivation of $\,\Gamma\vdash P$
can be transformed into a \muLJL\ derivation of $\cdot\;;\Gamma\vdash P$.
\end{proposition}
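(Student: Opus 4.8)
The plan is to prove soundness by a direct erasure and completeness by chasing the square displayed just above Figure~\ref{fig:muLJL}: translate a \muLJL\ derivation into \mumall\ with Proposition~\ref{prop:01}, focus it there with the $\mu$-focusing completeness theorem, and read the resulting focused \mumall\ derivation back as a focused \muLJL\ one, exploiting that the encodings of $\H$ and $\G$ formulas have very constrained polarities.

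For soundness, given a focused \muLJL\ derivation of $\Gamma;\Gamma'\vdash P$ I would merge the two zones into one context $\Gamma,\Gamma'$ read as a set — so that the weakening and contraction of \muLJL\ are freely available — erase the freezing annotations, and replace each focused rule by its unannotated \muLJL\ counterpart. The freezing rules and the left-$\mu$ ``freeze'' rule become identities on the sequent and disappear; the focused axiom $\Gamma,\mu B\t;\vdash\mu B\t$ becomes the \muLJL\ identity axiom on $\mu B\t$ followed by a weakening of $\Gamma$; and the focused left-$\mu$-with-invariant and right-$\nu$ rules erase exactly to the \muLJL\ induction and coinduction rules. This direction is routine.

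For completeness, start from a quasi-finite \muLJL\ derivation of $\Gamma\vdash P$. Proposition~\ref{prop:01} turns it into a quasi-finite \mumall\ derivation of $\;\vdash[\Gamma]^\perp,[P]$ whose (co)invariants are all of the form $\lambda\x.~[S\x]$ with $S\in\H$, and the $\mu$-focusing completeness theorem then yields a focused \mumall\ derivation of $\;\vdash\Uparrow[\Gamma]^\perp,[P]$. The key observation is that, because $\H$ encodes into purely synchronous \mumall\ formulas — so $[\Gamma]^\perp$ is entirely asynchronous — and because $\G$ encodes into formulas interleaving synchronous connectives ($\wedge,\vee,\exists,=,\mu$) with asynchronous ones ($\supset,\forall,\nu$), any focused \mumall\ derivation of such an encoded sequent is, modulo the bookkeeping switching rules, the image under $[\cdot]$ of a focused \muLJL\ derivation of $\cdot\;;\Gamma\vdash P$. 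I would establish this by induction on the focused \mumall\ derivation, matching a \mumall\ asynchronous phase on $[\Gamma]^\perp$ (together with the asynchronous fragments of $[P]$) against the \muLJL\ asynchronous phase, recognising the \muLJL\ rules for $\supset$, $\forall$ and $\nu$ — which, as noted in the text, translate to \emph{asynchronous} \mumall\ rules with no subsequent refocusing choice — inside the \muLJL\ ``synchronous'' phase of Figure~\ref{fig:muLJL}, and matching a \mumall\ synchronous phase on a positive subformula of $[P]$ against a \muLJL\ synchronous phase; the left/right rules for $\mu$ and $\nu$ then correspond cleanly to the freeze/unfold and axiom/coinduction rules.

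The step I expect to be the main obstacle is making sure the restriction on (co)invariants from Proposition~\ref{prop:01} survives the \mumall\ focalization, since the balancing preprocessing (Lemma~\ref{lem:balance}) merges coinvariants with $\llplus$ and inserts unfolding coinvariants: one must check that $[S_0]^\perp\llplus\cdots\llplus[S_n]^\perp$ (\resp $[S_0]\llplus\cdots$) and the unfolding coinvariant of an encoded fixed point are again of the admissible encoded form. The clean way to secure this is to not use the \mumall\ focusing theorem as a black box for the merging, but to redo the balancing directly inside the freezing-annotated \muLJL, where two $\H$-(co)invariants of $B$ can be combined by intuitionistic $\vee$ or $\wedge$ — which keeps them in $\H$ — via the same functoriality construction as Lemma~\ref{lem:invwith}; the focalization-graph analysis (Lemmas~\ref{lem:mini}, \ref{lem:async}, \ref{lem:sync}) and the reachable-location measure then transport essentially verbatim, and the phase-matching above finishes the proof.
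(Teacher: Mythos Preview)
Your approach is essentially the paper's: erase annotations for soundness; for completeness, translate into \mumall\ via Proposition~\ref{prop:01}, apply the $\mu$-focused completeness theorem, and read the focused derivation back. The paper adds one device you should make explicit: before translating back it uses the freedom in ordering asynchronous rules to force all asynchronous work on the goal formula $[P]$ to come \emph{after} the asynchronous work on $[\Gamma']^\perp$, so that the \muLJL{} ``synchronous'' rules for $\supset$, $\forall$, $\nu$ --- whose side condition is an empty non-frozen left zone --- are matched cleanly. It also observes that the \mumall{} tensor splits the frozen context while the \muLJL{} right-$\wedge$ does not, and repairs this by admissible weakening on the frozen zone; you will need the same patch.

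Conversely, you go further than the paper on the (co)invariant issue. The paper simply asserts that ``all sequents of this derivation correspond to encodings of \muLJL\ sequents'' and translates each focused $\nu$ rule back to a \muLJL{} induction or coinduction without checking that balancing has kept the coinvariant of the form $[S]^\perp$ or $[S]$ with $S\in\H$. Your concern is legitimate --- $[S_0]^\perp\llplus[S_1]^\perp$ is not $[T]^\perp$ for any $T\in\H$ --- and your fix of performing balancing directly in \muLJL{} is the right repair: invariants for a left $\mu$ merge by intuitionistic $\wedge$ and coinvariants for a right $\nu$ by $\vee$, both staying in $\H$; and since focused \muLJL{} has no freezing rule on the goal side, the only unfolding (co)invariant ever needed is $B(\mu B)$ for a left $\mu B$ with $B\in\H$, which is again in $\H$. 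With that adjustment, the focalization-graph and reachable-location machinery does transport as you say.
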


\begin{proof}
The soundness part is trivial: unfocused \muLJL\ derivations can be
obtained from focused derivations by removing focusing annotations.
Completeness is established using the translation to linear logic
as outlined above.
Given a \muLJL\ derivation of $\Gamma\vdash P$,
we obtain a \mumall\ derivation of $[\Gamma]\vdash [P]$ using
Proposition~\ref{prop:01}.
This derivation inherits quasi-finiteness,
so we can obtain a $\mu$-focused \mumall\ derivation
of $\vdash \Uparrow [\Gamma]^\perp, [P]$.
All sequents of this derivation correspond to encodings of \muLJL\ sequents,
always containing a formula that corresponds to the right-hand side
of \muLJL\ sequents.
By permutability of asynchronous rules,
we can require that asynchronous rules are applied on right-hand side
formulas only after any other asynchronous rule in our $\mu$-focused derivation.
Finally, we translate that focused derivation into a focused
\muLJL\ derivation.
Let $\Gamma$ be a multiset of least fixed points in $\H$,
$\Gamma'$ be a multiset of $\H$ formulas,
and $P$ be a formula in $\G$.
\begin{enumerate}
\item
  If there is a $\mu$-focused derivation of
  $\vdash ([\Gamma]^\perp)^* \Uparrow [\Gamma']^\perp, [P]$
  or $\vdash ([\Gamma]^\perp)^*,  [P] \Uparrow [\Gamma']^\perp$
  then there is a focused \muLJL\ derivation of
  $\Gamma;\Gamma'\vdash P$.
\item
  If there is a $\mu$-focused derivation of
  $\vdash ([\Gamma]^\perp)^* \Downarrow [P]$
  then there is a focused \muLJL\ derivation of $\Gamma;\vdash P$.
\end{enumerate}
We proceed by a simultaneous induction on the $\mu$-focused derivation.
\begin{enumerate}
\item
  Since $[P]$ is the only formula that may be synchronous,
  the $\mu$-focused derivation can only start with two switching rules:
  either $[P]$ is moved to the left of the arrow, in which case
  we conclude by induction hypothesis (1),
  or $\Gamma'$ is empty and $[P]$ is focused on, in which case
  we conclude by induction hypothesis (2).

  If the $\mu$-focused derivation starts with a logical rule,
  we translate it into a \muLJL\ focused rule before concluding by
  induction hypothesis.
  For instance, the $\llwith$ or $\neq$ rule, which can only be
  applied to a formula in $[\Gamma']^\perp$, respectively correspond
  to a left disjunction or equality rule.
  Other asynchronous \mumall\ rules translate differently depending
  on whether they are applied on $[\Gamma]^\perp$ or $[P]$:
  $\llpar$ can correspond to left conjunction or right implication;
  $\nu$ to left $\mu$ (induction) or right $\nu$ (coinduction);
  $\forall$ to left $\exists$ or right $\forall$.
  Note that in the case where $[P]$ is principal, the constraint on
  the order of asynchronous rules means that $\Gamma$ is empty,
  which is required by synchronous \muLJL\ rule.
  Finally, freezing is translated by the \muLJL\ rule
  moving a least fixed point from $\Gamma'$ to $\Gamma$.

\item
  If the $\mu$-focused derivation starts with the switching rule releasing
  focus from $[P]$ we conclude by induction hypothesis (1).
  Otherwise it is straightforward to translate the first rule and
  conclude by induction hypothesis (2):
  $\lltens$, $\llplus$, ${=}$, $\exists$ and $\mu$
  respectively map to the right rules for
  $\wedge$, $\vee$, ${=}$, $\exists$ and $\mu$.

  Note, however, that the tensor rule splits frozen formulas
  in $([\Gamma]^\perp)^*$, while the right conjunction rule of \muLJL{}
  does not. This is harmless because weakening is obviously admissible
  for the frozen context of \muLJL\ focused derivations.
  This slight mismatch means that we would still have a complete
  focused proof system for \muLJL\ if we enforced a linear use of
  the frozen context. We chose to relax this constraint as it does not
  make a better system for proof search.
\end{enumerate}
\vspace{-0.5cm}\end{proof}

Although \muLJL\ is only a small fragment of \muLJ,
it catches many interesting and useful problems.  For example, 
any Horn-clause specification can be expressed in $\H$ as a least
fixed point, and theorems that state properties such as totality or
functionality of predicates defined in this manner are in $\G$.
Theorems that state more model-checking properties, of the form
$\forall x.~ P~x\supset Q~x$, are in $\G$ provided
that $P$ and $Q$ are in $\H$.
Further, implications can be chained through a greatest fixed point
construction, which allows to specify various relations on
process behaviors.
For example, provided that one-step transitions $u\ra v$ are specified
in $\H$, simulation is naturally expressed in $\G$ as follows:
\[ \nu S \lambda x \lambda y.~
     \forall x'.~ x \ra x' \supset \exists y'.~ y \ra y' \wedge S~x'~y' \]
Finally, the theorems about natural numbers presented in
Section~\ref{sec:mumall_examples} are also in $\G$.
Although a formula in $\G$ can \emph{a priori} be a theorem in \muLJ{}
but not in \muLJL,
we have shown~\cite{baelde09tableaux} that \muLJL{}
is complete for inclusions of non-deterministic finite automata |
${\cal A}\subseteq {\cal B}$ being expressed naturally as
$\forall w.~ [{\cal A}]w \supset [{\cal B}]w$.

Interestingly, the \muLJL\ fragment has already been identified in
LINC~\cite{tiu05eshol} and the Bedwyr system~\cite{baelde07cade}
implements a proof-search strategy for it that is complete for finite
behaviors, \ie\ proofs without (co)induction nor axiom rules,
where a fixed point has to be treated in a finite number of unfoldings.
This strategy coincides with the focused
system for \muLJL, where the finite behavior restriction corresponds
to dropping the freezing rule,
obtaining a system where proof search consists in
eagerly eliminating any left-hand side (asynchronous) formula
before working on the goal (right-hand side),
without ever performing any contraction or weakening.
The logic \muLJ\ is closely related to LINC, the main difference being
the generic quantifier $\nabla$, which allows to specify and reason about
systems involving variable binding,
such as the $\pi$-calculus~\cite{tiu05concur}.
But we have shown~\cite{baelde08lfmtp} that $\nabla$ can be added
in an orthogonal fashion in \muLJ\ (or \mumall)
without affecting focusing results.

\section{Conclusion}

We have defined \mumall, a minimal and well-structured proof system
featuring fixed points, and established the two main properties for
that logic.
The proof of cut elimination is the first contribution of this paper,
improving on earlier work and contributing to the understanding
of related works.
The second and main contribution is the study and design of focusing
for that logic. This challenging extension of focused proofs
forces us to reflect on the foundations of focusing,
and brought new proof search applications of focusing.
We have shown that \mumall\ is a good logic for the foundational
study of fixed points, but also a rich system that can directly
support interesting applications:
combining observations on admissible structural rules with our
$\mu$-focused system, we were able to derive a focused
proof system for an interesting fragment of \muLJ.

Although carried out in the simple logic \mumall,
this work on fixed points has proved meaningful in richer logics.
We have extended our focusing results to \muLL\ and \muLJ~\cite{baelde08phd},
naturally adapting the designs and proof techniques developed in this paper.
However, focused systems obtained by translating the target
logic into \mumall\ (or \muLL) are often not fully satisfying,
and better systems can be crafted and proved complete from scratch,
using the same techniques as for \mumall, with a stronger form of balancing
that imposes uniform asynchronous choices over all contractions of a formula.

Further work includes various projects relying on
\mumall\ and its extensions, from theory to implementation.
But we shall focus here on important open questions
that are of general interest concerning this formalism.
An obvious first goal would be to strengthen our weak normalization proof
into a strong normalization result.
The relationship between cut elimination and focusing also has to be
explored more; we conjectured that focusing preserves the identity
(cut elimination behavior) of proofs, and that the notion of quasi-finiteness
could be refined so as to be preserved by cut elimination.
Finally, it would be useful to be able to
characterize and control the complexity of normalization,
and consequently the expressiveness of the logic;
here, one could explore different classes of (co)invariants, or
other formulations of (co)induction.

\begin{acks}
This paper owes a lot to the anonymous reviewers of an earlier version,
and I thank them for that.
I also wish to thank Dale Miller with whom I started this work,
Olivier Laurent and Alexis Saurin for their insights on focusing,
and Pierre Clairambault, St\'ephane Gimenez, Colin Riba
and especially Alwen Tiu for
helpful discussions on normalization proofs.
\end{acks}

\bibliographystyle{acmtrans}

\begin{received}
Received October 2009;
revised July 2010;
accepted September 2010
\end{received}

\end{document}